\newcommand{\nc}{\newcommand}
\nc{\rnc}{\renewcommand}
\nc{\bra}[1]{\langle#1|}
\nc{\ket}[1]{|#1\rangle}
\nc{\ketbra}[1]{|#1\rangle\!\langle#1|}
\nc{\braket}[2]{\langle#1|#2\rangle}
\nc{\braandket}[3]{\langle #1|#2|#3\rangle}
\nc{\proj}[1]{| #1\rangle\!\langle #1 |}
\nc{\avg}[1]{\langle#1\rangle}
\newsavebox{\@brx}
\newcommand{\llangle}[1][]{\savebox{\@brx}{\(\m@th{#1\langle}\)}%
  \mathopen{\copy\@brx\kern-0.5\wd\@brx\usebox{\@brx}}}
\newcommand{\rrangle}[1][]{\savebox{\@brx}{\(\m@th{#1\rangle}\)}%
  \mathclose{\copy\@brx\kern-0.5\wd\@brx\usebox{\@brx}}}
\nc{\rank}{\operatorname{Rank}}
\nc{\id}{{\operatorname{id}}}
\nc{\supp}{{\operatorname{supp}}}
\nc{\smfrac}[2]{\mbox{$\frac{#1}{#2}$}}
\nc{\tr}{\operatorname{Tr}}
\nc{\ox}{\otimes}
\nc{\floor}[1]{\lfloor #1 \rfloor}
\nc{\trans}{\mathsf T}
\nc{\img}{\mathbf{i}}
\newcommand{\norm}[2]{{\left\lVert #1 \right\rVert}_{#2}}
\nc{\cA}{{\cal A}}
\nc{\cB}{{\cal B}}
\nc{\cC}{{\cal C}}
\nc{\cD}{{\cal D}}
\nc{\cE}{{\cal E}}
\nc{\cF}{{\cal F}}
\nc{\cG}{{\cal G}}
\nc{\cH}{{\cal H}}
\nc{\cI}{{\cal I}}
\nc{\cJ}{{\cal J}}
\nc{\cK}{{\cal K}}
\nc{\cL}{{\cal L}}
\nc{\cM}{{\cal M}}
\nc{\cN}{{\cal N}}
\nc{\cO}{{\cal O}}
\nc{\cP}{{\cal P}}
\nc{\cQ}{{\cal Q}}
\nc{\cR}{{\cal R}}
\nc{\cS}{{\cal S}}
\nc{\cT}{{\cal T}}
\nc{\cV}{{\cal V}}
\nc{\cU}{{\cal U}}
\nc{\cX}{{\cal X}}
\nc{\cY}{{\cal Y}}
\nc{\cZ}{{\cal Z}}
\nc{\cW}{{\cal W}}
\nc{\RR}{{{\mathbb R}}}
\nc{\CC}{{{\mathbb C}}}
\nc{\FF}{{{\mathbb F}}}
\nc{\NN}{{{\mathbb N}}}
\nc{\ZZ}{{{\mathbb Z}}}
\nc{\QQ}{{{\mathbb Q}}}
\nc{\UU}{{{\mathbb U}}}
\nc{\EE}{{{\mathbb E}}}
\nc{\fF}{{\mathfrak{F}}}
\nc{\fL}{{\mathfrak{L}}}
\nc{\sK}{{{\mathscr{K}}}}
\nc{\sS}{{{\mathscr{S}}}}
\nc{\sT}{{{\mathscr{T}}}}
\nc{\sA}{{{\mathscr{A}}}}
\nc{\sB}{{{\mathscr{B}}}}
\nc{\sC}{{{\mathscr{C}}}}
\nc{\sE}{{{\mathscr{E}}}}
\nc{\sL}{{{\mathscr{L}}}}
\nc{\sH}{{{\mathscr{H}}}}
\nc{\sG}{{{\mathscr{G}}}}
\nc{\sF}{{{\mathscr{F}}}}
\nc{\sP}{{{\mathscr{P}}}}
\nc{\sI}{{{\mathscr{I}}}}
\nc{\sN}{{{\mathscr{N}}}}
\nc{\sM}{{{\mathscr{M}}}}
\nc{\sX}{{{\mathscr{X}}}}
\nc{\sY}{{{\mathscr{Y}}}}
\nc{\sO}{{{\mathscr{O}}}}
\nc{\sR}{{{\mathscr{R}}}}
\newcommand{\bA}{\mathbb{A}}
\newcommand{\bE}{\mathbb{E}}
\newcommand{\bC}{\mathbb{C}}
\newcommand{\bF}{\mathbb{F}}
\nc{\Choi}{Choi-Jamio\l{}kowski }
\nc{\reg}{\infty}
\nc{\amo}{\text{\rm amo}}
\nc{\Renyi}{R\'{e}nyi }
\nc{\conv}{\operatorname{conv}}
\nc{\cvxset}{\mathscr{C}}
\nc{\aff}{\operatorname{aff}}
\nc{\cone}{\operatorname{cone}}
\nc{\diam}{\operatorname{diam}}
\nc{\RM}{{{\mathscr{R}}}}
\nc{\END}{\operatorname{End}}
\nc{\PERM}{\mathfrak{\sigma}}
\nc{\Cone}{\text{\rm Cone}}
\nc{\sep}{{\SEP}}
\nc{\DD}{{{\mathbb D}}}
\nc{\BS}{{\scriptscriptstyle \rm {BS}}}
\nc{\Sand}{{\scriptscriptstyle  \rm S}}
\nc{\Petz}{{\scriptscriptstyle  \rm P}}
\nc{\Hypo}{{\scriptscriptstyle  \rm H}}
\nc{\Meas}{{\scriptscriptstyle \rm M}}
\nc{\Proj}{{{\scriptscriptstyle \rm P}}}
\nc{\suchthat}{\text{\rm s.t.}}
\nc{\pl}{{\scalebox{0.7}{+}}}
\nc{\HERM}{\mathscr{H}}
\nc{\PSD}{\HERM_{\pl}}
\nc{\PD}{\HERM_{\pl\pl}}
\nc{\density}{\mathscr{D}}
\nc{\subdensity}{\mathscr{D}_\bullet}
\nc{\polarPSD}[1]{{#1}_{\pl}^{\circ}}
\nc{\polarPSDre}[1]{{#1}_{\pl}^{\star}}
\nc{\polarPD}[1]{{#1}_{\pl\pl}^{\circ}}
\nc{\distill}{{\operatorname{Distill}}}
\nc{\dilute}{{\operatorname{Dilute}}}
\nc{\PPT}{\text{\rm PPT}}
\nc{\Rains}{\text{\rm Rains}}
\nc{\WD}{\text{\rm WD}}
\nc{\SEP}{\text{\rm SEP}}
\nc{\PSEP}{\text{\rm PSEP}}
\nc{\CPTP}{\text{\rm CPTP}}
\nc{\GPO}{\text{\rm GPO}}
\nc{\GPL}{\text{\rm GPL}}
\nc{\TO}{\text{\rm TO}}
\nc{\GPC}{\text{\rm GPC}}
\nc{\POVM}{\text{\rm POVM}}
\nc{\PVM}{\text{\rm PVM}}
\nc{\CP}{\text{\rm CP}}
\nc{\adv}{\text{\rm adv}}
\nc{\spec}{\text{\rm spec}}
\nc{\poly}{\text{\rm poly}}
\nc{\End}{\operatorname{End}}
\nc{\Par}{\operatorname{Par}}
\nc{\RNG}{\operatorname{RNG}}
\nc{\STAB}{\text{\rm STAB}}
\nc{\epi}{\boldsymbol{\operatorname{epi}}}
\nc{\op}{\boldsymbol{\operatorname{op}}}
\newcommand{\Var}{\mathbb{V}}
\newcommand*\rel@kern[1]{\kern#1\dimexpr\macc@kerna}
\newcommand*\widebar[1]{%
  \begingroup
  \def\mathaccent##1##2{%
    \rel@kern{0.8}%
    \overline{\rel@kern{-0.8}\macc@nucleus\rel@kern{0.2}}%
    \rel@kern{-0.2}%
  }%
  \macc@depth\@ne
  \let\math@bgroup\@empty \let\math@egroup\macc@set@skewchar
  \mathsurround\z@ \frozen@everymath{\mathgroup\macc@group\relax}%
  \macc@set@skewchar\relax
  \let\mathaccentV\macc@nested@a
  \macc@nested@a\relax111{#1}%
  \endgroup
}
\definecolor{googleblue}{HTML}{4285F4}
\definecolor{googlered}{HTML}{DB4437}
\definecolor{googleyellow}{HTML}{F4B400}
\definecolor{googlegreen}{HTML}{0F9D58}
\definecolor{klevinblue}{HTML}{002FA7}
\definecolor{tiffanyblue}{HTML}{0ABAB5}
\begin{document}

\title{\Large \textbf{Optimal Distributed Similarity Estimation of Quantum Channels}}

\author[1,2,3]{Congcong Zheng}
\author[4]{Kun Wang\thanks{Corresponding author: \href{mailto:nju.wangkun@gmail.com}{nju.wangkun@gmail.com}}}
\author[1,2,3]{Xutao Yu}
\author[4]{Ping Xu}
\author[5,2,3]{Zaichen Zhang\thanks{Corresponding author: \href{mailto:zczhang@seu.edu.cn}{zczhang@seu.edu.cn}}}

\affil[1]{\small State Key Lab of Millimeter Waves, Southeast University, Nanjing 211189, China}
\affil[2]{\small Purple Mountain Laboratories, Nanjing 211111, China}
\affil[3]{\small Frontiers Science Center for Mobile Information Communication and Security,\protect\\ Southeast University, Nanjing 210096, China}
\affil[4]{\small College of Computer Science and Technology, National University of Defense Technology, Changsha 410073, China}
\affil[5]{\small National Mobile Communications Research Laboratory, Southeast University, Nanjing 210096, China}

\date{\today}
\maketitle

\begin{abstract}
As quantum processors are deployed across different hardware platforms and remote cloud laboratories, a basic physical question is whether two black-box devices realize the same quantum process, without relying on a trusted classical description.
We formulate the core primitive for this comparison task as \emph{distributed similarity estimation of quantum channels} (DSEC): given local access to two unknown channels, estimate the normalized inner product of their Choi states. 
We prove that the optimal query complexity of DSEC is $\Theta(\max\{\sqrt{d}/\varepsilon,1/\varepsilon^2\})$, where $d$ is the channel dimension and $\varepsilon$ is the additive error.
This matching query complexity is nontrivial: channel learning permits input choices and interleaving known operations, which makes channel learning strictly harder than state learning.
We first prove an information-theoretic lower bound with this scaling, 
which holds even in the \emph{strongest setting}, allowing adaptive strategies, multiple rounds of classical communication, and coherent access with arbitrary ancillas.
We then give a matching upper bound in the \emph{weakest setting}, namely non-adaptive and
ancilla-free incoherent access, via a randomized measurement algorithm achieving this bound.
Finally, we show that our algorithm achieves a quadratic improvement over classical shadow baselines.
Our results provide theoretically optimal and practical algorithms 
for quantum device benchmarking and distributed quantum learning.
\end{abstract}

\tableofcontents

\section{Introduction}
\label{sec:introduction}

Quantum information processing is now advancing toward increasingly heterogeneous quantum systems.
Superconducting circuits, trapped ions, neutral atoms, and photonic processors now offer complementary routes to programmable quantum dynamics, often accessed through remote or cloud-based interfaces~\cite{popkin2016quest, preskill2018quantum, brown20245}.
This diversity raises a basic physical question: 
\emph{if two devices, possibly located in different laboratories or built from different hardware, are programmed to implement the same quantum process, how can one test whether they actually behave in the same way?}

Many successful certification protocols compare an experimental state or operation with a trusted theoretical target.
Examples include direct fidelity estimation~\cite{flammia2011direct, dasilva2011practical}, randomized benchmarking~\cite{emerson2007symmetrized, lu2015experimental, helsen2022general}, and quantum verification~\cite{pallister2018optimala, wang2019optimala, zheng2024efficient, chen2025quantum}.
In cross-platform settings, however, such a target description may be unavailable or unreliable.
For instance, two remote superconducting and trapped-ion processors may both be asked to implement a large circuit whose ideal action is too costly to simulate classically.
Full tomography would also be prohibitively expensive~\cite{eisert2020quantuma, anshu2024survey}.
One then wants a direct comparison protocol: using only experiments on the two black boxes and classical communication between the laboratories, decide whether the two implemented quantum processes are close.

For quantum states, this comparison problem is captured by \emph{distributed inner product estimation} (DIPE).
Here the goal is to estimate the overlap between two unknown states produced at different sites, and a variety of efficient algorithms are now known~\cite{elben2020crossplatform, knorzer2023crossplatform, anshu2022distributed, qian2024multimodal, zheng2025distributed, knorzer2025distributed, hinsche2025efficient, wu2025state, gong2024sample, arunachalam2024distributed, dalton2025resourceefficient}.
For $d$-dimensional states, the optimal sample complexity is $\Theta(\max\{\sqrt{d}/\varepsilon, 1/\varepsilon^2\})$ for additive error $\varepsilon$~\cite{anshu2022distributed}.

The present work asks the analogous question for quantum channels.
A channel describes a physical transformation: a gate, a noisy circuit block, or more generally the input-output behavior of a quantum device.
Comparing channels is therefore comparing processes, not merely comparing the states produced by those processes on one chosen input.
This distinction matters because a learner can choose different input states, query the unknown channel several times, and in the most powerful model interleave those queries with other known quantum operations.
These possibilities make the channel problem physically richer than state comparison and prevent it from being treated as a routine state-overlap estimation problem.

\begin{figure}[!htbp]
\centering
\includegraphics[width=0.7\linewidth]{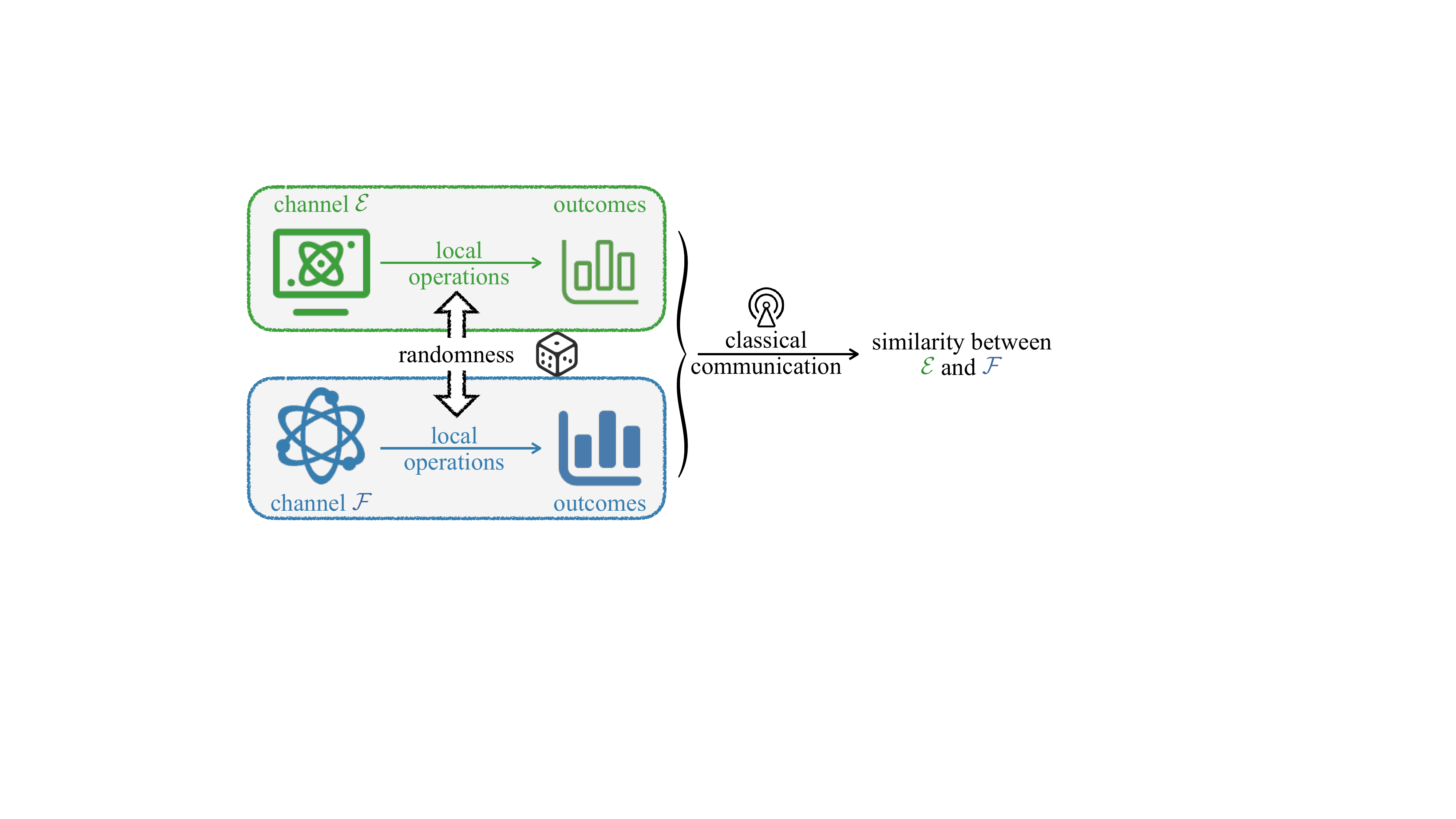}
\caption{\raggedright
Distributed Similarity Estimation of Quantum Channels (DSEC). 
The task is to estimate how similar two unknown quantum channels $\cE$ and $\cF$ are when they are implemented on different devices, using only local experiments on each device and classical communication between the two laboratories.}
\label{fig: task description}
\end{figure}

\paragraph*{Problem Setup.}
We formalize this task as \emph{distributed similarity estimation of quantum channels} (DSEC), illustrated in Fig.~\ref{fig: task description}.
Let $\cH$ be the Hilbert space of a $d$-dimensional system.
Let $\cE$ and $\cF$ be two unknown channels implemented on different (possibly distant) quantum devices. 
The quantity we estimate is a normalized overlap between these two processes.
It is defined using the Choi representation, which turns a channel into an operator that encodes its complete input-output action.
Specifically, define the Choi operator of $\cE$ as $J_\cE:=(\cE\ox\cI)(\ketbra{\Phi})$, where $\cI$ is the identity channel and $\ket{\Phi}:=\sum\ket{ii}$ is the unnormalized maximally entangled state in $\cH^{\ox 2}$~\cite{jamiolkowski1972linear}, and similarly for $J_\cF$.
The DSEC task is to estimate $\tr[J_\cE J_\cF]/d^2$ to additive accuracy $\varepsilon$ with constant success probability, using only local operations and classical communication (LOCC) between the two devices.
This setting captures the restriction that the two laboratories cannot jointly manipulate the quantum systems, but can coordinate and exchange classical data.

DSEC is motivated by applications such as circuit equivalence checking~\cite{sun2022equivalence, wang2022equivalence, tang2024experimental} and quantum channel benchmarking~\cite{eisert2020quantuma, fawzi2023quantum, kliesch2021theory, hu2025quantum}, and has attracted attention in several recent works~\cite{zheng2024crossplatform, angrisani2025learninga, dalton2025resourceefficient}.
The Choi inner product can be used to quantify the similarity between $\cE$ and $\cF$ through the max fidelity~\cite{zheng2024crossplatform, liang2019quantuma}:
\begin{align}
\frac{\tr[J_\cE J_\cF]}{\max\{\tr[J_\cE^2], \tr[J_\cF^2]\}}, 
\end{align}
where $\tr[J_\cE^2]$, $\tr[J_\cF^2]$ describe the unitarity of $\cE$, $\cF$. 
Unitarity estimation has been considered in~\cite{chen2023unitarity}, and our algorithm can also serve as a new tool for it.
The fundamental resource questions for DSEC are:
\begin{enumerate}
\item \emph{How many times must each device be queried?}
\item \emph{Can this number be achieved by experimentally simple protocols?}
\end{enumerate}

\paragraph*{Our Results.}
We completely settle these questions by proving matching lower and upper bounds.
For two unknown $d$-dimensional channels and additive accuracy $\varepsilon$, the optimal query complexity per device is
\begin{align}
\Theta\!\left(\max\left\{\frac{\sqrt{d}}{\varepsilon},\frac{1}{\varepsilon^2}\right\}\right).
\end{align}
For an $n$-qubit channel, $d=2^n$, so the dimension-dependent part scales as $\sqrt{2^n}$ rather than $2^n$.
This is the same square-root-in-dimension behavior that appears in optimal distributed state comparison, but here it is achieved for the more general problem of comparing unknown transformations.

The result has two complementary parts.
First, we prove an information-theoretic lower bound showing that any DSEC algorithm requires at least $\Omega(\max\{\sqrt{d}/\varepsilon, 1/\varepsilon^2\})$ queries.
This lower bound holds in the strongest model we consider: the laboratories may use multi-round LOCC, arbitrarily large ancillary systems, coherent access with quantum memory, and adaptive strategies.
The dimension-dependent term follows from a shared-versus-independent Haar random unitary testing problem.
Even a coherent tester cannot reliably decide whether two devices implement the same Haar random unitary or two independently sampled Haar random unitaries using $o(\sqrt{d})$ queries.

Second, we show that the lower bound is achievable by simple randomized-measurement protocols.
For unitary channels, we give a coherent algorithm with query complexity $\cO(\max\{\sqrt{d}/\varepsilon, 1/\varepsilon^2\})$.
More importantly, for general channels we give an algorithm that is non-adaptive, ancilla-free, and uses only incoherent access: in each shot, each laboratory prepares a local state, applies its unknown channel once, and immediately measures the output.
Despite this restricted access model, the algorithm still achieves the optimal query complexity $\cO(\max\{\sqrt{d}/\varepsilon, 1/\varepsilon^2\})$.

The key resource enabling this performance is \emph{shared randomness}.
Here shared randomness means a classical synchronization resource: before the experiment, the two laboratories agree on common random seeds so that corresponding shots use coordinated state-preparation and measurement settings.
It does not require shared entanglement or joint quantum control.
To isolate its effect, we compare our algorithm with independent channel classical shadows~\cite{kunjummen2023shadow, levy2024classical, li2025nearly}, where the two devices sample their state-preparation and measurement (SPAM) settings independently.
Such baselines require $\cO(\max\{d/\varepsilon,1/\varepsilon^2\})$ queries.
Thus, shared randomness gives a quadratic improvement in the dimension-dependent term.
For DSEC, the essential resource is therefore not coherent quantum control across repeated channel queries, but the ability of distant laboratories to coordinate which SPAM settings they use.
The obtained results are summarized in Table~\ref{tab:summarization}.

\begin{table*}[t]
\centering
\renewcommand{\arraystretch}{1.5} 
\setlength{\tabcolsep}{5pt} 
\setlength\heavyrulewidth{0.3ex}  
\begin{tabular}{@{}ccccc@{}}
\toprule
\textbf{Access Model} & \textbf{Lower Bound} & \textbf{Algorithm} & \textbf{Query Complexity} & \textbf{Applicable Channel} \\ \midrule
\multirow{2}{*}{\begin{tabular}[c]{@{}c@{}}Coherent \end{tabular}} & \multirow{4}{*}{\begin{tabular}[c]{@{}c@{}} $\Omega(\sqrt{d})$ \\(Theorem~\ref{the:lower bound for DSEC}) \end{tabular}} & Independent CS & $\cO(d)$ (Proposition~\ref{the:independent classical shadow upper bound}) & Unitary \\
 &  & RM with SR & $\cO(\sqrt{d})$ (Theorem~\ref{the:coherent upper bound}) & Unitary \\
\multirow{2}{*}{\begin{tabular}[c]{@{}c@{}}Incoherent \end{tabular}} &  & Independent CS & $\cO(d)$ (Proposition~\ref{the:independent classical shadow upper bound}) & Unital \\
 &  & RM with SR & $\cO(\sqrt{d})$ (Theorem~\ref{the:incoherent upper bound}) & General \\ \bottomrule
\end{tabular}
\caption{Summary of dimension-dependent query-complexity results for distributed similarity estimation of quantum channels (DSEC). Here $d$ is the channel dimension, RM denotes randomized measurements, SR denotes shared randomness, and CS denotes classical shadows.}
\label{tab:summarization}
\end{table*}

\paragraph*{Related Works.}
We briefly place the result in the context of three nearby approaches and explain why each leaves the DSEC problem unresolved.
\begin{enumerate}
\item \textbf{\emph{Distributed Inner Product Estimation.}}
DIPE provides the state-comparison analogue of our task.
However, applying DIPE directly to Choi states would not give the present result.
Preparing Choi states requires maximally entangled inputs, and the Choi-state dimension is $d^2$, leading to an $O(d)$ dimension-dependent cost.
Moreover, channel learning allows strategies unavailable in state learning, including chosen inputs and interleaved queries with intermediate channels.
We therefore work directly in the channel-learning model and show that the optimal query complexity is nevertheless $\Theta(\max\{\sqrt{d}/\varepsilon,1/\varepsilon^2\})$.

\item \textbf{\emph{Channel Unitarity Estimation.}}
Channel unitarity estimation studies the self-overlap of a single unknown channel.
Chen \emph{et al.}~\cite{chen2023unitarity} show the query complexity is $\Theta(1/\varepsilon^{2})$ with coherent access and $\Theta(\sqrt{d})$ with incoherent access.
Unitarity is a self-comparison problem, while DSEC estimates the cross-overlap $\tr[J_\cE J_\cF]/d^2$ of two remote unknown channels.
The main technical difference lies in the coherent lower bound.
Our coherent lower bound proves a dimension-dependent distributed hardness: even with coherent access, distinguishing a shared Haar random unitary from two independent Haar random unitaries requires $\Omega(\sqrt{d})$ queries.

\item \textbf{\emph{Channel Tomography and Classical Shadow.}}
One could also try to reconstruct each channel and compare the reconstructions, but this is far more expensive than the similarity-estimation task itself.
Channel tomography requires $\cO(d^4)$ queries for general channels and $\cO(d^2)$ for unitary channels~\cite{mele2025optimal, chen2025quantumc, chen2026quantum}.
Channel classical shadows are much more efficient for many observables~\cite{kunjummen2023shadow, levy2024classical, li2025nearly}, but independent shadows do not exploit the cross-platform coordination that DSEC allows.
As shown in Section~\ref{sec:comparison_classical_shadow}, independent SPAM choices lead to a dimension-dependent cost $\cO(d/\varepsilon)$, whereas shared randomness reduces this to $\cO(\sqrt d/\varepsilon)$.
\end{enumerate}

The rest of the paper is organized as follows. Section~\ref{sec:preliminaries} fixes notation and reviews the learning models. Section~\ref{sec:lower_bound} proves the lower bound under both coherent and incoherent access. Section~\ref{sec:upper_bound} gives matching algorithms. Section~\ref{sec:comparison_classical_shadow} compares these algorithms with independent channel classical shadows. The remaining proofs are collected in the appendices.

\section{Preliminaries}
\label{sec:preliminaries}

\subsection{Notation}
Let $\cH$ be the Hilbert space of a $d$-dimensional system.
The set of Hermitian operators on $\cH$ is denoted by $\cB(\cH)$, and the set of density matrices on $\cH$ is denoted by $\cD(\cH)$. 
For a quantum channel $\cE$, we define its Choi operator as~\cite{jamiolkowski1972linear} 
\begin{align}
J_\cE := (\cE\ox \cI)(\ketbra{\Phi}), \label{eq:choi state}
\end{align}
where $\ket{\Phi} := \sum_i \ket{ii}$ is the unnormalized maximally entangled state. 
The Choi operator provides the useful identity
\begin{align}
\tr\left[O\cE(\rho)\right] = \tr\left[\left(O\ox \rho^T\right)J_\cE\right]. 
\end{align}
A quantum channel $\cE$ can equivalently be represented by a set of Kraus operators $\{E_i\}_{i}$~\cite{nielsen2010quantum}, 
\begin{align}
\cE(\rho) = \sum_{i} E_i \rho E_i^\dagger.
\end{align}
For two quantum channels $\cE$ and $\cF$, with Kraus operators $\{E_i\}_i$ and $\{F_j\}_j$, the inner product of their Choi operators can be represented as
\begin{align}
\tr\left[J_\cE J_\cF\right] 
= \sum_{i,j} \left| \tr\left[E_i^\dagger F_j\right] \right|^2.
\label{eq:choi and kraus}
\end{align}
For any operator $A\in\cB(\cH)$, we use the notation $|A\rrangle$ to represent the corresponding vectorized operator.
For example, 
\begin{align}
|\ket{\psi}\!\bra{\phi} \rrangle := \ket{\psi}\ox\ket{\phi^*}, \quad 
|ABC^\dagger\rrangle = A\ox C^* | B\rrangle. 
\end{align}
The inner product between two vectorized operators is defined as $\llangle A|B\rrangle = \tr[A^\dagger B]$. 
For operators $A$ and $B$, we write $A \succeq B$ if $A - B$ is positive semidefinite.

\subsection{Channel Learning Models}

We use two axes to specify the learning model: \emph{channel learning access} and \emph{shared randomness}.

\paragraph*{Learning Access.}
Following the classifications in~\cite{huang2022quantum, chen2022exponential, chen2023unitarity, caro2024learning, li2025nearly, jeon2025query}, we consider two types of learning access, illustrated in Fig.~\ref{fig: learning models}(a). 
(i) \emph{Incoherent access}, also known as learning without quantum memory:
In each query, the learner prepares an arbitrary quantum state, applies the unknown channel, and immediately measures the output state.
The key restriction is that the unknown channel is queried exactly once before each measurement.
(ii) \emph{Coherent access}, also known as learning with quantum memory:
In each experimental round, the learner prepares an arbitrary quantum state, may interleave multiple queries of the unknown channel with arbitrary intermediate quantum channels, 
and then applies an arbitrary measurement on the final output state.
This allows quantum information to be preserved across multiple channel queries.

\paragraph*{Shared Randomness.}
Beyond learning access, shared randomness plays a crucial role in distributed learning with randomized measurements~\cite{elben2023randomized}. 
In many learning algorithms, SPAM settings are essential components.
Within the framework of randomized measurements, both initial states and measurement settings are sampled randomly to enhance practicality and sample efficiency. 
In the distributed scenario, allowing devices to share part or all of this randomness can significantly improve performance, as demonstrated in DIPE~\cite{anshu2022distributed}. 
We consider three levels of shared randomness, illustrated in Fig.~\ref{fig: learning models}(b):
(i) shared randomness for both SPAM settings, 
(ii) shared randomness only for state preparation, and 
(iii) no shared randomness between devices.
Physically, shared randomness should be viewed as a classical synchronization resource: before the experiment, the two laboratories agree on the same random seed, so that corresponding experimental shots use correlated input states or measurement bases.
A key finding of this work is that shared randomness provides substantial advantages in completing DSEC.

\begin{figure}[t]
\centering
\includegraphics[width=0.6\linewidth]{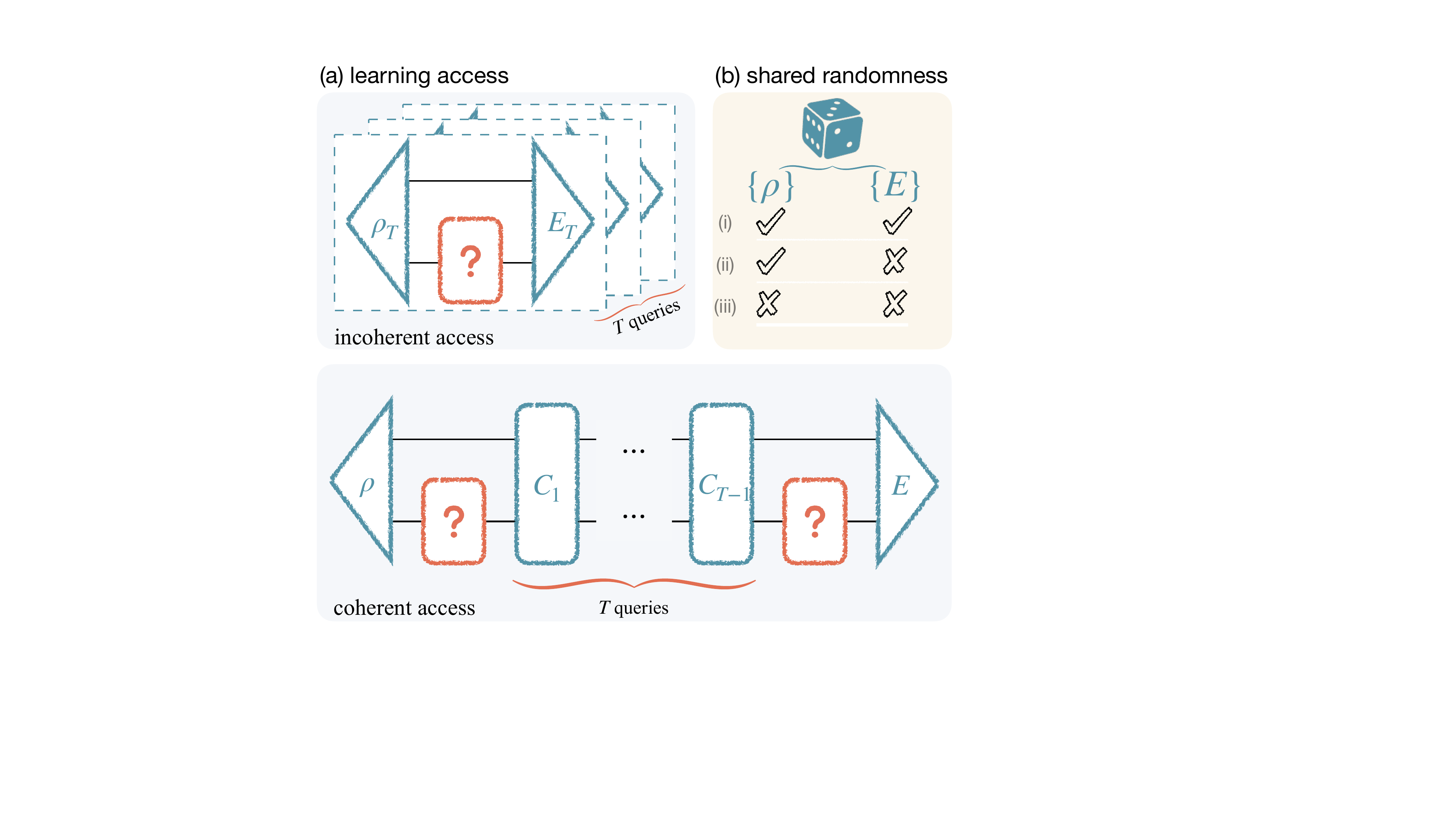}
\caption{\raggedright
Learning models for distributed channel learning.
(a) Learning access:
(i) Incoherent access: the unknown channel is queried once before each measurement.
(ii) Coherent access: the unknown channel is queried $T$ times before each measurement, with arbitrary intermediate quantum channels $\{\cC_t\}$.
(b) Shared randomness: 
(i) for both state preparation and measurement (SPAM) settings.
(ii) only for state preparation.
(iii) no shared randomness.}
\label{fig: learning models}
\end{figure}

\subsection{Haar Measure and Permutation Operators}

Let $\mu_H$ be the Haar measure over the unitary group.
The Haar random channel is defined as 
\begin{align}
\cE_H^{(k)}(A) := \bE_{U\sim\mu_H} U^{\ox k} A U^{\dagger \ox k}, \quad 
A \in \cB(\cH^{\ox k}). 
\end{align}
To describe the Haar random channel, we introduce permutation operators.
Let $\cS_k$ be the symmetric group on $k$ elements.
For $\pi\in\cS_k$, the corresponding permutation operator $P_\pi$ acts on product states as
\begin{align}
P_\pi \ket{\psi_1}\ox \cdots\ox \ket{\psi_k} 
= \ket{\psi_{\pi^{-1}(1)}} \ox \cdots \ox \ket{\psi_{\pi^{-1}(k)}}, 
\quad \forall\;\psi_1,\cdots\psi_k\in\cD(\cH).
\end{align}
With the above definition, we have the following lemma to describe the Haar random channel.

\begin{lemma}[Weingarten Calculus~\cite{mele2024introduction, schuster2025random}]
Let $\cH$ be a Hilbert space with dimension $d$, $\cS_k$ be the symmetric group, and $A\in\cB(\cH^{\ox k})$. Then, we have 
\begin{align}
\cE_H^{(k)} (A) 
= \sum_{\sigma, \tau\in \cS_k} {\rm Wg}_{\sigma, \tau}(d) \tr[AP^\dagger_\sigma] P_\tau, \quad A \in\cB(\cH^{\ox k}). 
\end{align}
Here, ${\rm Wg}_{\sigma, \tau}(d)$ are the elements of the $k!\times k!$ Weingarten matrix, which is defined as the inverse of the Gram matrix, $G_{\sigma, \tau}(d) = \tr(P_\sigma P_\tau^\dagger)$, given by the inner products of the permutation operators.
Consequently, the Choi operator of Haar random channel can be represented as 
\begin{align}
J_H^{(k)} 
= \sum_{\sigma, \tau \in\cS_k} {\rm Wg}_{\sigma, \tau}(d) P_\tau \ox P_\sigma. 
\label{eq:choi state of haar random channel}
\end{align}
\end{lemma}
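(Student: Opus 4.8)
The plan is to exploit the representation-theoretic structure of the $k$-fold twirl $\cE_H^{(k)}$ and reduce the identity to a finite linear-algebra computation. First I would note that for any $A\in\cB(\cH^{\ox k})$ and any fixed $V$, invariance of the Haar measure gives $V^{\ox k}\,\cE_H^{(k)}(A)\,V^{\dagger\ox k}=\cE_H^{(k)}(A)$, so the output always commutes with the diagonal action $U\mapsto U^{\ox k}$. By Schur--Weyl duality this commutant is exactly $\operatorname{span}\{P_\sigma:\sigma\in\cS_k\}$, so there are scalars $c_\tau(A)$, linear in $A$, with $\cE_H^{(k)}(A)=\sum_\tau c_\tau(A)\,P_\tau$.

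Next I would pin down these coefficients using two elementary properties of the twirl. It is self-adjoint with respect to the Hilbert--Schmidt inner product $\langle X,Y\rangle=\tr[X^\dagger Y]$, since replacing $U$ by $U^\dagger$ leaves the Haar average unchanged; and it fixes every permutation operator, $\cE_H^{(k)}(P_\sigma)=P_\sigma$, because $P_\sigma$ already commutes with $U^{\ox k}$. (Together these say $\cE_H^{(k)}$ is the orthogonal projection onto the commutant.) Pairing $\cE_H^{(k)}(A)$ with $P_\sigma$ and using self-adjointness gives $\tr[P_\sigma^\dagger A]=\langle P_\sigma,\cE_H^{(k)}(A)\rangle=\sum_\tau c_\tau(A)\,\tr[P_\sigma^\dagger P_\tau]=\sum_\tau G_{\sigma\tau}(d)\,c_\tau(A)$, where $G_{\sigma\tau}(d)=\tr[P_\sigma P_\tau^\dagger]$ is the Gram matrix. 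Inverting $G$ and using the cyclicity of the trace yields $c_\tau(A)=\sum_\sigma {\rm Wg}_{\tau\sigma}(d)\,\tr[A P_\sigma^\dagger]$; substituting back and invoking the symmetry ${\rm Wg}_{\tau\sigma}={\rm Wg}_{\sigma\tau}$ (the inverse of a symmetric matrix) reproduces the stated formula for $\cE_H^{(k)}(A)$.

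Finally, for the Choi-operator corollary I would feed the main identity through the Choi construction. Writing $\Phi=\sum_{a,b}(\ket{a}\!\bra{b})\ox(\ket{a}\!\bra{b})$ in the computational basis of $\cH^{\ox k}$ and applying $\cE_H^{(k)}$ to the first tensor factor gives $J_H^{(k)}=\sum_{a,b}\cE_H^{(k)}(\ket{a}\!\bra{b})\ox(\ket{a}\!\bra{b})$. Inserting the formula leaves the single nontrivial sum $\sum_{a,b}\bra{b}P_\sigma^\dagger\ket{a}\,\ket{a}\!\bra{b}=(P_\sigma^\dagger)^T=P_\sigma$, where the last equality uses that permutation operators are real and orthogonal, so $P_\sigma^T=P_\sigma^\dagger$. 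Collecting terms collapses everything to $J_H^{(k)}=\sum_{\sigma,\tau}{\rm Wg}_{\sigma\tau}(d)\,P_\tau\ox P_\sigma$, as claimed.

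I expect the main obstacle to be the invocation of Schur--Weyl duality, namely the structural input that the commutant of $U^{\ox k}$ is precisely the span of the $P_\sigma$; this is the one genuinely representation-theoretic ingredient and I would cite it rather than reprove it. A secondary technical point is the invertibility of $G$, which holds only when the $P_\sigma$ are linearly independent, i.e.\ $d\ge k$; in the degenerate regime $d<k$ one must read ${\rm Wg}(d)$ as a Moore--Penrose pseudo-inverse restricted to the range of $G$. Since the applications in this paper keep $k$ fixed while $d=2^n$ is large, the nondegenerate case applies and no pseudo-inverse subtlety arises; everything after the duality step (self-adjointness, the Gram-matrix inversion, and the Choi-state manipulation) is routine bookkeeping.
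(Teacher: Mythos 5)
The paper does not actually prove this lemma---it is imported verbatim from the cited references---and your blind derivation is correct and is essentially the standard one given there: identify $\cE_H^{(k)}$ as the orthogonal projection onto the commutant of $U^{\ox k}$ (Schur--Weyl), recover the coefficients by inverting the Gram matrix of the permutation operators, and push the formula through the Choi construction. Your added caveat that $G$ is invertible only for $d\ge k$ (with a pseudo-inverse reading otherwise) is accurate and harmless here since the paper always has $d=2^n$ large relative to $k$.
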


For the special cases $k=1$ and $k=2$, the Haar twirl admits simple closed forms.

\begin{lemma}[Special Cases of Weingarten Calculus, $k=1$ and $k=2$ cases~\cite{mele2024introduction}]
\label{lem:special cases of weingarten calculus}
Let $\cH$ be a Hilbert space with dimension $d$, $A\in\cB(\cH)$, and $B \in\cB(\cH^{\ox 2})$, we have 
\begin{align}
\cE_H^{(1)}(A) = \frac{\tr[A]}{d} I, \quad
\cE_H^{(2)}(B) = \frac{d\tr[B] - \tr[\bF B]}{d(d^2-1)}I \ox I + \frac{d\tr[\bF B] - \tr[B]}{d(d^2-1)} \bF, 
\end{align}
where $\bF$ is the SWAP operator. 
\end{lemma}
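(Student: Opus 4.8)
The plan is to derive both closed forms directly from the general Weingarten calculus formula $\cE_H^{(k)}(A) = \sum_{\sigma,\tau\in\cS_k}{\rm Wg}_{\sigma,\tau}(d)\tr[AP_\sigma^\dagger]P_\tau$ established in the preceding lemma, specializing it to $k=1$ and $k=2$. For these small values the symmetric group has only one and two elements respectively, so the only genuine computation is to evaluate the Gram matrix $G_{\sigma,\tau}(d)=\tr[P_\sigma P_\tau^\dagger]$ on these low-dimensional index sets, invert it to obtain ${\rm Wg}$, and collect the resulting terms according to the output permutation $P_\tau$.

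For $k=1$ I would observe that $\cS_1=\{e\}$ is trivial with $P_e=I$, so the Gram matrix is the scalar $G_{e,e}(d)=\tr[I]=d$ and its inverse is ${\rm Wg}_{e,e}(d)=1/d$. Substituting into the general formula will immediately give $\cE_H^{(1)}(A)=(1/d)\tr[A]\,I$, which is the claimed expression.

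For $k=2$, the relevant group is $\cS_2=\{e,(12)\}$ with $P_e=I\ox I$ and $P_{(12)}=\bF$. Using $\tr[I\ox I]=d^2$, $\tr[\bF]=d$, and $\bF^2=I$, I would compute the Gram matrix
\[
G(d)=\begin{pmatrix} d^2 & d \\ d & d^2 \end{pmatrix},
\]
and invert it to obtain the Weingarten entries ${\rm Wg}_{e,e}={\rm Wg}_{(12),(12)}=\tfrac{1}{d^2-1}$ and ${\rm Wg}_{e,(12)}={\rm Wg}_{(12),e}=\tfrac{-1}{d(d^2-1)}$. Substituting these four coefficients into the general formula and grouping terms by whether the output operator is $I\ox I$ or $\bF$ — while using $\tr[BP_e^\dagger]=\tr[B]$ and $\tr[BP_{(12)}^\dagger]=\tr[\bF B]$ since $\bF$ is Hermitian — will reproduce the two stated coefficients $\tfrac{d\tr[B]-\tr[\bF B]}{d(d^2-1)}$ and $\tfrac{d\tr[\bF B]-\tr[B]}{d(d^2-1)}$.

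There is no substantial obstacle here, as the argument is a direct specialization of a result I may assume. The only points requiring mild care are the evaluation $\tr[\bF]=d$, obtained from $\bra{ij}\bF\ket{ij}=\braket{ij}{ji}=\delta_{ij}$ summed over $i,j$, and the bookkeeping of which index of ${\rm Wg}_{\sigma,\tau}$ is contracted against the trace factor versus the output permutation, so that the off-diagonal negative signs are attached to the correct coefficients.
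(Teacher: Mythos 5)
Your proposal is correct: the Gram matrix, its inverse, and the resulting grouping of terms all check out, and this is exactly the standard specialization of the general Weingarten formula that the paper states immediately beforehand (the paper itself gives no proof, simply citing the Haar-measure tutorial for these closed forms). Nothing further is needed.
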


For quantum states in $\cH$, we define the Haar measure on quantum states as~\cite{mele2024introduction}
\begin{align}
\bE_{\ket{\psi}\sim\mu_H} \ketbra{\psi}^{\ox k} 
:= \bE_{U \sim\mu_H} U^{\ox k} \ketbra{\phi}^{\ox k} U^{\dagger \ox k}
= \frac{\Pi_{\rm sym}^{(d, k)}}{\kappa_k}, \quad 
\ketbra{\phi}\in\cD(\cH), 
\end{align}
where $\kappa_k := \binom{d+k-1}{k}$ and $\Pi_{\rm sym}^{(d,k)}$ is the orthogonal projector onto the symmetric subspace of $\cH^{\ox k}$, defined as
\begin{align}
\Pi_{\rm sym}^{(d,k)} := \frac{1}{k!} \sum_{\pi\in\cS_k} P_\pi.
\end{align}
A pure-state ensemble $\cA$ is a state $k$-design if
\begin{align}
\bE_{\ket{\psi}\sim\cA} \ketbra{\psi}^{\ox k} 
= \bE_{\ket{\psi}\sim\mu_H} \ketbra{\psi}^{\ox k} 
= \frac{\Pi_{\rm sym}^{(d, k)}}{\kappa_k}.
\end{align}
Similarly, a unitary ensemble $\cA$ is said to form a unitary $k$-design if and only if 
\begin{align}
\cE_\cA^{(k)} (A)
:= \bE_{U\sim\cA} U^{\ox k} A U^{\dagger\ox k} 
= \cE_H^{(k)}(A), \quad \forall\; A\in\cB(\cH^{\ox k}). 
\end{align}
Given a state $(s+2)$-design $\{\ket{\phi_j}\}_{j=1}^{L}$, the symmetric collective measurement $\cM_s$ on $\cH^{\ox s}$ is defined as the following POVM
~\cite{grier2024sampleoptimala, li2025nearly}: 
\begin{align}
\cM_{s} 
:= \left\{\frac{\kappa_s}{L} \ketbra{\phi_j}^{\ox T} \right\}_{j=1}^L 
\cup \left\{I - \Pi_{\rm sym}^{(d, s)}\right\}.
\label{eq:symmetric collective measurement} 
\end{align}
Based on the properties of state designs, we will use the following moment identities for the symmetric collective measurement $\cM_s$. 

\begin{lemma}[Lemmas~13 and~14 in~\cite{grier2024sampleoptimala}]
\label{lem:expectation and second moment of collective measurement result}
Let $\cH$ be a Hilbert space with dimension $d$.
For the measurement $\cM_s$ on the pure state $\ketbra{\psi}^{\ox s}$, the expectation of the measurement result is 
\begin{align}
\bE \ketbra{\phi} 
= \frac{I + s\ketbra{\psi}}{d + s}. 
\end{align}
Additionally, we have 
\begin{align}
\bE \ketbra{\phi}^{\ox 2} 
= \frac{2}{(d+s)(d+s+1)}\left[(I + s\ketbra{\psi})^{\ox 2} - \frac{s(s+1)}{2}\ketbra{\psi}^{\ox 2}\right] \Pi_{\rm sym}^{(d, 2)}. 
\end{align}
\end{lemma}

\subsection{Approximate Unitary Designs}
\label{sec:approximate_unitary_designs}
A unitary ensemble $\cA$ is an $\varepsilon$-approximate unitary $k$-design if~\cite{schuster2025random}
\begin{align}
(1 - \varepsilon)\cE_{\cA}^{(k)} 
\preceq \cE_H^{(k)} 
\preceq (1 + \varepsilon)\cE_{\cA}^{(k)},  
\end{align}
where $\varepsilon$ is called relative error.
Here $\cE_1 \preceq \cE_2$ denotes that $\cE_2 - \cE_1$ is a completely-positive map.
A standard approximate $k$-design channel~\cite{schuster2025random} and the corresponding Choi operator are defined as 
\begin{align}
\cE_a^{(k)} (A) 
:= \frac{1}{d^k}\sum_{\sigma\in\cS_k} \tr[A P_\sigma^\dagger] P_\sigma, \quad 
J_a^{(k)} 
:= \frac{1}{d^k} \sum_{\pi\in\cS_k} P_\pi\ox P_\pi. 
\label{eq:cE_a}
\end{align}
For $k^2 \le d$, the corresponding relative error satisfies
\begin{align}
\varepsilon = \frac{k^2 / 2d}{1- k^2/2d}.
\end{align}
We also use the following comparison between $J_a^{(k)}$ and the Haar random Choi operator $J_H^{(k)}$.
\begin{lemma}[Lemma 2 of~\cite{chen2023unitarity}]
\label{lem:lemma 2 of chen2023unitarity}
Suppose $\frac{d^k}{d(d+1)\cdots (d+T-1)} > \frac{1}{2}$, then, we have  
\begin{align}
J^{(k)}_H 
\succeq \frac{1}{d(d+1)\cdots (d+k-1)} \sum_{\sigma\in\cS_k} P_\sigma \ox P_\sigma
= \frac{d^k}{d(d+1)\cdots(d+k-1)} J_a^{(k)}.
\end{align}
\end{lemma}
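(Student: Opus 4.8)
The plan is to pass to a ``grouped'' picture in which both sides become manifestly comparable operators on a symmetric subspace, diagonalize the left-hand side by Schur--Weyl duality, and reduce the whole statement to a single combinatorial inequality about Young diagrams. Write $D := d(d+1)\cdots(d+k-1) = k!\,\kappa_k$, so $\kappa_k = \binom{d+k-1}{k}$.

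First I would regroup the $2k$ tensor factors of $\cH^{\ox k}\ox\cH^{\ox k}$ into $k$ consecutive pairs, identifying the space with $(\cH^{\ox 2})^{\ox k}$. Under this identification $P_\sigma\ox P_\sigma$ becomes the operator permuting the $k$ pairs, so $\sum_{\sigma\in\cS_k}P_\sigma\ox P_\sigma = k!\,\Pi_{\rm sym}^{(d^2,k)}$ and the right-hand side is $\tfrac{1}{\kappa_k}\Pi_{\rm sym}^{(d^2,k)}$. For the left-hand side I use $J_H^{(k)} = (\cE_H^{(k)}\ox\cI)(\Phi) = \bE_U (U^{\ox k}\ox I)\Phi(U^{\ox k}\ox I)^\dagger$; since $\ket{\Phi}=\ket{\phi}^{\ox k}$ with $\ket{\phi}$ the single-copy maximally entangled state, regrouping turns each $(U\ox I)\ket{\phi}$ into the vectorization $|U\rrangle$ and gives $J_H^{(k)} = \bE_U\, (|U\rrangle\llangle U|)^{\ox k}$. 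This is an average of symmetric pure tensors, hence supported on $\Pi_{\rm sym}^{(d^2,k)}$, and the lemma reduces to showing that the least nonzero eigenvalue of $J_H^{(k)}$ is at least $1/\kappa_k$ and that its support is exactly the symmetric subspace.

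Next I would diagonalize $J_H^{(k)}$ using its symmetries: it is invariant under $V^{\ox k}\ox W^{\ox k}$ for all unitaries $V,W$ (moving the Haar unitary across the maximally entangled state onto either leg) and under simultaneous permutations of the $k$ pairs. Applying Schur--Weyl duality on the two $\cH^{\ox k}$ factors, $\cH^{\ox k}=\bigoplus_{\lambda\vdash k}\mathcal W_\lambda\ox\mathcal S_\lambda$ with $d_\lambda:=\dim\mathcal W_\lambda$ and $m_\lambda:=\dim\mathcal S_\lambda$, the averaging acts on each Weyl block by the irreducible twirl $\bE_U\,\pi_\lambda(U)M\pi_\lambda(U)^\dagger = \tfrac{\tr M}{d_\lambda}I$ while leaving the Specht factors untouched. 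A short computation then yields $J_H^{(k)} = \bigoplus_{\lambda\vdash k}\tfrac{1}{d_\lambda}\,I_{\mathcal W_\lambda}\ox I_{\mathcal W_\lambda}\ox\ketbra{\Omega_{\mathcal S_\lambda}}$, where $\ket{\Omega_{\mathcal S_\lambda}}$ is the unnormalized maximally entangled vector on $\mathcal S_\lambda\ox\mathcal S_\lambda$ (norm-squared $m_\lambda$). Thus the nonzero eigenvalues are exactly $m_\lambda/d_\lambda$ with multiplicity $d_\lambda^2$; the total rank $\sum_\lambda d_\lambda^2 = \binom{d^2+k-1}{k}$ matches $\dim\Pi_{\rm sym}^{(d^2,k)}$, confirming that the support is the full symmetric subspace.

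It then remains to show $\min_{\lambda\vdash k} m_\lambda/d_\lambda = 1/\kappa_k$, equivalently $\max_\lambda d_\lambda/m_\lambda = \kappa_k$. By the hook-content and hook-length formulas, $d_\lambda/m_\lambda = \tfrac{1}{k!}\prod_{(i,j)\in\lambda}(d+j-i)$, so the claim is that the content product is maximized by the single-row diagram $\lambda=(k)$, for which it equals $D$ and $d_{(k)}/m_{(k)}=\kappa_k$. I expect this combinatorial maximization to be the main obstacle, and I would settle it by an exchange argument: moving the last cell of the lowest nonempty row (content $\lambda_r-r$, $r\ge 1$) to the end of the first row (new content $\lambda_1$) multiplies the product by $\tfrac{d+\lambda_1}{d+\lambda_r-r}>1$, since $\lambda_1\ge\lambda_r>\lambda_r-r$; iterating drives any $\lambda$ to $(k)$ while strictly increasing the product, so $(k)$ is the unique maximizer. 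The hypothesis $d^k/D>\tfrac12$ forces $k\le d$, so every $\lambda\vdash k$ has $\ell(\lambda)\le d$ and all these blocks genuinely occur. Combining, $J_H^{(k)}$ is positive with support $\Pi_{\rm sym}^{(d^2,k)}$ and least nonzero eigenvalue $1/\kappa_k$, whence $J_H^{(k)}\succeq\tfrac{1}{\kappa_k}\Pi_{\rm sym}^{(d^2,k)} = \tfrac{1}{D}\sum_{\sigma\in\cS_k}P_\sigma\ox P_\sigma = \tfrac{d^k}{D}J_a^{(k)}$, as claimed.
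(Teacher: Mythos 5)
Your proof is correct, and it is necessarily a different route from the paper's, because the paper does not prove this lemma at all: it imports it verbatim as Lemma~2 of Chen, Wang, Long and Ying (\texttt{chen2023unitarity}). What you supply is a self-contained representation-theoretic derivation, and each step checks out. The regrouping identity $\sum_{\sigma}P_\sigma\ox P_\sigma=k!\,\Pi_{\rm sym}^{(d^2,k)}$ and the rewriting $J_H^{(k)}=\bE_U(|U\rrangle\llangle U|)^{\ox k}$ are exactly right; the block diagonalization $J_H^{(k)}=\bigoplus_{\lambda}\tfrac{1}{d_\lambda}I_{\cW_\lambda}\ox I_{\cW_\lambda}\ox\ketbra{\Omega_{\cS_\lambda}}$ follows from Schur orthogonality (cross terms between distinct $\lambda,\mu$ vanish), and the match between $\sum_\lambda d_\lambda^2$ and $\dim\Pi_{\rm sym}^{(d^2,k)}$ is the Cauchy identity, so the support is indeed the full symmetric subspace. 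The reduction of the eigenvalue bound to maximizing the content product $\prod_{(i,j)\in\lambda}(d+j-i)$ via the hook-length and hook-content formulas is correct, and your exchange argument is valid: moving the terminal cell of the lowest row $r\ge 2$ to the end of row $1$ multiplies the product by $(d+\lambda_1)/(d+\lambda_r-r)>1$, all factors stay positive because $\ell(\lambda)\le d$ guarantees $d+j-i\ge 1$, and the process terminates at $(k)$, giving $\min_\lambda m_\lambda/d_\lambda=1/\kappa_k$. Two minor remarks. First, your proof never actually uses the hypothesis $d^k/D>1/2$ (which, incidentally, contains a typo in the paper: the denominator should read $d+k-1$, not $d+T-1$); the operator inequality holds unconditionally, and even when $k>d$ the Cauchy identity still restricts to $\ell(\lambda)\le d$ on both sides, so your parenthetical appeal to $k\le d$ is dispensable. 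Second, the appearance of $\ket{\Omega_{\cS_\lambda}}$ as the unique $P_\sigma\ox P_\sigma$-invariant vector in $\cS_\lambda\ox\cS_\lambda$ is what makes the support of $J_H^{(k)}$ coincide with the symmetric subspace rather than merely sit inside it; you assert this via the rank count, which suffices, but it is worth knowing that it also follows directly from Schur's lemma. What your approach buys over the citation is an explicit spectral picture of $J_H^{(k)}$ (all eigenvalues $m_\lambda/d_\lambda$ with multiplicities $d_\lambda^2$), which in particular identifies the partition $(k)$ as the unique tight direction of the inequality.
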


\section{Lower Bound}
\label{sec:lower_bound}

In this section, we prove lower bounds for DSEC under both incoherent and coherent access.
That is, we answer the question of how many queries are necessary to complete DSEC, even for the algorithm in the strongest setting.
Here, we consider two unitary channels $\cU = U(\cdot)U^\dagger$ and $\cV = V(\cdot)V^\dagger$ acting on $d$-dimensional subsystems of a Hilbert space $\cH\simeq \cH_{\rm main} \ox \cH_{\rm aux}$, 
where $\cH_{\rm main}$ is the main system with dimension $d$ and 
$\cH_{\rm aux}$ is an auxiliary system with dimension $d'$. 

Our approach is to reduce the DSEC problem to a \emph{two-hypothesis distinguishing problem}. In particular, to derive a lower bound for completing DSEC, we consider the following distinguishing problem:

\begin{problem}[Distinguishing Problem]
\label{pro:distinguish problem}
We want to distinguish the following two cases:
\begin{enumerate}
\item Two quantum devices perform the same unitary $U$, which is a Haar random unitary;
\item Two quantum devices independently perform two unitaries $U$ and $V$, which are two independent Haar random unitaries. 
\end{enumerate}
\end{problem}

Define the Choi operators of $U$ and $V$ as $J_U$ and $J_V$, respectively.
The two hypotheses are separated by the DSEC target: in Case~1, $\tr[J_UJ_U]/d^2=1$, whereas in Case~2, $\tr[J_UJ_V]/d^2=|\tr(U^\dagger V)|^2/d^2$ is $1/d^2$ with high probability. 
Thus a constant-accuracy DSEC estimator would give a reliable test for Problem~\ref{pro:distinguish problem}.
Consequently, any lower bound for this distinguishing problem immediately yields a lower bound for DSEC. To bound the success probability of solving the distinguishing problem, we invoke Le Cam's two-point method as follows.

\begin{lemma}[Le Cam's Two-Point Method~\cite{Yu1997}]
\label{lem:Le Cam's two-point method}
Let $p^{U,U}(\ell)$ and $p^{U,V}(\ell)$ be the probabilities of obtaining measurement outcome $\ell$ under Cases~1 and~2 of Problem~\ref{pro:distinguish problem}, respectively.
Then, for any binary test between the two averaged outcome distributions, the average success probability is at most
\begin{align}
\frac{1}{2} + \frac{1}{2}\norm{\bE_{U\sim \mu_H} p^{U, U} - \bE_{U,V\sim \mu_H} p^{U, V}}{{\rm TV}},
\end{align}
where $\norm{\cdot}{{\rm TV}}$ is the total variation distance (TVD), defined as
\begin{align}
\norm{\bE_{U\sim \mu_H} p^{U, U} - \bE_{U,V\sim \mu_H} p^{U, V}}{{\rm TV}} := 
\frac{1}{2}\sum_{\ell} \left\vert \bE_{U\sim \mu_H} p^{U, U}(\ell) - \bE_{U,V\sim \mu_H} p^{U, V}(\ell) \right\vert. 
\end{align}
\end{lemma}
Therefore, for reliable hypothesis testing, this TVD must be at least \emph{constantly} large. 
The rest of the section can therefore be read as a sequence of indistinguishability bounds. 
In the incoherent model, we bound the TVD through a classical learning tree. In the coherent model, where no such classical information exists before the final measurement, we first convert an arbitrary coherent algorithm into a Choi-tester form and then compare the resulting probability distributions using tools from symmetric subspace and optimal cloning theory.

\subsection{Incoherent Access}

Here, we prove the lower bound for completing DSEC with incoherent access. 
In this model the learner never stores a quantum output for later coherent processing; all memory is classical.
First, we introduce the tree representation of channel learning with incoherent access, which is a powerful tool in proving lower bounds for learning tasks~\cite{chen2022exponential, chen2023unitarity, chen2024optimal, hu2025ansatzfreea, chen2025efficient}. 
Briefly, the learning tree can represent any adaptive incoherent algorithm. 
Each root-to-leaf path records all classical information available to the learner, including the outcomes that determine future state preparations and measurements. 
Thus bounding the leaf distribution under different channel cases is equivalent to bounding the statistical information that the algorithm can extract from its queries.
The formal definition of the learning tree is as follows.

\begin{definition}[Tree Representation for Learning Quantum Channels~\cite{chen2022exponential}]
\label{def:learning tree}
Consider two channels $\cU$ and $\cV$ acting on $\cH_{\rm main}$.
A learning algorithm with incoherent access can be represented as a rooted tree $\cT$ of depth $T$ such that each node encodes all measurement outcomes the algorithm has received thus far. The tree has the following properties: 
\begin{itemize}
\item Each node $u$ has an associated probability $p^{U, V}(u)$. 
\item The root of the tree $r$ has an associated probability $p^{U, V}(r) = 1$. 
\item At each non-leaf node $u$, we prepare a state $\ket{\phi_{u,1}}\ox\ket{\phi_{u,2}}$ on $\cH^{\ox 2}$, apply channels $\cU$ and $\cV$ onto two $d$-dimensional subsystems, and measure a rank-$1$ POVM $\{w_v^u d^2 d'^2\ketbra{\psi^{u}_{v,1}} \ox \ketbra{\psi^{u}_{v,2}}\}_v$ (which can depend on $u$) on the entire system to obtain a classical outcome $v$\footnote{As shown in~\cite[Remark 4.19]{chen2022exponential}, we only need to consider rank-$1$ POVMs on two quantum systems.}. 
Each child node $v$ of the node $u$ corresponds to a particular POVM outcome $v$ and is connected by the edge $e_{u,v}$. 
We denote the set of child nodes of node $u$ as ${\rm child}(u)$. 
\item If $v$ is a child node of $u$, then 
\begin{align}
p^{U, V}(v) 
= p^{U, V}(u) w_v^u d^2 d'^2 \tr\left[\left(\bigotimes_{i=1}^2 \ketbra{\psi^{u}_{v,i}}\right) (\cU\ox \cI_{\rm aux} \ox \cV\ox \cI_{\rm aux}) \left(\bigotimes_{i=1}^2 \ketbra{\phi_{u,i}}\right)\right]. \notag 
\end{align}
\item Each root-to-leaf path is of length $T$. 
For a leaf corresponding to node $\ell$, $p^{U, V}(\ell)$ is the probability that the classical memory is in state $\ell$ after the learning procedure. 
\end{itemize}
\end{definition}

Based on the above tree representation and distinguishing problem, we are ready to prove the lower bound for completing DSEC with incoherent access.
We introduce the intermediate leaf distribution $p^\cD(\ell)$ obtained when both channels are replaced by the completely depolarizing channel $\cD(A):=\tr[A]I/d$. 
Then, by the triangle inequality, we have the following upper bound
\begin{align}
\norm{\bE_{U\sim \mu_H} p^{U, U} - \bE_{U,V\sim \mu_H} p^{U, V}}{{\rm TV}} 
&\leq \norm{\bE_{U\sim \mu_H} p^{U, U} - p^\cD}{{\rm TV}} + \norm{p^\cD - \bE_{U,V\sim \mu_H} p^{U, V}}{{\rm TV}}, 
\label{eq:incoherent lower bound two terms}
\end{align}
Thus, we can focus on analyzing the two terms in the upper bound of TVD in Eq.~\eqref{eq:incoherent lower bound two terms} separately.

\begin{lemma}
\label{lem:upper bound for two terms in incoherent}
For the two terms defined in Eq.~\eqref{eq:incoherent lower bound two terms}, we have the following upper bounds, 
\begin{align}
\norm{\bE_{U\sim \mu_H} p^{U, U} - p^\cD}{{\rm TV}}
&\leq \frac{4 T^2}{d} , \quad 
\norm{p^\cD - \bE_{U,V\sim \mu_H} p^{U, V}}{{\rm TV}}
\leq \frac{2T^2}{d}.
\end{align}
\end{lemma}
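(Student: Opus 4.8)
The plan is to prove both bounds through a single one-sided comparison of each Haar-averaged leaf distribution against $p^\cD$, exploiting that the tree probabilities are \emph{linear} in the relevant Choi operators. First I would use the product structure forced by LOCC: for a fixed leaf $\ell$ the entire root-to-leaf path is determined, so the preparations and rank-$1$ POVM elements factorize across the two devices and across the $T$ queries. Writing each single-query factor $\tr[\ketbra{\psi}(\cU\ox\cI)(\ketbra{\phi})]$ via the Choi identity $\tr[O\cE(\rho)]=\tr[(O\ox\rho^T)J_\cE]$ and contracting the ancillary registers yields $\tr[\Omega\,J_\cU]$ for an effect operator $\Omega$ on $\cH_{\rm out}\ox\cH_{\rm in}$; since this factor is a genuine outcome probability for every channel, $\Omega\succeq0$. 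Collecting the $\Omega$'s along the path produces a fixed $N_\ell=\bigotimes_t\Omega_t\succeq0$, independent of the unknown channels, with
\begin{align}
\bE_{U}\, p^{U,U}(\ell) = \tr[N_\ell\, J_H^{(2T)}], \quad
\bE_{U,V}\, p^{U,V}(\ell) = \tr\big[N_\ell\,(J_H^{(T)}\ox J_H^{(T)})\big], \quad
p^\cD(\ell) = \tr[N_\ell\, J_\cD^{\ox 2T}],
\end{align}
where $\bE_U\cU^{\ox k}=\cE_H^{(k)}$ turns $k$ correlated queries into $J_H^{(k)}$ (the same $U$ on both devices gives $J_H^{(2T)}$, independent $U,V$ give $J_H^{(T)}\ox J_H^{(T)}$) and $J_\cD=\tfrac1d I\ox I$ gives $J_\cD^{\ox k}=\tfrac1{d^k}I$.

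Next I would reduce each total variation distance to a pointwise lower bound: for two probability distributions, $\bE\,p^{U,U}(\ell)\ge(1-\varepsilon)\,p^\cD(\ell)$ for all $\ell$ immediately gives $\norm{\bE\,p^{U,U}-p^\cD}{\rm TV}\le\varepsilon$, since the TVD equals $\sum_{\ell:\,p^\cD>\bE p^{U,U}}\!\big(p^\cD(\ell)-\bE p^{U,U}(\ell)\big)\le\varepsilon\sum_\ell p^\cD(\ell)$. To obtain the pointwise estimate I would invoke the relative-error bound of the standard approximate $k$-design from the Preliminaries, equivalently Lemma~\ref{lem:lemma 2 of chen2023unitarity}, in the form $J_H^{(k)}\succeq(1-\varepsilon_k)J_a^{(k)}$ with $\varepsilon_k=\tfrac{k^2/2d}{1-k^2/2d}$; because $N_\ell\succeq0$ this yields $\tr[N_\ell J_H^{(k)}]\ge(1-\varepsilon_k)\tr[N_\ell J_a^{(k)}]$.

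The crux — and the step I expect to be the main obstacle — is passing from $J_a^{(k)}=\tfrac1{d^k}\sum_{\pi\in\cS_k}P_\pi\ox P_\pi$ to $J_\cD^{\ox k}=\tfrac1{d^k}P_{\rm id}\ox P_{\rm id}$, i.e.\ showing $\tr[N_\ell(J_a^{(k)}-J_\cD^{\ox k})]\ge0$. The difference $\tfrac1{d^k}\sum_{\pi\ne\rm id}P_\pi\ox P_\pi$ is \emph{not} positive semidefinite, so this cannot hold for arbitrary $N_\ell$; it holds precisely because $N_\ell=\bigotimes_t\Omega_t$ is a product. Reorganising $\sum_\pi P_\pi\ox P_\pi=k!\,\Pi_{\rm sym}^{(d^2,k)}$ on $(\cH_{\rm out}\ox\cH_{\rm in})^{\ox k}$ and decomposing each $\Omega_t$ spectrally into unit vectors reduces the claim to $k!\,\bra{\vec v}\Pi_{\rm sym}^{(d^2,k)}\ket{\vec v}={\rm perm}(G)\ge\prod_i G_{ii}$ for the Gram matrix $G_{ij}=\braket{v_i}{v_j}$ of a product vector — exactly the permanent inequality for positive semidefinite matrices. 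This discards all non-identity permutations and leaves the identity term $\tr[N_\ell J_\cD^{\ox k}]=p^\cD(\ell)$, giving $\bE\,p^{U,U}(\ell)\ge(1-\varepsilon_k)\,p^\cD(\ell)$.

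Finally I would assemble the constants. For the first bound the shared $U$ gives $k=2T$, so $\norm{\bE_U p^{U,U}-p^\cD}{\rm TV}\le\varepsilon_{2T}=\tfrac{2T^2/d}{1-2T^2/d}\le\tfrac{4T^2}{d}$ in the relevant regime $T=\cO(\sqrt d)$. For the second bound I apply the $k=T$ inequality independently to the two tensor factors of $J_H^{(T)}\ox J_H^{(T)}$ and to the two device factors of $N_\ell$, so that $\bE_{U,V}p^{U,V}(\ell)\ge(1-\varepsilon_T)^2 p^\cD(\ell)\ge(1-2\varepsilon_T)p^\cD(\ell)$, whence $\norm{p^\cD-\bE_{U,V}p^{U,V}}{\rm TV}\le2\varepsilon_T\le\tfrac{2T^2}{d}$. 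Summing the two estimates by the triangle inequality reproduces the $6T^2/d$ bound and the conclusion $T=\Omega(\sqrt d)$.
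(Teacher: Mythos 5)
Your proposal is correct and follows essentially the same route as the paper's proof: reduce the TVD to a pointwise lower bound on the ratio against $p^\cD(\ell)$, apply Lemma~\ref{lem:lemma 2 of chen2023unitarity} to replace $J_H^{(k)}$ by $(1-\cO(k^2/d))J_a^{(k)}$ with $k=2T$ (shared $U$) or $k=T$ per device (independent $U,V$), and then discard the non-identity permutations using the product structure of the leaf's effect operator via the permanent inequality for positive semidefinite Gram matrices, which is exactly the content of Lemma~\ref{lem:tensor product and permutation operators}. The only differences are notational (Choi-contraction with a product effect operator $N_\ell$ versus the paper's explicit auxiliary-basis sums and vectorized permutation maps $\cP_\pi$), and both yield the same constants in the regime $T=\cO(\sqrt d)$, outside of which the bounds are trivial.
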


The proof of Lemma~\ref{lem:upper bound for two terms in incoherent} is deferred to Appendix~\ref{app:upper bound for two terms in incoherent lower bound}.
The lemma shows that each case performs similarly to the depolarizing channel unless $T=\Omega(\sqrt d)$; equivalently, constant total variation distance requires $\Omega(\sqrt d)$ queries.
This immediately implies the following lower bound for completing DSEC with incoherent access.

\begin{theorem}
\label{the:incoherent lower bound}
Any algorithm with incoherent access, possibly utilizing multi-round LOCC, ancillary systems, and adaptive strategies, requires $\Omega(\sqrt{d})$ queries to complete Problem~\ref{pro:distinguish problem}. 
\end{theorem}
\begin{proof}[Proof of Theorem~\ref{the:incoherent lower bound}]
Using Lemma~\ref{lem:upper bound for two terms in incoherent}, we can find upper bounds for the two terms in Eq.~\eqref{eq:incoherent lower bound two terms} and obtain 
\begin{align}
\norm{\bE_{U\sim \mu_H} p^{U, U} - \bE_{U,V\sim \mu_H} p^{U, V}}{{\rm TV}}
&\leq \frac{6 T^2}{d},
\end{align}
which implies that we require $T=\Omega(\sqrt{d})$. 
Therefore, we complete the proof of Theorem~\ref{the:incoherent lower bound}. 
\end{proof}

Because the tree representation already captures adaptive multi-round LOCC algorithms with ancillary systems, Theorem~\ref{the:incoherent lower bound} gives an $\Omega(\sqrt d)$ incoherent lower bound for DSEC.
The scaling agrees with unitarity estimation~\cite{chen2023unitarity}, which is the self-overlap special case.

\subsection{Coherent Access}
\label{sec:coherent lower bound}

We now turn to coherent access, where the learner may interleave multiple queries with arbitrary quantum channels before measurement. 
In various quantum learning tasks, coherent access often provides substantial advantages over incoherent access~\cite{chen2022exponential, chen2023unitarity, li2025nearly, jeon2025query}. 
A central question is whether such advantages exist for DSEC.
Remarkably, we prove that even with this significantly more powerful learning access, completing Problem~\ref{pro:distinguish problem} still requires $\Omega(\sqrt{d})$ queries. 
This result holds even with multi-round LOCC, arbitrarily large ancillary systems, and adaptive SPAM settings, highlighting the fundamental hardness of DSEC. 

\paragraph*{Characterization of Coherent Access.}
It is nontrivial to obtain this lower bound.
One main technical difficulty is to represent an arbitrary coherent algorithm, since such an algorithm may interleave channel queries with general quantum processing. 
Following the tester representation of~\cite{schuster2025random}, any coherent channel learning algorithm can be written as a fixed measurement acting on the $T$-fold Choi operator of the unknown channel.
The core of such characterization is to replace an adaptive quantum circuit by an equivalent tester acting on the $T$-fold Choi operator of the unknown unitary. 
Equivalently, all intermediate processing and ancillas are absorbed into tester operators determined by the algorithm, and the unknown unitary appears only through its $T$-fold Choi operator.

Consider two unknown unitary channels $\cU$ and $\cV$ acting on two separate quantum systems. 
On these devices, we input the initial state $\ket{0} \otimes \ket{0}$ on $\cH^{\otimes 2}$ and interleave the channel queries with a sequence of $T$ adaptive data-processing channels $\cC_t = \cC_{t,1}\ox \cC_{t,2}$, $t = 1, \cdots, T$. 
Assuming the auxiliary system is large enough, each data-processing channel can be represented as a unitary $W_t = W_{t, 1} \ox W_{t, 2}$~\cite{nielsen2010quantum}. 
After $T$ queries, the resulting states on the two devices can then be expressed as
\begin{align}
\ket{\psi_1} 
&= \prod_{t=1}^T \left[\left(U\ox I_{\rm aux} \right) W_{t,1}\right] \ket{0}, \quad 
\ket{\psi_2}
= \prod_{t=1}^T \left[\left(V\ox I_{\rm aux} \right) W_{t,2}\right] \ket{0}.
\end{align}
These two terms can also be written as 
\begin{align}
\ket{\psi_1} 
&= \left(I_{\cH}\ox\bra{\Phi}\right) \left(I_{\cH} \ox U^{\ox T}\ox I_{\rm main}^{\ox T}\right) \ket{\Psi_{I, 1}}, \\
\ket{\psi_2} 
&= \left(I_{\cH}\ox\bra{\Phi}\right) \left(I_{\cH} \ox V^{\ox T}\ox I_{\rm main}^{\ox T}\right)  \ket{\Psi_{I, 2}}, 
\end{align}
where $\ket{\Phi} = \sum_{i=0}^{d^T-1} \ket{ii}$ is the maximally entangled state on system $\cH_{\rm main}^{\ox T} \ox \cH_{\rm main}^{\ox T}$ and 
\begin{align}
\ket{\Psi_{I, i}} 
= \sum_{\substack{x_1,\cdots,x_T \in \{0,\cdots, d-1\} \\ y_1,\cdots,y_T \in \{0,\cdots, d-1\}}} \prod_{t=1}^T \overbrace{\left[\left(\ket{y_t}\!\bra{x_t}\ox I_{\rm aux}\right)  W_{t, i}\right] \ket{0}}^{\text{$\cH$}} \ox \overbrace{\ket{x_1\cdots x_T} \ox \ket{y_1\cdots y_T}}^{\text{$\cH_{\rm main}^{\ox T}\ox \cH_{\rm main}^{\ox T}$}}. 
\label{eq:definition of Psi_I}
\end{align}
Lastly, we can perform an adaptive POVM $\{E_\ell := E_{\ell,1}\ox E_{\ell, 2}\}_\ell$. 
This representation gives the following two outcome probabilities:
\begin{align}
p^{U,V}(\ell) 
&:= \tr\left[\left(E_{\ell, 1}\ox J_U^{(T)}\right)\Psi_{I,1}\right] \tr\left[\left(E_{\ell, 2}\ox J_V^{(T)}\right)\Psi_{I,2}\right], \\
p^{U,U}(\ell)
&:= \tr\left[\left(E_{\ell, 1}\ox J_U^{(T)}\right)\Psi_{I,1}\right] \tr\left[\left(E_{\ell, 2}\ox J_U^{(T)}\right)\Psi_{I,2}\right]. 
\end{align}
where $J_U^{(T)}$ is the Choi operator of unitary $U^{\ox T}$, as defined in Eq.~\eqref{eq:choi state}, and similarly for $J_V^{(T)}$. 
With the definition of Haar random channel, we have 
\begin{align}
\bE_{U,V\sim\mu_H} p^{U,V}(\ell) 
= \tr\left[\left(E_{\ell, 1}\ox J_H^{(T)}\right)\Psi_{I,1}\right] \tr\left[\left(E_{\ell, 2}\ox J_H^{(T)}\right)\Psi_{I,2}\right], 
\end{align}
where $J_H^{(T)}$ is the Choi operator of the Haar random channel, as defined in Eq.~\eqref{eq:choi state of haar random channel}. 
Consequently, with the above characterization, we can write two probability distributions as a function of Choi operators without loss of generality, which is more convenient for analyzing the TVD between them.

\paragraph*{Haar Random vs Approximate $T$-design.}
The second challenge is how to analyze the TVD between $\bE_{U\sim \mu_H} p^{U,U}(\ell)$ and $\bE_{U,V\sim \mu_H} p^{U,V}(\ell)$, which is hard to analyze directly.
To overcome this challenge, we introduce an intermediate distribution $p^a(\ell)$, defined as 
\begin{align}
p^a(\ell) 
:= \tr\left[\left(E_{\ell, 1}\ox J_a^{(T)}\right)\Psi_{I,1}\right] \tr\left[\left(E_{\ell, 2}\ox J_a^{(T)}\right)\Psi_{I,2}\right]. 
\end{align}
This is the distribution obtained when both channels are replaced by the approximate $T$-design channel $\cE_a^{(T)}$ defined in Eq.~\eqref{eq:cE_a}, 
and is the coherent analogue of the depolarizing reference in the incoherent proof. 
As in the incoherent proof, the triangle inequality reduces the comparison between the shared-unitary and independent-unitary cases to two terms:
\begin{align}
\norm{\bE_{U\sim \mu_H} p^{U, U} - \bE_{U,V\sim \mu_H} p^{U, V}}{{\rm TV}} 
&\leq \norm{\bE_{U\sim \mu_H} p^{U, U} - p^a}{{\rm TV}} + \norm{p^a - \bE_{U,V\sim \mu_H} p^{U, V}}{{\rm TV}}, \label{eq:coherent lower bound two terms}
\end{align}
With the definition of approximate $T$-design channel, we can directly obtain the upper bound for the second term as follows, 
\begin{align}
\left\vert \bE_{U,V\sim\mu_H} p^{U,V}(\ell) - p^a(\ell) \right\vert
&\leq \tr\left[\left(E_{\ell,1}\ox J_H^{(T)} \right) {\Psi_{I, 1}}\right] \left\vert \tr\left[\left(E_{\ell,2}\ox \left(J_H^{(T)} - J_a^{(T)}\right) \right) {\Psi_{I, 2}}\right]\right\vert \notag \\
&\quad\quad + \left\vert \tr\left[\left(E_{\ell,1}\ox \left(J_H^{(T)} - J_a^{(T)}\right) \right) {\Psi_{I, 1}}\right] \right\vert 
\tr\left[\left(E_{\ell,2}\ox J_a^{(T)} \right) {\Psi_{I, 2}}\right] \notag \\
&\leq \frac{T^2/2d}{1 - T^2/2d} \left\{\tr\left[\left(E_{\ell,1}\ox J_H^{(T)} \right) {\Psi_{I, 1}}\right] \tr\left[\left(E_{\ell,2}\ox J_a^{(T)} \right) {\Psi_{I, 2}}\right] \right. \notag \\
&\quad\quad \left. 
+ \tr\left[\left(E_{\ell,1}\ox J_a^{(T)} \right) {\Psi_{I, 1}}\right]\tr\left[\left(E_{\ell,2}\ox J_H^{(T)} \right) {\Psi_{I, 2}}\right]\right\}. 
\end{align}
The last inequality relies on the hardness of distinguishing the Haar random channel from the approximate $T$-design channel with $T$ queries, as shown in Section~\ref{sec:approximate_unitary_designs}.
Thus, we have 
\begin{align}
\sum_\ell \left\vert \bE_{U,V\sim\mu_H} p^{U, V\sim \mu_H}(\ell) - p^a(\ell) \right\vert 
&\leq 2 \frac{T^2/2d}{1 - T^2/2d}, \\
\Rightarrow\quad 
\norm{\bE_{U,V\sim\mu_H} p^{U, V\sim \mu_H}(\ell) - p^a(\ell)}{\rm TV} 
&\leq \frac{T^2/2d}{1 - T^2/2d}. \label{eq:upper bound for coherent lower bound second term}
\end{align}

The first term is harder: it asks whether a tester can detect that the same Haar unitary was used on both devices, rather than two independent unitaries. We handle this term by a likelihood-ratio argument restricted to the symmetric subspace of $\cH^{\ox T}_{\rm main}\ox\cH^{\ox T}_{\rm main}$.
The result is shown in the following lemma.

\begin{lemma}
\label{lem:upper bound for two terms in coherent}
For the first term in Eq.~\eqref{eq:coherent lower bound two terms}, we have
\begin{align}
\norm{p^{a}(\ell) - \bE_{U\sim\mu_H} p^{U,U}(\ell)}{\rm TV}
\leq \frac{4T^2}{d}. \label{eq:upper bound for coherent lower bound first term}
\end{align}
\end{lemma}
\begin{proof}[Proof of Lemma~\ref{lem:upper bound for two terms in coherent}]
We have
\begin{align}
&\norm{p^a(\ell) - \bE_{U\sim\mu_H} p^{U, U}(\ell)}{\rm TV}
= \sum_{\ell: p^{a}(\ell) \geq \bE p^{U, U}(\ell)} p^a(\ell) \left[ 1 - \frac{\bE_{U\sim\mu_H} p^{U, U}(\ell)}{p^a(\ell)}\right] \\
&\qquad = \sum_{\ell: p^{a}(\ell) \geq \bE p^{U, U}(\ell)} \tr\left[M_{\ell_1} J_a^{(T)}\right] \tr\left[M_{\ell_2} J_a^{(T)}\right] \left[1 - \frac{\tr\left[M_{\ell_2} \Phi_{\ell}^{(2T)}\right]}{\tr\left[M_{\ell_2} J_a^{(T)}\right]} \right].
\end{align}
Here we define a ``measure-and-prepare'' map 
\begin{align}
\Phi_{\ell}^{(2T)} 
:= \bE_{U\sim\mu_H} \frac{\tr\left[M_{\ell_1} J_U^{(T)}\right] }{\tr\left[M_{\ell_1} J_a^{(T)}\right]} J_U^{(T)}, \quad 
M_{\ell_i} := \Pi_{{\rm sym}}'^{(d,T)} \tr_1\left[\left(E_{\ell_i}\ox I\right) \Psi_{I, i}\right] \Pi_{{\rm sym}}'^{(d,T)}, \quad i = 1,2, 
\end{align}
where 
\begin{align*}
\Pi_{{\rm sym}}'^{(d,T)} := \sum_{\pi\in\cS_T} P_\pi\ox P_\pi / T!
\end{align*}
is the projector onto the symmetric subspace of $\cH_{\rm main}^{\ox T}\ox \cH_{\rm main}^{\ox T}$~\cite{schuster2025random}. 
Thus, with the definition, $M_{\ell_i}$ is in the symmetric subspace of $\cH_{\rm main}^{\ox T}\ox \cH_{\rm main}^{\ox T}$ and we have~\cite{harrow2013churcha, schuster2025random}
\begin{align}
J_a^{(T)}
= \frac{T!}{d^T}\Pi_{{\rm sym}}'^{(d,T)}, 
\quad\Rightarrow \quad
\frac{\tr\left[M_{\ell_2} \Phi_{\ell}^{(2T)}\right]}{\tr\left[M_{\ell_2} J_a^{(T)}\right]}
&= \frac{d^{2T} \bE_{U\sim\mu_H} \tr\left[M_{\ell_1} J_U^{(T)}\right] \tr\left[M_{\ell_2} J_U^{(T)}\right]}{(T!)^2 \tr\left[M_{\ell_1}\right] \tr\left[M_{\ell_2}\right]}  . 
\end{align}
With Lemma~\ref{lem:choi state cloning}, we have 
\begin{align}
\frac{\tr\left[M_{\ell_2} \Phi_{\ell}^{(2T)}\right]}{\tr\left[M_{\ell_2} J_a^{(T)}\right]} 
\geq \frac{d^{2T}}{d(d+1)\cdots(d+2T-1)} 
\geq 1 - \frac{4T^2}{d}. 
\end{align}
Therefore, we have 
\begin{align}
\norm{p^{a}(\ell) - \bE_{U\sim\mu_H} p^{U,U}(\ell)}{\rm TV} \leq \frac{4T^2}{d}. 
\end{align}
\end{proof}

Consequently, based on the above analysis, we have the following theorem for the lower bound of completing Problem~\ref{pro:distinguish problem} with coherent access.

\begin{theorem}
\label{the:coherent lower bound}
Any algorithm with coherent access, possibly utilizing multi-round LOCC, ancillary systems, and adaptive strategies, requires $\Omega(\sqrt{d})$ queries to complete Problem~\ref{pro:distinguish problem}. 
\end{theorem}
\begin{proof}[Proof of Theorem~\ref{the:coherent lower bound}]
Combining Eqs.~\eqref{eq:upper bound for coherent lower bound second term} and \eqref{eq:upper bound for coherent lower bound first term}, we have 
\begin{align}
\norm{\bE_{U,V\sim\mu_H} p^{U,V}(\ell) - \bE_{U\sim\mu_H} p^{U,U}(\ell)}{\rm TV} 
\leq \frac{T^2/2d}{1 - T^2/2d} + \frac{4T^2}{d}, 
\end{align}
which implies that we require $T = \Omega(\sqrt{d})$. 
\end{proof}

Likewise, this theorem implies that any algorithm with coherent access requires at least $\Omega(\sqrt{d})$ queries to complete DSEC.
Importantly, we will show that coherent access offers no asymptotic advantage for completing DSEC: our incoherent algorithm matches this lower bound.

\subsection{Lower Bound for DSEC}
\label{sec:lower bound for DSEC}

We now lift the constant-error lower bound to additive error $\varepsilon$. 
The proof follows the random-phase mechanism used in the DIPE lower bound~\cite{anshu2022distributed}, but we implement it by a channel construction. 
The main point is that a random phase removes coherences between branches with different numbers of informative Haar-unitary queries. 
Consequently, after averaging over the random phase, the number of informative Haar-unitary branches in any fixed Choi-tester expansion is stochastically dominated by $B(T,\varepsilon)$, where $B(T,\varepsilon)$ denotes the binomial distribution.
Thus a DSEC estimator with $T$ channel queries yields a tester for Problem~\ref{pro:distinguish problem} with $O(T\varepsilon)$ effective queries, forcing $T\varepsilon=\Omega(\sqrt d)$ up to constant factors.
The result is shown in the following theorem.

\begin{theorem}
\label{the:lower bound for DSEC}
Any algorithm, possibly utilizing multi-round LOCC, ancillary systems, and adaptive strategies, requires $\Omega(\max\{\sqrt{d}/\varepsilon, 1/\varepsilon^2\})$ queries to complete DSEC. 
\end{theorem}

The lower bound has two independent components. 
The term $\Omega(1/\varepsilon^2)$ already holds in the non-distributed setting by the optimality of unitarity estimation~\cite{chen2023unitarity}, and therefore also holds in the distributed setting. 
It remains to prove the dimension-dependent lower bound $\Omega(\sqrt d/\varepsilon)$.

\paragraph*{Distinguishing Problem for Estimation.}
We enlarge the Hilbert space as $\cH=\cH_A \ox \cH_B$ for convenience, which does not change the form of our lower bound, where $\cH_{A} \cong \bC^2$ and $\cH_{B} \cong \bC^d$. 
The constructed channel has total dimension $2d$, so the final lower bound in the original dimension changes only by a universal constant factor.
For $\theta\in[0,2\pi]$ and a unitary $U$ on $\cH_B$, define two Kraus operators on $\cH$ by
\begin{align}
E_{\theta,U}
&:= \sqrt{1-\varepsilon}e^{i\theta}\left(\ketbra{0}_A\ox I_B\right)
+ \sqrt{\varepsilon}\left(\ket{1}\!\bra{0}_A\ox U_B\right), \quad 
F :=\ketbra{1}_A\ox I_B,
\end{align}
and define the channel
\begin{align}
\cE_{\theta,U}(X)
:=E_{\theta,U}X E_{\theta,U}^\dagger+F X F^\dagger .
\end{align}
This is a CPTP map from $\cB(\cH)$ to $\cB(\cH)$ because
\begin{align}
E_{\theta,U}^\dagger E_{\theta,U}=\ketbra{0}_A\ox I_B,
\quad
F^\dagger F=\ketbra{1}_A\ox I_B, 
\quad \Rightarrow\quad
E_{\theta,U}^\dagger E_{\theta,U}+F^\dagger F= I_{AB}.
\end{align}
We consider the following distinguishing problem.
\begin{problem}[Distinguishing Problem for Estimation]
\label{pro:distinguish problem 2}
Given $T$ queries to each of two unknown channels on $\cH=\cH_A \ox \cH_B$, distinguish the following two cases:
\begin{enumerate}
\item The two devices implement $\cE_{\theta_1,U}$ and $\cE_{\theta_2,U}$, respectively, where $U$ is Haar random on $\cH_B$ and $\theta_1,\theta_2$ are independent uniform phases in $[0,2\pi]$.
\item The two devices implement $\cE_{\theta_1,U}$ and $\cE_{\theta_2,V}$, respectively, where $U,V$ are independent Haar random unitaries on $\cH_B$ and $\theta_1,\theta_2$ are independent uniform phases in $[0,2\pi]$.
\end{enumerate}
The goal is to decide which case holds with success probability at least $2/3$.
\end{problem}

Assume, toward a reduction, that an algorithm $\bA$ estimates $\tr[J_\cE J_\cF]/4d^2$ to additive error $\varepsilon/100$ with probability at least $0.99$ using $T$ queries per device and $\varepsilon\leq 0.01$.
We show that $\bA$ solves Problem~\ref{pro:distinguish problem 2} with constant success probability.
Let $J_{\theta,U}$ denote the Choi operator of the channel $\cE_{\theta,U}$.
For the present construction, the normalized Choi overlap is
\begin{align}
\frac{\tr[J_{\theta_1,U}J_{\theta_2,V}]}{4d^2}
= \frac{1}{4}\left[
1+\left|(1-\varepsilon)e^{i(\theta_2-\theta_1)}+\varepsilon \frac{\tr(U^\dagger V)}{d}\right|^2
\right].
\end{align}
In Case~1 of Problem~\ref{pro:distinguish problem 2}, we have
\begin{align}
f_1
:= \bE_{U} \frac{\tr[J_{\theta_1,U}J_{\theta_2,U}]}{4d^2}
=
\frac{1}{2}\left[
1 - \varepsilon + \varepsilon^2 + \varepsilon(1-\varepsilon) \cos(\theta_1 - \theta_2)
\right].
\end{align}
In Case~2, we have 
\begin{align}
\bE_{U,V}\left|\frac{\tr(U^\dagger V)}{d}\right|^2
=
\frac{1}{d^2}, \quad\Rightarrow\quad
f_2 
:= \bE_{U,V}\frac{\tr[J_{\theta_1,U}J_{\theta_2,V}]}{4d^2}
\approx \frac{1}{4} + \frac{1}{4}(1-\varepsilon)^2. 
\end{align}
Therefore, in Case~1, the overlap fluctuates as $(1-\varepsilon+\varepsilon\cos\gamma)/2$ up to lower-order terms for a uniform phase $\gamma$, while in Case~2 it concentrates near $[1 + (1-\varepsilon)^2]/4$.
Based on this gap, we can design a distinguishing procedure with the estimation algorithm $\bA$, as shown in Lemma~\ref{lem:solve problem2 with algorithm A}.
Therefore, any DSEC algorithm with $T$ queries gives a $T$-query algorithm for solving Problem~\ref{pro:distinguish problem 2}.
It remains to lower bound the query complexity of Problem~\ref{pro:distinguish problem 2}.

\paragraph*{Average Input of Problem~\ref{pro:distinguish problem 2}.}
We analyze the average input of Problem~\ref{pro:distinguish problem 2} first.
In Section~\ref{sec:coherent lower bound}, we show that for any coherent $T$-query algorithm, all adaptive operations and ancillas can be absorbed into a fixed tester acting on the $T$-fold Choi operator of the unknown channel.
Let
\begin{align}
\ket{E_{\theta,U}}:=(E_{\theta,U}\ox I_{A'B'})\ket{\Phi_{2d}}, 
\quad
\ket{F}:=(F\ox I_{A'B'})\ket{\Phi_{2d}}.
\end{align}
where $\ket{\Phi_{2d}}$ is the unnormalized maximally entangled state on $\cH\ox\cH'$, $\cH' = \cH_{A'} \ox \cH_{B'}$, $\cH_{A'} \cong \bC^{2}$, and $\cH_{B'} \cong \bC^d$.
Then
\begin{align}
J_{\theta,U}=\ketbra{E_{\theta,U}}+\ketbra{F}.
\end{align}
Expanding $J_{\theta,U}^{\ox T}$ gives a sum over the subset $R\subseteq[T]$ of query positions in which the Kraus operator $E_{\theta,U}$ appears; the complementary positions contain only the known vector $\ket{F}$.
The phase average removes cross terms with different active counts of $U$, and we have 
\begin{align}
\bE_{\theta}
\bigotimes_{r\in R}\ketbra{E_{\theta,U}}_r
=
\sum_{t=0}^{|R|}
\binom{|R|}{t}(1-\varepsilon)^{|R|-t}\varepsilon^t
\ketbra{\Phi_{U,R,t}},
\label{eq:phase average same dimensional}
\end{align}
where
\begin{align}
\ket{\Phi_{U,R,t}}
:=
\ket{0}_{A', R} \ox \frac{1}{\sqrt{\binom{|R|}{t}}}
\sum_{\substack{S\subseteq R\\ |S|=t}}
\left( \ket{0}_{A, R\setminus S}\ox \ket{1}_{A, S} \right)
\ox \left(U_{B, S}\ox I_{B'}^{\ox |R|} \ket{\Phi_{d}}^{\ox |R|} \right).
\label{eq:definition of phi_U,R,t}
\end{align}
Here $U_{B, S}$ means applying $U$ on the $\cH_B^{\ox T}$ indexed by $S$ and identity on the other systems.
Therefore, the averaged Choi state of one device can be written as 
\begin{align}
\frac{1}{(2d)^T}\bE_\theta J_{\theta, U}^{\ox T} 
&= \frac{1}{2^T}\sum_{R\subseteq[T]}
\sum_{t=0}^{|R|}
\binom{|R|}{t}(1-\varepsilon)^{|R|-t}\varepsilon^t
\rho_{U,R,t}, 
\label{eq:normalized phase averaged input}
\end{align}
where 
\begin{align}
\rho_{U,R,t}
:= \frac{1}{d^T} J_{U,R,t}
= \frac{1}{d^T} \ketbra{\Phi_{U,R,t}}
\ox \ketbra{F}_{\bar{R}}.
\end{align}
Indeed, the binomial weight records the number of informative $U$ branches inside the positions in $R$.

The known $\ket{F}$ does not contribute any query to the unknown Haar unitary.
Thus, after averaging over the random phase, a $T$-query algorithm for Problem~\ref{pro:distinguish problem 2} is a mixture of procedures in which the query number of informative Haar-unitary is distributed as $B(m,\varepsilon)$ for some $m\leq T$, and hence is stochastically dominated by $B(T,\varepsilon)$.

\paragraph*{Lower bound for Estimation.}
We lower bound Problem~\ref{pro:distinguish problem 2} by reducing Problem~\ref{pro:distinguish problem} to it. 
Namely, we show that any algorithm for Problem~\ref{pro:distinguish problem 2} with $T$ channel queries yields a solver for Problem~\ref{pro:distinguish problem} using $O(T\varepsilon)$ queries.
Suppose that there exists an algorithm $\bA'$ solving Problem~\ref{pro:distinguish problem 2} with success probability at least $2/3$ using $T$ channel queries per device.
We show that $\bA'$ gives an algorithm for Problem~\ref{pro:distinguish problem} with only $O(T\varepsilon)$ queries to each Haar random unitary.

Given the Haar-unitary input from Problem~\ref{pro:distinguish problem}, we simulate for $\bA'$ the phase-averaged $T$-query channel input of Problem~\ref{pro:distinguish problem 2}. In the Choi-tester representation, the $T$ channel queries made by $\bA'$ are represented by a tester acting on the $T$-fold Choi operator of the unknown channel, whose normalized components are generated below.
For each device, the simulator samples $R\subseteq[T]$ uniformly, sets $m=|R|$, and then samples $t$ according to the binomial distribution $B(m,\varepsilon)$.
Conditioned on $(R,t)$, the simulator prepares the normalized state $\rho_{U,R,t}$. 
Equivalently, it prepares the active normalized state $d^{-|R|/2}\ket{\Phi_{U,R,t}}$ using exactly $t$ queries to the Haar unitary $U$ by Lemma~\ref{lem:simulation of phi_U,t}, and appends the inactive normalized known branch $d^{-(T-|R|)/2}\ket{F}_{\bar R}$.
The same constructions are performed independently for the second device.
In Case~1 of Problem~\ref{pro:distinguish problem}, the two preparations use the same Haar unitary $U$; in Case~2 they use independent Haar unitaries $U$ and $V$.
Finally, we input the two prepared normalized Choi states into the tester corresponding to the channel-query algorithm $\bA'$ and output the result of $\bA'$.

Eq.~\eqref{eq:normalized phase averaged input} shows that the simulator reproduces exactly the normalized phase-averaged Choi state of Problem~\ref{pro:distinguish problem 2}.
More explicitly, for every tester outcome $\ell$, the probability produced by the simulator equals the probability obtained in Problem~\ref{pro:distinguish problem 2} after averaging over $\theta_1$ and $\theta_2$. Thus the simulator matches the full joint outcome distribution seen by $\bA'$.
Thus, the above procedure would solve Problem~\ref{pro:distinguish problem} with the same success probability as $\bA'$.
Set
\begin{align}
K:=100T\varepsilon+100.
\end{align}
If the sampled value $t$ on the first device or $t'$ on the second device exceeds $K$, the simulator declares failure and outputs an arbitrary case.
Since $t$ and $t'$ are both stochastically dominated by $B(T,\varepsilon)$, a Chernoff bound gives
\begin{align}
\Pr[t>K]+\Pr[t'>K]\leq 2e^{-\Omega(T\varepsilon+1)}<0.01
\end{align}
with the numerical constant in the definition of $K$ chosen sufficiently large.
Therefore the truncated simulator solves Problem~\ref{pro:distinguish problem} with constant success probability and uses at most $K$ queries per device.
By Theorem~\ref{the:coherent lower bound}, solving Problem~\ref{pro:distinguish problem} on the $d$-dimensional register $\cH_B$ requires $\Omega(\sqrt d)$ queries.
Thus
\begin{align}
100T\varepsilon+100=\Omega(\sqrt d).
\end{align}
Combining this with the independent $\Omega(1/\varepsilon^2)$ lower bound, completes the proof of Theorem~\ref{the:lower bound for DSEC}:
\begin{align}
T = \Omega\left(
\max\left\{
\frac{\sqrt d}{\varepsilon},
\frac{1}{\varepsilon^2}
\right\}
\right).
\end{align}


\subsection{Technical Lemma}

\begin{lemma}
\label{lem:choi state cloning}
For quantum states $\rho, \sigma$ in the symmetric subspace of $\cH^{\ox T}\ox\cH^{\ox T}$, where $\cH$ has dimension $d$, we have 
\begin{align}
\bE_{U\sim\mu_H} \tr\left[\rho \cdot J_U^{(T)}\right] \tr\left[\sigma \cdot J_U^{(T)}\right] \geq \frac{(T!)^2}{d(d+1)\cdots(d+2T-1)}.
\end{align}
\end{lemma}

\begin{figure}[t]
\centering
\includegraphics[width=0.8\linewidth]{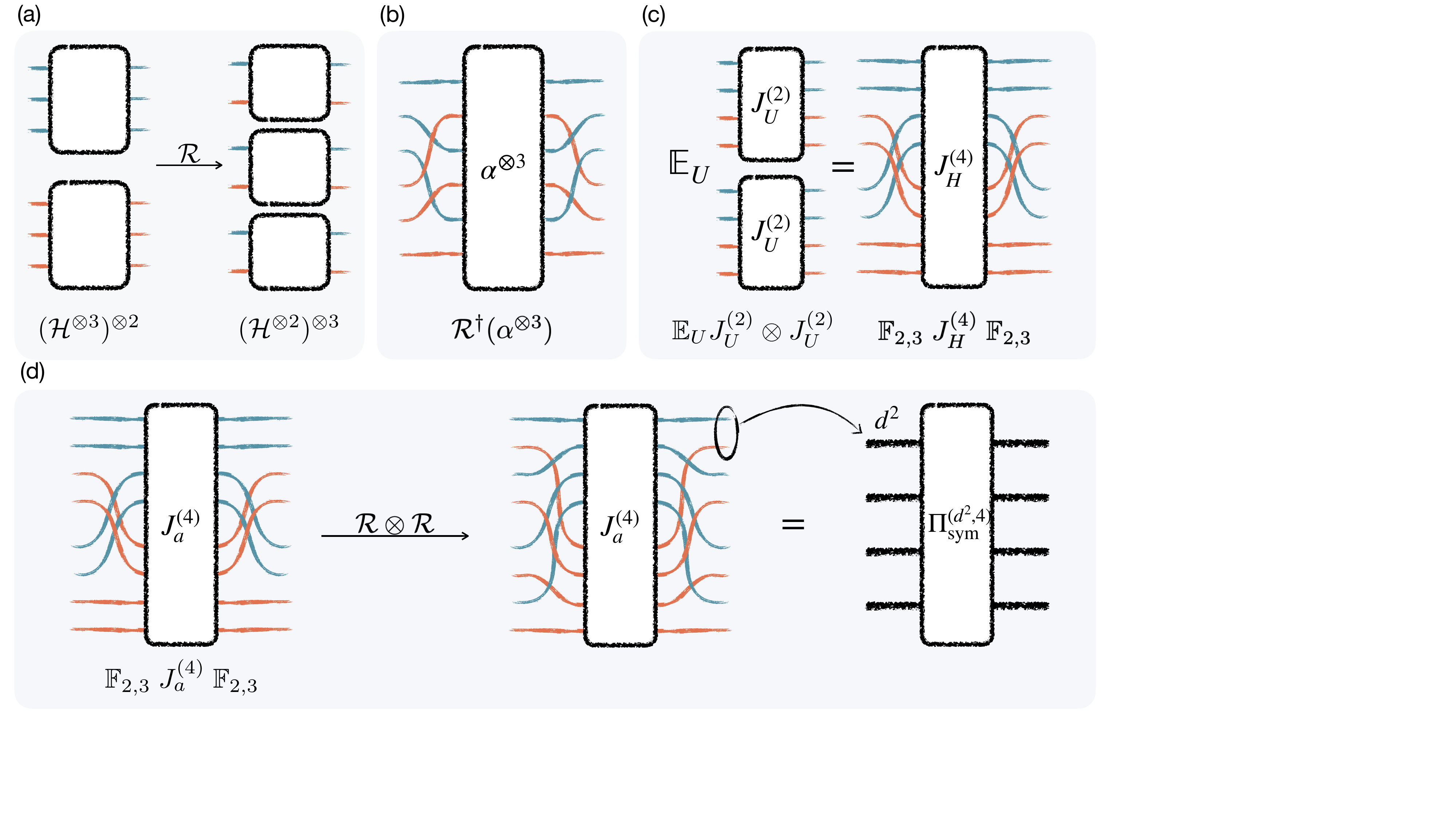}
\caption{\raggedright
The visualization of key proof steps.
(a) The visualization of permutation map $\cR$ (defined in Eq.~\eqref{eq:defintion of map R}) when $T=3$.
(b) The visualization of $\cR^\dagger$ when $T=3$.
(c) The visualization of Eq.~\eqref{eq: convert J_U^T to J_H^2T} when $T = 2$. 
(d) The visualization of Eq.~\eqref{eq: convert J_a^2T to Pi} when $T=2$.
}
\label{fig: proof visualization}
\end{figure}
\begin{proof}[Proof of Lemma~\ref{lem:choi state cloning}]
Fig.~\ref{fig: proof visualization} summarizes the main permutation identities used in the proof.
Now, we begin our proof. 
First, we define the following permutation map, 
\begin{align}
\cR: \cD\left(\cH^{\ox T}\ox \cH^{\ox T}\right) \to \cD\left((\cH^{\ox 2})^{\ox T}\right), \quad \ket{\alpha}^{\ox T} \ox \ket{\beta}^{\ox T} 
\mapsto \left(\ket{\alpha}\ox \ket{\beta}\right)^{\ox T},
\label{eq:defintion of map R}
\end{align} 
which can map a state in the symmetric subspace of $\cH^{\ox T} \ox \cH^{\ox T}$ to the symmetric subspace of $(\cH^{\ox 2})^{\ox T}$. 
This map $\cR$ is visualized in Fig~\ref{fig: proof visualization}(a).
With the definition of map $\cR$ in Eq.~\eqref{eq:defintion of map R}, we have 
\begin{align}
\cR\left(\Pi_{\rm sym}'^{(d, T)}\right) = \Pi_{\rm sym}^{(d^2,T)}.  
\end{align}
Then, we can observe that the space of density matrices on the symmetric subspace of $(\cH^{\ox 2})^{\ox T}$ is spanned by $\alpha^{\ox T}$, where $\alpha\in\cD(\cH^{\ox 2})$~\cite{chiribella2011quantum, harrow2013churcha}. 
Here we write $\alpha:=\ketbra{\alpha}$ for pure state $\ket{\alpha}$.
Thus, to compute the lower bound, it suffices to calculate the following function
\begin{align}
f(\alpha, \beta)
:&= \bE_{U\sim\mu_H} \tr\left[\beta^{\ox T} \cR\left(J_U^{(T)}\right)\right] \tr\left[\alpha^{\ox T} \cR\left(J_U^{(T)}\right)\right] \\
&= \bE_{U\sim\mu_H}\tr\left[\left(\cR^\dagger (\alpha^{\ox T}) \ox \cR^\dagger(\beta^{\ox T})\right)\left(J_U^{(T)} \ox J_U^{(T)}\right)\right],
\end{align}
where $\cR^\dagger$ is the inverse map of $\cR$ and is visualized in Fig.~\ref{fig: proof visualization}(b).
With the permutation operations shown in Fig.~\ref{fig: proof visualization}(c), we can write $f(\alpha, \beta)$ as 
\begin{align}
f(\alpha, \beta) 
&= \tr\left[\bF_{2,3}\left(\cR^\dagger (\alpha^{\ox T}) \ox \cR^\dagger(\beta^{\ox T})\right)\bF_{2,3} \left(\sum_{\pi,\sigma\in\cS_{2T}} {\rm Wg}_{\pi,\sigma} P_\pi\ox P_\sigma\right)\right], \label{eq: convert J_U^T to J_H^2T}
\end{align}
where $\bF_{2,3}$ is acting on the second and third $\cH^{\ox T}$. 
With Lemma~\ref{lem:lemma 2 of chen2023unitarity}, we have 
\begin{align}
f(\alpha, \beta) 
&\geq \frac{1}{d(d+1)\cdots(d+2T-1)} 
\tr\left[\bF_{2,3}\left(\cR^\dagger(\alpha^{\ox T})\ox \cR^\dagger(\beta^{\ox T})\right)\bF_{2,3} \left(\sum_{\pi\in\cS_{2T}} P_\pi\ox P_\pi\right)\right] \notag \\
&= \frac{(2T)!}{d(d+1)\cdots(d+2T-1)} \tr\left[\Pi_{\rm sym}^{(d^2, 2T)} \left(\alpha^{\ox T} \ox \beta^{\ox T} \right)\right] \label{eq: convert J_a^2T to Pi} \\
&= \frac{(2T)!}{d(d+1)\cdots(d+2T-1)} \sum_{s=0}^{T} \frac{\binom{T}{s}^2}{\binom{2T}{T}} x^s 
\geq \frac{(T!)^2}{d(d+1)\cdots(d+2T-1)}, \label{eq:coherent access:mp term}
\end{align}
where $x := \tr[\alpha\beta]$ and Eq.~\eqref{eq: convert J_a^2T to Pi} is visualized in Fig.~\ref{fig: proof visualization}(d). 
The term on the last line is the probability that a random $\pi\in\cS_{2T}$ satisfies $|\pi(\{1,\cdots,T\}) \cap \{1,\cdots,T\}| = s$~\cite{harrow2013churcha}. 
The corresponding physical meaning is that when mapping the first $T$ $\alpha$ terms to $2T$ positions, the overlap term will be $x$, and the other terms are all $1$.
This is equivalent to the probability that when $T$ balls are drawn without replacement from a bucket of $T$ white balls and $T$ black balls, the resulting sample contains $T-s$ white balls and $s$ black balls.
\end{proof}

\begin{lemma}
\label{lem:solve problem2 with algorithm A}
Suppose $\varepsilon\leq 0.01$ and an algorithm $\bA$ estimates the inner product of Choi states to additive error $\varepsilon/100$ with probability at least $0.99$.
Then, we can solve Problem~\ref{pro:distinguish problem 2} with high probability by the following procedure:
\begin{enumerate}
\item Obtain the estimation result $f$ with the estimation algorithm $\bA$. 
\item Output Case~2 if $f \in [c_{\rm ref} - \varepsilon/50, c_{\rm ref} + \varepsilon/50]$, and output Case~1 otherwise, where $c_{\rm ref} := [1 + (1-\varepsilon)^2]/4$.
\end{enumerate}
\end{lemma}
\begin{proof}[Proof of Lemma~\ref{lem:solve problem2 with algorithm A}]
The performance of the estimation algorithm $\bA$ guarantees that if the result $f$ is in Case~$i (i = 1, 2)$, then $|f - f_i| \leq \varepsilon/100$ holds with probability at least $0.99$. 
Then, if $f$ is from Case~1, we have
\begin{align}
&\Pr\left\{f \in [c_{\rm ref} - \varepsilon/50, c_{\rm ref} + \varepsilon/50]\right\} \notag \\
\leq&\; \Pr\left\{f \in [c_{\rm ref} - \varepsilon/50, c_{\rm ref} + \varepsilon/50] \wedge  |f - f_1| \leq \frac{\varepsilon}{100}\right\} + 0.01 \\
\leq&\; \Pr\left\{f_1 \in [c_{\rm ref} - 3\varepsilon/100, c_{\rm ref} + 3\varepsilon/100]\right\} + 0.01 \\
=&\; \Pr_{\gamma\sim[0,2\pi]} \left\{\cos(\gamma) \in [-(0.06+\varepsilon/2)/(1-\varepsilon), (0.06 - \varepsilon/2)/(1-\varepsilon)]\right\} + 0.01.
\end{align}
With the assumption $\varepsilon\leq 0.01$, we have 
\begin{align}
\Pr\left\{f \in [c_{\rm ref} - \varepsilon/50, c_{\rm ref} + \varepsilon/50]\right\}
&\leq \Pr_{\gamma\sim[0,2\pi]} \left\{\cos(\gamma) \in [-0.066, 0.061] \right\} + 0.01
\leq 0.06. 
\end{align}
Thus, the above algorithm outputs Case~1 with probability at least $0.94$ when the input is from Case~1.
On the other hand, if the input is from Case~2, we have
\begin{align}
\Pr\left\{f \in [c_{\rm ref} - \varepsilon/50, c_{\rm ref} + \varepsilon/50]\right\} 
\geq&\; 
\Pr\left\{\left|f_2-c_{\rm ref}\right|\leq\frac{\varepsilon}{100}
\;\wedge\;
\left|f-f_2\right|\leq\frac{\varepsilon}{100}\right\} \notag\\
\geq&\;
\Pr\left\{\left|\frac{\tr(U^\dagger V)}{d}\right|\leq\frac{1}{200}
\;\wedge\;
\left|f-f_2\right|\leq\frac{\varepsilon}{100}\right\} \notag\\
\geq&\;
\Pr\left\{\left|\frac{\tr(U^\dagger V)}{d}\right|\leq\frac{1}{200}\right\}
+\Pr\left\{\left|f-f_2\right|\leq\frac{\varepsilon}{100}\right\}
-1 \notag\\
\geq&\;
\Pr\left\{\left|\frac{\tr(U^\dagger V)}{d}\right|\leq\frac{1}{200}\right\}-0.01.
\end{align}
Therefore, for sufficiently large $d$, the above algorithm outputs Case~2 with high probability when the input is from Case~2. 
\end{proof}

\begin{lemma}
\label{lem:simulation of phi_U,t}
For any subset \(R\subseteq[T]\) and any \(0\leq t\leq |R|\), the normalized state \(d^{-|R|/2}\ket{\Phi_{U,R,t}}\) in Eq.~\eqref{eq:definition of phi_U,R,t} can be prepared using exactly \(t\) queries to the unknown unitary \(U\).
\end{lemma}
\begin{proof}[Proof of Lemma~\ref{lem:simulation of phi_U,t}]
Let $m:=|R|$.
It suffices to prove the claim for the active registers indexed by $R$, since all registers outside $R$ are fixed and independent of $U$.
Relabel the active registers as $1,\ldots,m$ for the duration of the proof.
Let $B_1,\ldots,B_m$ and $B'_1,\ldots,B'_m$ denote the two halves of the maximally entangled state on $\cH_B^{\ox m}\ox\cH_{B'}^{\ox m}$, and let $A_1,\ldots,A_m$ and $A'_1,\ldots,A'_m$ be the output and reference flag registers.
For each subset $S\subseteq[m]$ with $|S|=t$, write
\begin{align}
\ket{f_S}_A:=\ket{0}_{A,\bar S}\ox\ket{1}_{A,S}.
\end{align}
Prepare the reference flag registers in $\ket{0}_{A'}^{\ox m}$, and prepare the output flag registers and the maximally entangled registers in the state
\begin{align}
\ket{0}_{A'}^{\ox m}\ox
\frac{1}{d^{m/2} \sqrt{\binom{m}{t}}}
\sum_{\substack{S\subseteq[m]\\ |S|=t}}
\ket{f_S}_A\ox\ket{\Phi_d}^{\ox m}.
\end{align}
Let $P_S$ be a known permutation unitary on $B_1,\ldots,B_m$ that moves the registers indexed by $S$ to the first $t$ positions.
Equivalently, for $U_{[t]}:=U^{\ox t}\ox I^{\ox(m-t)}$ acting on $B_1,\ldots,B_m$, we have
\begin{align}
P_S^\dagger U_{[t]} P_S = U_S.
\end{align}
Apply the controlled permutation
\begin{align}
C_P:=\sum_{\substack{S\subseteq[m]\\ |S|=t}}\ketbra{f_S}_A\ox P_S
\end{align}
on the $B$ registers, query the unknown unitary $U$ on the first $t$ $B$-registers, and then apply $C_P^\dagger$.
The resulting Choi state is
\begin{align}
&\ket{0}_{A'}^{\ox m}\ox
\frac{1}{\sqrt{\binom{m}{t}}}
\sum_{\substack{S\subseteq[m]\\ |S|=t}}\ket{f_S}_A
\ox \left(P_S^\dagger U_{[t]}P_S\ox I_{B'}^{\ox m}\right)\ket{\Phi_d}^{\ox m} \notag \\
=&\; \frac{1}{\sqrt{\binom{m}{t}}} 
\ket{0}_{A'}^{\ox m}\ox
\sum_{\substack{S\subseteq[m]\\ |S|=t}}
\ket{f_S}_A
\ox \left(U_S\ox I_{B'}^{\ox m}\right)\ket{\Phi_d}^{\ox m} 
= \ket{\Phi_{U,R,t}},
\end{align}
Undoing the temporary relabeling gives the normalized state \(d^{-m/2}\ket{\Phi_{U,R,t}}\); equivalently, the normalized density operator is \(d^{-m}\ketbra{\Phi_{U,R,t}}\).
The only $U$-dependent operation is the application of $U_{[t]}$, so the construction uses precisely $t$ queries to the unitary oracle.
All registers corresponding to the known Kraus operator $F$ in the proof of Theorem~\ref{the:lower bound for DSEC} are independent of $U$ and can be appended by known operations without using any additional query.
\end{proof}

\section{Upper Bound}
\label{sec:upper_bound}

We next give matching upper bounds. The first algorithm uses coherent access and applies to unitary channels; the second uses only incoherent access and applies to arbitrary channels.

\subsection{Coherent Access}

We first consider the case where both unknown channels are unitary, i.e., $\cE(\cdot) = U(\cdot)U^\dagger$ and $\cF(\cdot) = V(\cdot)V^\dagger$.
For this setting, the algorithm is based on symmetric collective measurements defined in Eq.~\eqref{eq:symmetric collective measurement}.
Our algorithm proceeds as follows:

\begin{enumerate}
\item Randomly generate pure state $\ket{\psi}^{\ox T}$ on each quantum device, where $\ket{\psi}$ from a state $4$-design ensemble. 
\item Apply the unitary operators $U^{\ox T}$ and $V^{\ox T}$ on $\ket{\psi}^{\ox T}$. 
\item Measure $(U\ket{\psi})^{\ox T}$ and $(V\ket{\psi})^{\ox T}$ with the POVM $\cM_{T}$. 
Record the measurement results on two devices as $\ket{\phi_A}$ and $\ket{\phi_B}$, and compute $\tilde{f} = |\braket{\phi_A}{\phi_B}|^2$. 
\item Return the unbiased estimator: 
\begin{align}
\tilde{\chi} 
:= \frac{(d+1)(d+T)^2}{T^2 d} \tilde{f}  - \frac{(d+1)(d+2T) + T^2}{T^2 d}. 
\label{eq:coherent estimator}
\end{align}
\end{enumerate}

\begin{algorithm}[t]
\caption{Distributed Similarity Estimation of Unitary Channels with \textbf{Coherent Access}}\label{alg:coherent access}
\KwInput{$T$ queries to unknown \emph{unitary} channels $\cE$ and $\cF$ acting on a $d$-dimensional Hilbert space $\cH$.}
\KwOutput{an estimation of $\tr[J_\cE J_\cF] / d^2$.}

Randomly generate pure states $\ket{\psi}^{\ox T}$ on each device, where $\ket{\psi}$ is sampled from a state $4$-design ensemble.

Apply the unitary channels $\cE^{\ox T}$ and $\cF^{\ox T}$ on $\ket{\psi}^{\ox T}$.

Measure $(\cE(\ketbra{\psi}))^{\ox T}$ with the POVM $\cM_T$ and obtain result $\ket{\phi_A}$.

Measure $(\cF(\ketbra{\psi}))^{\ox T}$ with the POVM $\cM_T$ and obtain result $\ket{\phi_B}$.

Compute $\tilde{f} = |\braket{\phi_A}{\phi_B}|^2$ and return 
\begin{align}
\tilde{\chi} 
:= \frac{(d+1)(d+T)^2}{T^2 d} \tilde{f}  - \frac{(d+1)(d+2T) + T^2}{T^2 d}. 
\end{align}
\end{algorithm}

The algorithm uses shared randomness only to choose the input state; no shared measurement setting is required. Pseudocode is given in Algorithm~\ref{alg:coherent access}.
We summarize the query complexity of this algorithm in the following theorem, which is proven in Appendix~\ref{app:upper bound for coherent access}. 

\begin{theorem}
\label{the:coherent upper bound}
For unitary channels $\cE$ and $\cF$, the expectation and variance of estimator $\tilde{\chi}$ defined in Eq.~\eqref{eq:coherent estimator} are given by 
\begin{align}
\bE\tilde{\chi} = \frac{1}{d^2} \tr[J_\cE J_\cF], \quad 
\Var(\tilde{\chi}) \leq \cO\left(\frac{d}{T^2} + \frac{1}{T} + \frac{1}{d}\right). 
\end{align}
Consequently, the query complexity is $\cO(\max\{1/\varepsilon^2, \sqrt{d}/\varepsilon\})$.
\end{theorem}

Combined with Theorem~\ref{the:lower bound for DSEC}, this coherent algorithm is optimal in both $d$ and $\varepsilon$ for unitary channels.
However, this algorithm fundamentally requires unitary channels since the proof relies on pure output states, see Appendix~\ref{app:upper bound for coherent access} for details. 
Rather than extending the coherent algorithm through random purification techniques~\cite{tang2025conjugate, pelecanos2025mixed, mele2025optimal}, we show next that incoherent access alone already achieves the optimal scaling for general channels.

\subsection{Incoherent Access}

We now present our \emph{main algorithm}, which works for \emph{general} channels under incoherent access. 
The incoherent algorithm is the experimentally most relevant part of the work. 
Each device is used in a simple prepare--evolve--measure cycle: 
prepare a local input state, apply the unknown channel once, and measure immediately. 
No entangled input across devices, quantum memory, collective coherent control, or adaptive feedback is required.
The concrete procedure is as follows.

Its basic subroutine, $\texttt{CIPE}(m,Q,\ket{\alpha},\ket{\beta},\cE,\cF)$, applies $\cE$ and $\cF$ to input states $\ket{\alpha}$ and $\ket{\beta}$ on the two devices, respectively, and then measures $m$ output copies from each device in the common basis $\{Q^\dagger\ketbra{a}Q\}_a$, where $Q$ is a random unitary from a unitary $4$-design ensemble and $\{\ketbra{a}\}_a$ is the computational basis.
Record the measurement outcomes on two devices as $\{a_i\}_{i=1}^m$ and $\{b_j\}_{j=1}^m$, and return the classical inner product estimator
\begin{align}
\tilde{c} = \frac{1}{m^2} \sum_{i,j=1}^m \bm{1} (a_i, b_j),
\end{align}
where $\bm{1}(a, b) = 1$ if $a=b$; otherwise $0$.
This $\texttt{CIPE}$ algorithm is described in pseudocode in Algorithm~\ref{alg:classical inner product}.

Then, our main algorithm repeats the following procedure $T$ times independently:
\begin{enumerate}
\item Run $\texttt{CIPE}(m, Q, \ket{\psi}, \ket{\psi}, \cE, \cF)$ to obtain $\tilde{g}$, where $\ket{\psi}$ is sampled from a state $4$-design ensemble and $Q$ is sampled from a unitary $4$-design ensemble.
\item Run $\texttt{CIPE}(m, Q', \ket{\psi'}, \ket{\phi'}, \cE, \cF)$ to obtain $\tilde{o}$, where $\ket{\psi'}$ and $\ket{\phi'}$ are independently sampled from a state $4$-design ensemble and $Q'$ is sampled from a unitary $4$-design ensemble.
\item Compute the unbiased estimator:
\begin{align}
\tilde{\omega} = \frac{(d+1)^2}{d}\tilde{g} - (d+1)\tilde{o} - \frac{1}{d}. 
\label{eq:incoherent estimator}
\end{align}
\end{enumerate}

\begin{algorithm}[t]
\caption{Classical Inner Product Estimation: $\texttt{CIPE}(m, Q, \ket{\alpha}, \ket{\beta}, \cE, \cF)$}\label{alg:classical inner product}
\KwInput{number of measurements $m$, \\
\hspace{0.88cm} a random unitary $Q$ from a unitary $4$-design ensemble, \\
\hspace{0.88cm} two random pure states $\ket{\alpha}$ and $\ket{\beta}$ from a state $4$-design ensemble, \\
\hspace{0.88cm} unknown channels $\cE$ and $\cF$ acting on a $d$-dimensional Hilbert space $\cH$.}
\KwOutput{an estimator $\tilde{c}$ for the classical inner product.}

On the first device, prepare the state $\ket{\alpha}$ and apply the unknown channel $\cE$.
Measure $m$ copies of $\cE(\ketbra{\alpha})$ in the basis $\{Q^\dagger\ketbra{a}Q\}_a$ and obtain $A = \{a_i\}_{i=1}^m$.

On the second device, prepare the state $\ket{\beta}$ and apply the unknown channel $\cF$.
Measure $m$ copies of $\cF(\ketbra{\beta})$ in the basis $\{Q^\dagger\ketbra{b}Q\}_b$ and obtain $B = \{b_i\}_{i=1}^m$.

Compute $\tilde{c} = \sum_{i, j=1}^m \bm{1}(a_i, b_j) / m^2$, 
where $\bm{1}(a, b) = 1$ if $a=b$; otherwise $0$.

Return $\tilde{c}$.
\end{algorithm}

\begin{algorithm}[t]
\caption{Distributed Similarity Estimation of General Channels with \textbf{Incoherent Access}}\label{alg:incoherent access}
\KwInput{number of SPAM settings $T$, \\
\hspace{0.88cm} number of measurements for each SPAM setting $m$, \\
\hspace{0.88cm} $2Tm$ queries to unknown channels $\cE$ and $\cF$ acting on a $d$-dimensional Hilbert space $\cH$.}
\KwOutput{an estimate of $\tr[J_\cE J_\cF] / d^2$.}
\For{$t = 1, \cdots, T$}{



Run Algorithm~\ref{alg:classical inner product} $\texttt{CIPE}(m, Q_t, \ket{\psi_t}, \ket{\psi_t}, \cE, \cF)$ to obtain $\tilde{g}_t$,
where $\ket{\psi_t}$ is sampled from a state $4$-design ensemble and $Q_t$ is sampled from a unitary $4$-design ensemble.

Run Algorithm~\ref{alg:classical inner product} $\texttt{CIPE}(m, Q'_t, \ket{\psi_t'}, \ket{\phi_t'}, \cE, \cF)$ to obtain $\tilde{o}_t$,
where $\ket{\psi_t'}$ and $\ket{\phi_t'}$ are two independent random pure states sampled from a state $4$-design ensemble and $Q'_t$ is sampled from a unitary $4$-design ensemble.

Compute
\begin{align}
\tilde{\omega}_t 
= \frac{(d+1)^2}{d}\tilde{g}_t - (d+1)\tilde{o}_t - \frac{1}{d}. 
\end{align}
}
Return ${\omega} := \sum_t \tilde{\omega}_t / T$.
\end{algorithm}

This incoherent algorithm is summarized in Algorithm~\ref{alg:incoherent access}.
The first $\texttt{CIPE}$ quantifies the similarity between $\cE$ and $\cF$ when they act on the same input state, while the second $\texttt{CIPE}$ characterizes the behavior when they act on different input states.
If one of the channels is unital, for example a unitary or Pauli channel, the term involving $\tilde{o}$ is fixed and $\tr[J_\cE J_\cF]/d^2$ can be estimated directly by $(d+1)^2\tilde{g}/d-(d+2)/d$.
That is, we only need to run $\texttt{CIPE}$ once in each round.
The corresponding query complexity is shown in the following theorem, which is provided in Appendix~\ref{app:upper bound for incoherent access}.

\begin{theorem}
\label{the:incoherent upper bound}
For general channels $\cE$ and $\cF$, the expectation and variance of estimator $\tilde{\omega}$ defined in Eq.~\eqref{eq:incoherent estimator} are given by 
\begin{align}
\bE\tilde{\omega} = \frac{1}{d^2} \tr[J_\cE J_\cF], \quad 
\Var(\tilde{\omega}) \leq \cO\left(\frac{d}{m^2} + \frac{1}{m} + \frac{1}{d}\right). 
\end{align}
Consequently, the query complexity is $\cO(\max\{1/\varepsilon^2, \sqrt{d}/\varepsilon\})$.
\end{theorem}

Together with Theorem~\ref{the:lower bound for DSEC}, this proves the exact DSEC query complexity $\Theta(\max\{\sqrt d/\varepsilon,1/\varepsilon^2\})$ for general channels.
This incoherent algorithm enjoys several notable advantages.
First, it is \emph{ancilla-free} and \emph{non-adaptive}, making it immediately implementable on near-term devices. 
Second, because it does not assume unitarity, the same primitive can be used for unitarity estimation~\cite{chen2023unitarity}.
Third, despite incoherent access being strictly weaker than coherent access in general, both algorithms achieve the same query complexity for DSEC.
This surprising phenomenon parallels the result for DIPE~\cite{anshu2022distributed}, which shows single-copy measurements can achieve the same sample complexity as symmetric collective measurements.

The key to these advantages lies in the different types of shared randomness employed.
The coherent algorithm shares only the randomness for state preparation, whereas the incoherent algorithm additionally shares measurement settings. 
This shared measurement randomness effectively boosts the performance of the incoherent algorithm, enabling it to achieve the optimal query complexity for general channels.
This finding highlights that shared randomness, rather than coherent access or collective measurements, is the key resource for efficient distributed learning.
On the other hand, the coherent algorithm has its own advantage: less shared randomness translates to less communication overhead. 
Thus, there exists a trade-off between classical communication resources and the power of the learning algorithm.

\section{Comparison with Independent Classical Shadow}
\label{sec:comparison_classical_shadow}

To isolate the effect of shared randomness, we compare with independent channel classical shadows~\cite{kunjummen2023shadow, levy2024classical, li2025nearly}, where the two devices sample their SPAM settings independently. We use the construction of~\cite{li2025nearly} as the baseline because it is near-optimal for unitary channels and has an incoherent variant for unital channels.
This comparison is useful because independent shadows and our algorithms use very similar local experimental primitives. 
The essential difference is whether the random SPAM settings at the two devices are correlated.

\begin{algorithm}[t]
\caption{Distributed Similarity Estimation of Unitary Channels with \textbf{CSEU}}\label{alg:dseu with independent classical shadow}
\KwInput{number of SPAM settings $T$, \\
\hspace{0.8cm} the size of symmetric collective measurement $s$, \\
\hspace{0.8cm} $Ts$ queries to unknown unitary channels $\cE$ and $\cF$ acting on a $d$-dimensional Hilbert space $\cH$.}
\KwOutput{an estimation of $\tr[J_\cE J_\cF] / d^2$.}
\For{$t = 1, \cdots, T$}{

Randomly generate pure states $\ket{\psi_{t, A}}^{\ox s}$ sampled from a state $4$-design ensemble and apply the unitary channels $\cE^{\ox s}$. 

Measure $(\cE(\ketbra{\psi_{t, A}}))^{\ox s}$ with the POVM $\cM_s$ and obtain the result $\ket{\phi_{t, A}}$. 

Randomly generate pure states $\ket{\psi_{t, B}}^{\ox s}$ sampled from a state $4$-design ensemble and apply the unitary channels $\cF^{\ox s}$. 

Measure $(\cF(\ketbra{\psi_{t, B}}))^{\ox s}$ with the POVM $\cM_s$ and obtain the result $\ket{\phi_{t, B}}$.

Compute and store the classical snapshots with Eq.~\eqref{eq:classical snapshot}:
\begin{align}
\tilde{X}_t 
= {\rm Snap}\left(\phi_{t,A}, \psi_{t,A}, s\right), \quad 
\tilde{Y}_t 
= {\rm Snap}\left(\phi_{t,B}, \psi_{t,B}, s\right). 
\end{align}

}
Return
\begin{align}
\tilde{\gamma} = \frac{1}{T^2 d^2} \sum_{i,j=1}^{T} \tr\left[\tilde{X}_i \tilde{Y}_j\right]. 
\end{align}
\end{algorithm}

The independent classical shadow approach for DSEC proceeds as follows (see Algorithm~\ref{alg:dseu with independent classical shadow} for details).
Suppose $\cE(\cdot) = U(\cdot)U^\dagger$ and $\cF(\cdot) = V(\cdot)V^\dagger$ are unitary channels.
On each device \emph{independently}, we randomly select a state $\ket{\psi}$ from a state $4$-design ensemble, apply $U^{\ox s}$ (or $V^{\ox s}$) to $\ket{\psi}^{\ox s}$, and perform the symmetric collective measurement $\cM_s$. 
Repeating this procedure $T$ times on each device yields two independent sets of classical snapshots, $\{\tilde{X}_t\}$ and $\{\tilde{Y}_t\}$.
Each classical snapshot is defined as 
\begin{align}
\tilde{X} = {\rm Snap}(\psi,\phi, s)
:= \frac{d(d+1)(d+s){\phi}\ox {\psi}^T - (d+1+s)(I\ox I)}{s}, 
\label{eq:classical snapshot}
\end{align}
similar to $\tilde{Y}$, 
where $\ket{\psi}$ is sampled from a state $4$-design ensemble, and $\ket{\phi}$ is the measurement result of the POVM $\cM_s$.
Here and below, $\phi:=\ketbra{\phi}$ and $\psi:=\ketbra{\psi}$.
The inner product is then estimated via
\begin{align}
\tilde{\gamma} 
= \frac{1}{T^2 d^2}\sum_{i,j=1}^T \tr[\tilde{X}_i \tilde{Y}_j]. 
\label{eq:cseu estimator}
\end{align}

The only resource difference from our randomized-measurement algorithms is the absence of shared randomness: the SPAM settings are sampled independently at the two devices.
The following proposition, proven in Appendix~\ref{app:dseu with cseu}, establishes the query complexity.

\begin{proposition}
\label{the:independent classical shadow upper bound}
For unitary channels $\cE$ and $\cF$, the expectation and variance of estimator $\tilde{\gamma}$ defined in Eq.~\eqref{eq:cseu estimator} are given by
\begin{align}
\bE\tilde{\gamma} = \frac{1}{d^2} \tr[J_\cE J_\cF], \quad 
\Var(\tilde{\gamma}) \leq \cO\left(\frac{d^2}{T^2s^2} + \frac{1}{T}\right). 
\end{align}
Consequently, the query complexity is $\cO(\max\{d/\varepsilon, 1/\varepsilon^2\})$.
For $s=1$, the result extends to unital channels.
\end{proposition}

This result reveals that independent classical shadow requires $\cO(\max\{d/\varepsilon, 1/\varepsilon^2\})$ queries for DSEC, regardless of the number of queries per snapshot. 
Notably, the lower bound in Theorem~\ref{the:coherent lower bound} also applies to independent classical shadow based algorithms, as they fall within the same learning access.
By contrast, shared SPAM randomness reduces the dimension-dependent term from $d/\varepsilon$ to $\sqrt d/\varepsilon$, giving a quadratic improvement.

\section{Discussion}
\label{sec:discussion}

In this work, we characterized the optimal query complexity 
of distributed similarity estimation of quantum channels (DSEC). 
We proved that $\Theta(\max\{\sqrt{d}/\varepsilon, 1/\varepsilon^2\})$ queries are both necessary and sufficient for this task. 
The lower bound holds even in the strongest setting (multi-round LOCC, coherent access, and adaptive strategies), while the upper bound is achieved in the weakest setting (incoherent access, ancilla-free, and non-adaptive).
For the upper bound, we proposed two randomized measurement-based algorithms that effectively utilize shared randomness between devices.
Both achieve optimal query complexity, with the incoherent algorithm applicable to general channels.
We also compared our algorithms with independent classical shadow, demonstrating a square-root advantage in query complexity.
This advantage arises from the shared randomness employed in our algorithms, which allows the two devices to coordinate SPAM settings. 

Several directions remain open for future research.
First, existing cross-platform verification algorithms typically assume identical operations on both platforms, which may fail in practice; developing robust verification algorithms that relax this assumption is an important goal.
Second, since classical shadow methods utilize information about sampled unitaries, investigating whether this information can be better exploited to reduce query complexity is an intriguing question.

\paragraph*{Note added.} 
While completing this manuscript, we became aware of a related work by Ananth \emph{et al.}~\cite{ananth2026limitations}, which also studied the lower bound for completing Problem~\ref{pro:distinguish problem} with coherent access.
While both works identify the $\Omega(\sqrt{d})$ barrier, the proof techniques are distinct and complementary.
Our approach relies on Le Cam's two-point method and connections to optimal quantum cloning~\cite{harrow2013churcha}, 
whereas Ananth \emph{et al.} utilize linear programming. 
This diversity of techniques enriches our understanding of the fundamental hardness of this problem.
In addition, our result proves the full $\varepsilon$-dependent lower bound for DSEC.

\paragraph{Acknowledgements.} 
We thank Kean Chen for identifying an error in the original incoherent algorithm and for helpful suggestions on the revised algorithm.
This work was supported by 
the National Natural Science Foundation of China (Grant Nos. 62471126 and 12504584), 
the Jiangsu Frontier Technology Research and Development Plan (Grant No. BF2025066), 
the Fundamental Research Funds for the Central Universities (Grant No. 2242022k60001), 
the SEU Innovation Capability Enhancement Plan for Doctoral Students (Grant No. CXJH\_SEU 24083), and 
the Quantum Science and Technology-National Science and Technology Major Project (Grant No. 2025ZD0300300). 


\appendix
\addtocontents{toc}{\protect\setcounter{tocdepth}{1}} 

\section{Detailed proofs}

\subsection{Proof of Lemma~\ref{lem:upper bound for two terms in incoherent}}
\label{app:upper bound for two terms in incoherent lower bound}

Before the proof, we introduce some useful notation.
Each root-to-leaf path in the learning tree $\cT$ is uniquely specified by a sequence of vertices $v_0, v_1, \cdots, v_T$, where $v_0 = r$ is the root and $v_T = \ell$ is a leaf. 
Since $\ell$ uniquely determines the shortest path back to the root, specifying the leaf $\ell$ is equivalent to specifying the entire path $v_0 = r, v_1, \cdots, v_{T-1}, v_T = \ell$. 
Therefore, the probabilities of reaching leaf $\ell$ can be expressed as
\begin{align}
p^\cD(\ell) 
&= \prod_{t=1}^T w_t d^2d'^2 \bra{\psi_{t,1}} \left(\cD\ox\cI_{\rm aux}\right) (\ketbra{\phi_{t,1}}) \ket{\psi_{t,1}} \bra{\psi_{t,2}} \left(\cD \ox \cI_{\rm aux}\right) (\ketbra{\phi_{t,2}}) \ket{\psi_{t,2}}, \\
p^{U,V}(\ell) 
&= \prod_{t=1}^T w_t d^2d'^2 \bra{\psi_{t,1}} \left(\cU\ox\cI_{\rm aux}\right) (\ketbra{\phi_{t,1}}) \ket{\psi_{t,1}} \bra{\psi_{t,2}} \left(\cV \ox \cI_{\rm aux}\right) (\ketbra{\phi_{t,2}}) \ket{\psi_{t,2}}. 
\end{align}
An analogous expression holds for $p^{U,U}(\ell)$, with the second occurrence of $\cV$ replaced by $\cU$.
As shown in~\cite{chen2023unitarity}, we can decompose the $\ket{\phi_{t, i}}$ and $\ket{\psi_{t, i}}$ as 
\begin{align}
\ket{\phi_{t, i}} = \sum_{j=0}^{d'-1} \ket{\phi_{t, i, j}}\ox \ket{j}, \quad 
\ket{\psi_{t, i}} = \sum_{j=0}^{d'-1} \ket{\psi_{t, i, j}}\ox \ket{j}, 
\quad i = 1, 2, 
\end{align}
where the first tensor factor corresponds to the main system $\cH_{\rm main}$ and the second to the auxiliary system $\cH_{\rm aux}$.
Note that the vectors $\ket{\phi_{t,i,j}}$ and $\ket{\psi_{t,i,j}}$ are not required to be normalized.
For convenience, let $\bm{j} := (\bm{j}_1, \cdots, \bm{j}_T)$ range over all sequences of length $T$, with each $j_t \in \{0,\ldots,d'-1\}$.
Define $W_\ell := \prod w_t$, and 
\begin{align}
\ket{\Phi_{\ell, i, \bm{j}}} 
:= \bigotimes_{t=1}^T \ket{\phi_{t,i,\bm{j}_t}}, \quad 
\ket{\Psi_{\ell, i, \bm{j}}} 
:= \bigotimes_{t=1}^T \ket{\psi_{t,i,\bm{j}_t}}, \quad 
i = 1, 2, 
\end{align}
we have 
\begin{align}
\bE_{U, V\sim\mu_H} p^{U,V}(\ell) 
&= \bE_{U, V\sim\mu_H} \sum_{\bm{j}, \bm{k}, \bm{x}, \bm{y}} \prod_{t=1}^T 
\left[ w_t d^2d'^2 \times \bra{\psi_{t,1,\bm{j}_t}} \cU (\ket{\phi_{t,1,\bm{j}_t}} \! \bra{\phi_{t,1,\bm{k}_t}}) \ket{\psi_{t,1,\bm{k}_t}} \right. \notag \\
&\quad\quad\quad\quad\quad\quad\quad\quad\quad 
\left. \times \bra{\psi_{t,2,\bm{x}_t}} \cV (\ket{\phi_{t,2,\bm{x}_t}}\! \bra{\phi_{t,2,\bm{y}_t}}) \ket{\psi_{t,2,\bm{y}_t}} \right]\\
&= (dd')^{2T} W_\ell \sum_{\bm{j}, \bm{k}, \bm{x}, \bm{y}} \bE_{U,V\sim\mu_H} 
\left[ \bra{\Psi_{\ell ,1,\bm{j}}} U^{\ox T}\ket{\Phi_{\ell,1,\bm{j}}} \! 
\bra{\Phi_{\ell,1,\bm{k}}} U^{\dagger\ox T} \ket{\Psi_{\ell,1,\bm{k}}}\right. \notag \\
&\quad\quad\quad\quad\quad\quad\quad\quad\quad 
\left. \times \bra{\Psi_{\ell,2,\bm{x}}} V^{\ox T} \ket{\Phi_{\ell,2,\bm{x}}}\! 
\bra{\Phi_{\ell,2,\bm{y}}} V^{\dagger\ox T} \ket{\Psi_{\ell,2,\bm{y}}} \right] \notag \\
&= (dd')^{2T} W_\ell \prod_{i=1}^2 \left( \sum_{\bm{j}, \bm{k}}  \bra{\Psi_{\ell, i, \bm{j}}\ox \Phi_{\ell, i, \bm{j}}^*} J_H^{(T)} \ket{\Psi_{\ell, i, \bm{k}}\ox \Phi_{\ell, i, \bm{k}}^*} \right), 
\end{align}
where $J_H^{(T)}$ is the Choi operator of Haar random channel defined in Eq.~\eqref{eq:choi state of haar random channel}. 
Similarly, we have 
\begin{align}
\bE_{U\sim \mu_H} p^{U,U}(\ell) 
&= (dd')^{2T} W_\ell \sum_{\bm{j}, \bm{k}} \bra{\Psi_{\ell, \bm{j}}\ox \Phi_{\ell, \bm{j}}^{*}} J_H^{(2T)} \ket{\Psi_{\ell, \bm{k}}\ox \Phi_{\ell, \bm{k}}^{*}}, \\
p^{\cD}(\ell) 
&= d'^{2T} W_\ell \sum_{\bm{j}, \bm{k}} \braket{\Psi_{\ell, \bm{j}}}{\Psi_{\ell, \bm{k}}} \braket{\Phi_{\ell, \bm{k}}}{\Phi_{\ell, \bm{j}}}, 
\end{align}
where 
\begin{align}
\ket{\Phi_{\ell,\bm{j}}} 
:= \ket{\Phi_{\ell,1,\bm{j}'}} \ox \ket{\Phi_{\ell,2,\bm{j}''}}, \quad 
\ket{\Psi_{\ell,\bm{k}}} 
:= \ket{\Psi_{\ell,1,\bm{k}'}}\ox \ket{\Psi_{\ell,2,\bm{k}''}}, 
\end{align}
$\bm{j}' = (\bm{j}_1, \cdots, \bm{j}_{T})$, $\bm{j}'' = (\bm{j}_{T+1}, \cdots, \bm{j}_{2T})$, similar to $\bm{k}'$ and $\bm{k}''$. 
Now, we are ready to prove the upper bound for two terms defined in Eq.~\eqref{eq:incoherent lower bound two terms} as follows. 

\begin{enumerate}
\item For the first term, we use the standard variational form of total variation distance:
\begin{align}
\norm{\bE_{U\sim \mu_H} p^{U, U} - p^\cD}{{\rm TV}}
&= \frac{1}{2}\sum_{\ell} \left\vert \bE_{U\sim \mu_H} p^{U,U}(\ell) - p^{\cD}(\ell)\right\vert \\
&= \sum_{\ell: p^{\cD}(\ell) \geq \bE_{U\sim \mu_H} p^{U,U}(\ell)} p^{\cD}(\ell) \left[1 - \frac{\bE_{U\sim \mu_H} p^{U,U}(\ell)}{p^{\cD}(\ell)}\right]. 
\end{align}
Therefore, we can focus on the lower bound for $\bE_{U\sim\mu_H} p^{U,U}(\ell) / p^\cD(\ell)$ and have 
\begin{align}
\bE_{U\sim \mu_H} p^{U,U}(\ell) 
&= (dd')^{2T} W_\ell \sum_{\bm{j}, \bm{k}} \bra{\Psi_{\ell, \bm{j}}\ox \Phi_{\ell, \bm{j}}^{*}} J_H^{(2T)} \ket{\Psi_{\ell, \bm{k}}\ox \Phi_{\ell, \bm{k}}^{*}} \\
&\geq \frac{(dd')^{2T} W_\ell}{d(d+1)\cdots (d+2T-1)} 
\sum_{\bm{j}, \bm{k}} \bra{\Psi_{\ell, \bm{j}}\ox \Phi_{\ell, \bm{j}}^{*}} \left(\sum_{\sigma\in\cS_{2T}} P_\sigma\ox P_\sigma\right) \ket{\Psi_{\ell, \bm{k}}\ox \Phi_{\ell, \bm{k}}^{*}} \tag*{Lemma~\ref{lem:lemma 2 of chen2023unitarity}} \\
&= \frac{(dd')^{2T} W_\ell}{d(d+1)\cdots (d+2T-1)} \sum_{\bm{j}, \bm{k}} 
\bra{\Psi_{\ell, \bm{j}}} \left(\sum_{\sigma\in\cS_{2T}} P_\sigma \ket{\Psi_{\ell,\bm{k}}}\!\bra{\Phi_{\ell, \bm{k}}} P_\sigma^\dagger \right) \ket{\Phi_{\ell, \bm{j}}}. 
\end{align}
Define $\cP_\pi: |X_1\rrangle\ox|X_2\rrangle\ox \cdots\ox |X_{k}\rrangle \mapsto |X_{\pi^{-1}(1)}\rrangle\ox|X_{\pi^{-1}(2)}\rrangle\ox \cdots\ox |X_{\pi^{-1}(k)}\rrangle$ for $\pi \in \cS_k$~\cite{chen2023unitarity}, then, we have 
\begin{align}
\bE_{U\sim \mu_H} p^{U,U}(\ell) 
&\geq \frac{(dd')^{2T} W_\ell}{d(d+1)\cdots (d+2T-1)} \sum_{\bm{j}, \bm{k}} 
\llangle \ket{\Psi_{\ell, \bm{j}}}\!\bra{\Phi_{\ell, \bm{j}}} | \left(\sum_{\sigma\in\cS_{2T}} \cP_\sigma \right) | \ket{\Psi_{\ell, \bm{k}}}\!\bra{\Phi_{\ell, \bm{k}}} \rrangle \\
&\geq \frac{(dd')^{2T} W_\ell}{d(d+1)\cdots (d+2T-1)} \sum_{\bm{j}, \bm{k}} \braket{\Psi_{\ell, \bm{j}}}{\Psi_{\ell, \bm{k}}} \braket{\Phi_{\ell, \bm{k}}}{\Phi_{\ell, \bm{j}}} 
\tag*{Lemma~\ref{lem:tensor product and permutation operators}}. 
\end{align}
Thus, for each leaf $\ell$, we have 
\begin{align}
\frac{\bE_{U\sim\mu_H} p^{U,U}(\ell)}{p^\cD(\ell)} 
&\geq \frac{d^{2T}}{d(d+1)\cdots (d+2T-1)} 
= \prod_{t=1}^{2T} \left(1 + \frac{t-1}{d}\right)^{-1} \\
&\geq \prod_{t=1}^{2T} \left(1 - \frac{t-1}{d}\right) 
\geq \left(1 - \frac{2T}{d}\right)^{2T} \geq 1 - \frac{4T^2}{d}. 
\end{align}
Therefore, we have 
\begin{align}
\norm{\bE_{U\sim \mu_H} p^{U, U} - p^\cD}{{\rm TV}}
&= \sum_{\ell: p^{\cD}(\ell) \geq \bE_{U\sim \mu_H} p^{U,U}(\ell)} p^{\cD}(\ell) \left[1 - \frac{\bE_{U\sim \mu_H} p^{U,U}(\ell)}{p^{\cD}(\ell)}\right] 
\leq \frac{4 T^2}{d}. 
\end{align}
\item Likewise, we have 
\begin{align}
\bE_{U, V\sim\mu_H} p^{U,V}(\ell) 
&= (dd')^{2T} W_\ell \prod_{i=1}^2 \left( \sum_{\bm{j}, \bm{k}}  \bra{\Psi_{\ell, i, \bm{j}}\ox \Phi_{\ell, i, \bm{j}}^*} 
J_H^{(T)} \ket{\Psi_{\ell, i, \bm{k}}\ox \Phi_{\ell, i, \bm{k}}^*} \right) \\
&\geq \left(\frac{(dd')^T}{d(d+1)\cdots(d+T-1)}\right)^2 W_\ell \prod_{i=1}^2 
\left(\sum_{\bm{j}, \bm{k}} \braket{\Psi_{\ell,i,\bm{j}}}{\Psi_{\ell,i,\bm{k}}} \braket{\Phi_{\ell,i,\bm{k}}}{\Phi_{\ell,i,\bm{j}}} \right) \\
&= \left(\frac{(dd')^T}{d(d+1)\cdots(d+T-1)}\right)^2 W_\ell \sum_{\bm{j}, \bm{k}} \braket{\Psi_{\ell, \bm{j}}}{\Psi_{\ell, \bm{k}}} \braket{\Phi_{\ell, \bm{k}}}{\Phi_{\ell, \bm{j}}} 
\end{align}
Thus, for each leaf $\ell$, we have 
\begin{align}
\frac{\bE_{U, V\sim\mu_H} p^{U,V}(\ell)}{p^\cD(\ell)} 
&\geq \left(\frac{d^T}{d(d+1)\cdots(d+T-1)}\right)^2 
\geq \left(1 - \frac{T^2}{d}\right)^2. 
\end{align}
Therefore, we have 
\begin{align}
\norm{p^\cD - \bE_{U,V\sim \mu_H} p^{U, V}}{{\rm TV}}
\leq 1 - \left(1 - \frac{T^2}{d}\right)^2
\leq \frac{2T^2}{d}.
\end{align}
\end{enumerate}

\begin{lemma}[Lemma 5.12 in~\cite{chen2022exponential} and Lemma 3 in~\cite{chen2023unitarity}]
\label{lem:tensor product and permutation operators}
For any product vector $\ket{X} = \bigotimes_{t=1}^k \ket{x_t}$, we have 
\begin{align}
\bra{X} \sum_{\sigma\in\cS_k} P_\sigma \ket{X} 
\geq \braket{X}{X}. 
\end{align}
\end{lemma}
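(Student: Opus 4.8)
The plan is to reduce the inequality to the permanental analogue of Hadamard's inequality for a Gram matrix. First I would expand the left-hand side using the explicit action of the permutation operators. Writing $\ket{X} = \bigotimes_{t=1}^k \ket{x_t}$, the defining action $P_\sigma\ket{X} = \bigotimes_{t=1}^k \ket{x_{\sigma^{-1}(t)}}$ turns each term into a product of overlaps,
\begin{align}
\bra{X} P_\sigma \ket{X} = \prod_{t=1}^k \braket{x_t}{x_{\sigma^{-1}(t)}}.
\end{align}
Summing over all $\sigma\in\cS_k$ and reindexing by $\tau = \sigma^{-1}$, I would recognize the result as the permanent of the Gram matrix $G$ of the vectors, $G_{st} := \braket{x_s}{x_t}$:
\begin{align}
\bra{X} \sum_{\sigma\in\cS_k} P_\sigma \ket{X} = \sum_{\tau\in\cS_k}\prod_{t=1}^k G_{t,\tau(t)} = \operatorname{perm}(G).
\end{align}
The right-hand side is $\braket{X}{X} = \prod_{t=1}^k \braket{x_t}{x_t} = \prod_{t=1}^k G_{tt}$, the product of the diagonal entries, so the claim is exactly $\operatorname{perm}(G)\ge \prod_t G_{tt}$.

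The crucial structural fact is that $G$ is Hermitian and positive semidefinite, being a Gram matrix. The claim is therefore precisely Marcus's permanental-Hadamard inequality, which is the content of the cited Lemma~5.12 of~\cite{chen2022exponential} and Lemma~3 of~\cite{chen2023unitarity}. Note that positivity alone is not enough: in operator terms $\sum_\sigma P_\sigma = k!\,\Pi_{\rm sym}^{(d,k)}\succeq 0$, so the whole sum is manifestly nonnegative, but we need the sharper statement that the identity permutation $P_e$ already contributes $\braket{X}{X}$ while \emph{all remaining cross terms contribute a nonnegative amount} on a product input.

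To make the argument self-contained I would prove $\operatorname{perm}(G)\ge\prod_t G_{tt}$ directly. After rescaling each $\ket{x_t}$ to a unit vector, which multiplies both sides by the same factor $\prod_t\braket{x_t}{x_t}$, it suffices to treat the case $G_{tt}=1$ and show $\operatorname{perm}(G)\ge 1$. One clean route is the complex Gaussian/Wick representation: letting $g$ be a centered complex Gaussian vector with covariance $G$, one has $\operatorname{perm}(G) = \bE\big[\prod_t |g_t|^2\big]$ while $\prod_t G_{tt} = \prod_t \bE|g_t|^2$, so the inequality becomes a positive-association statement for the squared moduli of jointly Gaussian variables. Alternatively, an induction on $k$, splitting $\ket{x_k}$ into its component in $\operatorname{span}\{\ket{x_1},\dots,\ket{x_{k-1}}\}$ and its orthogonal complement, isolates the diagonal contribution and reduces the nonnegativity of the remainder to the inductive hypothesis.

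The main obstacle is exactly this final inequality. Establishing that the off-diagonal permutations do not overcorrect below $\braket{X}{X}$ genuinely uses both the product structure of $\ket{X}$ (so that the left side is a permanent rather than a generic quadratic form) and the positivity of $G$; without positivity the permanent can dip below the diagonal product. I expect the cleanest writeup simply to identify the permanent and invoke Marcus's inequality, citing the prior-work lemmas already referenced, with the Gaussian or Gram--Schmidt argument supplied only if a self-contained proof is desired.
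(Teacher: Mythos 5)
Your proposal is correct and follows essentially the same route as the paper, which does not prove this lemma itself but cites it from the references: one expands $\bra{X}\sum_\sigma P_\sigma\ket{X}$ into the permanent of the Gram matrix $G_{st}=\braket{x_s}{x_t}$ and invokes Marcus's permanental Hadamard inequality $\operatorname{perm}(G)\ge\prod_t G_{tt}$ for positive semidefinite $G$, which is exactly the content of Lemma~5.12 of~\cite{chen2022exponential} and Lemma~3 of~\cite{chen2023unitarity}. Your identification of the key point --- that positivity of $\sum_\sigma P_\sigma$ alone is insufficient and the product structure of $\ket{X}$ is what makes the off-diagonal permutations contribute nonnegatively in aggregate --- is accurate, and either of your two self-contained routes (the Gram--Schmidt induction, or the Gaussian representation provided the positive-association step is itself justified) closes the argument.
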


\subsection{Proof of Theorem~\ref{the:coherent upper bound}}
\label{app:upper bound for coherent access}

To obtain the expectation of $\tilde{\chi}$, we compute the expectation of $\tilde{f}$ first:
\begin{align}
\bE \tilde{f} 
&= \bE |\braket{\phi_A}{\phi_B}|^2 
= \bE \tr\left[\left(\ketbra{\phi_A}\right) \left(\ketbra{\phi_B}\right)\right] \\
&= \bE_{\psi} \tr\left[\left(\frac{I + T U\ketbra{\psi}U^\dagger}{d+T}\right) \left(\frac{I + T V\ketbra{\psi}V^\dagger}{d+T}\right)\right] 
\tag*{Lemma~\ref{lem:expectation and second moment of collective measurement result}}\\
&= \frac{d + 2T}{(d+T)^2} + \frac{T^2}{(d+T)^2} \bE_{\psi} \tr\left[\left(U^\dagger V \ox V^\dagger U\right) \ketbra{\psi}^{\ox 2}\right] \\
&= \frac{d + 2T}{(d+T)^2} + \frac{T^2\left(\left|\tr[U^\dagger V]\right|^2 + d\right)}{d(d+1)(d+T)^2}. 
\tag*{Lemma~\ref{lem:special cases of weingarten calculus}}
\end{align}
As we can see, the proof relies on Lemma~\ref{lem:expectation and second moment of collective measurement result}, which requires that the output states remain \emph{pure}.
For a general quantum channel, the output states might be mixed, and the key identities in the lemma no longer hold.
Therefore, the coherent algorithm is applicable only when the unknown channels are unitary channels.
Lastly, we can prove $\tilde{\chi}$ is an unbiased estimator for $\tr[J_U J_V] / d^2$, 
\begin{align}
\bE \tilde{\chi} 
= \frac{(d+1)(d+T)^2}{T^2 d} \bE \tilde{f} - \frac{(d+1)(d+2T) + T^2}{T^2 d} 
= \frac{\left|\tr[U^\dagger V]\right|^2}{d^2} = \frac{\tr[J_U J_V]}{d^2}.
\end{align}
This uses $\tr[J_UJ_V]=|\tr[U^\dagger V]|^2$, which is the unitary-channel special case of Eq.~\eqref{eq:choi and kraus}.

We now consider the variance of the estimator $\tilde{\chi}$. 
Let $\tilde{\chi} = X \tilde{f} - Y$, where 
\begin{align}
X = \frac{(d+1)(d+T)^2}{T^2 d}, \quad 
Y = \frac{(d+1)(d+2T) + T^2}{T^2 d}.
\end{align}
Then, we have 
\begin{align}
\Var (\tilde{\chi})
&= X^2 \bE \tilde{f}^2 - 2XY\bE\tilde{f} + Y^2 - \frac{\left|\tr[U^\dagger V]\right|^4}{d^4}
\leq X^2 \bE \tilde{f}^2 + Y^2 - \frac{\left|\tr[U^\dagger V]\right|^4}{d^4}. 
\end{align}
We consider the first two terms, respectively.
With~\cite[Proof of Lemma 5, Eqs.~(165) and~(166)]{anshu2022distributed}, 
we have 
\begin{align}
X^2 \bE \tilde{f}^2 
&\leq \left(\frac{d+1}{d}\right)^2 \bE_\psi \left(f_\psi^2 + \frac{8f_\psi - 2f_\psi^2}{T} + \frac{2df_\psi + f_\psi^2 + 8 + 2d}{T^2} + \frac{8d+4}{T^3} + \frac{2d^2 + 2d}{T^4}\right), 
\end{align}
where we define 
\begin{align}
f_\psi := \tr\left[\left(U\ketbra{\psi}U^\dagger\right) \left(V\ketbra{\psi}V^\dagger\right) \right]. 
\end{align}
With Lemma~\ref{lem:special cases of weingarten calculus}, we have 
\begin{align}
\bE_\psi f_\psi = \frac{\left|\tr[U^\dagger V]\right|^2 + d}{d(d+1)} \leq 1, 
\end{align}
and 
\begin{align}
\bE_\psi f_\psi^2 
&= \bE_\psi \tr\left[\left(U^\dagger V \ox V^\dagger U\right)^{\ox 2}\ketbra{\psi}^{\ox 4}\right] 
= \bE_\psi \tr\left[\left(U^\dagger V \ox V^\dagger U\right)^{\ox 2}\frac{\Pi_{\rm sym}^{(4)}}{\kappa_4}\right] \\
&= \frac{1}{d(d+1)(d+2)(d+3)} \sum_{\pi\in\cS_4} \tr\left[\left(U^\dagger V\ox V^\dagger U\right)^{\ox 2} P_\pi\right] 
\leq \frac{\left|\tr[U^\dagger V]\right|^4}{d^4} + \cO\left(\frac{1}{d}\right). 
\end{align}
For the second term, we have
\begin{align}
Y^2 
= \left(\frac{d}{T^2} + \frac{2T+1}{T^2} + \frac{2}{Td} + \frac{1}{d}\right)^2 
= \cO\left(\frac{d^2}{T^4} + \frac{1}{T^2} + \frac{1}{d^2}\right).
\end{align}
Thus, the variance of $\tilde{\chi}$ is upper bounded by 
\begin{align}
\Var (\tilde{\chi}) 
\leq \cO\left(\frac{1}{T} + \frac{d}{T^2} + \frac{d}{T^3} + \frac{d^2}{T^4}\right) 
+ \cO\left(\frac{d^2}{T^4} + \frac{1}{T^2} + \frac{1}{d^2}\right)
= \cO\left(\frac{1}{T} + \frac{d}{T^2} + \frac{d^2}{T^4}\right). 
\end{align}
Therefore, it suffices to choose
\begin{align}
T = \Theta \left(\max\left\{\frac{1}{\varepsilon^2}, \frac{\sqrt{d}}{\varepsilon}\right\}\right). 
\end{align}

\subsection{Proof of Theorem~\ref{the:incoherent upper bound}}
\label{app:upper bound for incoherent access}

We first show that $\tilde{\omega}_t$ is an unbiased estimator for $\tr[J_\cE J_\cF] / d^2$ with the following lemma.
\begin{lemma}
\label{lem:expectation of omega}
The expectation of $\tilde{\omega}_t$ in Algorithm~\ref{alg:incoherent access} is given by 
\begin{align}
\bE\; \tilde{\omega}_t 
= \frac{\tr[J_\cE J_\cF]}{d^2}.
\end{align}
That is, $\tilde{\omega}_t$ is an unbiased estimator for $\tr[J_\cE J_\cF] / d^2$.
\end{lemma}

\begin{proof}[Proof of Lemma~\ref{lem:expectation of omega}]
We prove this lemma by analyzing the expectation of $\tilde{g}_t$ and $\tilde{o}_t$, respectively.

\paragraph*{Expectation of $\tilde{g}$.}
Note that to obtain $\tilde{g}$, we generate the same random SPAM setting $\{Q, \psi\}$ for both devices.
For each SPAM setting $\{Q, \psi\}$, we define the following two probabilities: 
\begin{align}
p_{Q, \psi}(a) := \bra{a}Q \cE(\ketbra{\psi}) Q^\dagger \ket{a}, \quad 
q_{Q, \psi}(b) := \bra{b}Q \cF(\ketbra{\psi}) Q^\dagger \ket{b}, 
\end{align}
and their classical inner product function $g(Q, \psi) := \sum_a p_{Q, \psi}(a) q_{Q, \psi}(a)$.
Then, the expectation of $\tilde{g}$ can be represented as 
\begin{align}
\bE_{Q,\psi,A,B}\; \tilde{g} 
&= \bE_{Q, \psi}\; g(Q, \psi) 
= \bE_{Q, \psi}\; \sum_a \bra{a}Q \cE(\ketbra{\psi}) Q^\dagger \ket{a} \bra{a}Q \cF(\ketbra{\psi}) Q^\dagger \ket{a} \\
&= \bE_{\psi} \tr\left[\left(\sum_a \bE_{Q} Q^{\dagger\ox 2}\ketbra{aa}Q^{\ox 2}\right) \left(\cE(\ketbra{\psi})\ox \cF(\ketbra{\psi})\right)\right]\\
&= \frac{1}{d+1} + \frac{1}{d+1} \bE_{\psi} \tr\left[\cE(\ketbra{\psi}) \cF(\ketbra{\psi}) \right],  
\end{align}
where we have used Lemma~\ref{lem:special cases of weingarten calculus} and the property of unitary $2$-design.
For the last term, suppose that $\cE(\cdot) = \sum_i E_i(\cdot) E_i^\dagger$ and $\cF(\cdot) = \sum_j F_j(\cdot) F_j^\dagger$. Then we have 
\begin{align}
\bE_{\psi} \tr\left[\cE(\ketbra{\psi}) \cF(\ketbra{\psi})\right]
&= \sum_{i,j} \bE_\psi \tr\left[E_i\ketbra{\psi} E_i^\dagger F_j\ketbra{\psi} F_j^\dagger\right] \\
&= \sum_{i,j} \tr\left[\left(E_i^\dagger F_j \ox F_j^\dagger E_i\right)
\left( \bE_\psi \ketbra{\psi}^{\ox 2}\right)\right] \\
&= \sum_{i,j} \tr\left[\left(E_i^\dagger F_j \ox F_j^\dagger E_i\right) \frac{\Pi_{\rm sym}^{(d, 2)}}{\kappa_2}\right] \\
&= \frac{\tr[J_\cE J_\cF]}{d(d+1)} + 
\frac{1}{d(d+1)} \sum_{i,j} \tr\left[E_i E_i^\dagger F_j F_j^\dagger\right] \\
&= \frac{\tr[J_\cE J_\cF]}{d(d+1)} + \frac{d}{d+1} \bE_{\psi, \phi} \tr\left[\cE(\ketbra{\psi}) \cF(\ketbra{\phi})\right].
\end{align}
Therefore, we have 
\begin{align}
\bE_{Q,\psi,A,B}\; \tilde{g}
&= \frac{1}{d+1} + \frac{\tr[J_\cE J_\cF]}{d(d+1)^2} + \frac{d}{(d+1)^2} \bE_{\psi, \phi} \tr\left[\cE(\ketbra{\psi}) \cF(\ketbra{\phi})\right].
\label{eq:expectation of g}
\end{align}

\paragraph*{Remark.} 
Before analyzing the expectation of $\tilde{o}$, we show that if one of the two channels is a unital channel, then, we have 
\begin{align}
\bE_{\psi, \phi} \tr\left[\cE(\ketbra{\psi}) \cF(\ketbra{\phi})\right]
= \frac{1}{d^2} \sum_{i,j} \tr\left[E_i E_i^\dagger F_j F_j^\dagger\right]
= \frac{1}{d}.
\end{align}
Thus, in this case, we have 
\begin{align}
\frac{\tr[J_\cE J_\cF]}{d^2} = \frac{(d+1)^2}{d} \bE_{Q,\psi,A,B}\; \tilde{g} - \frac{d+2}{d}.
\end{align}
That is, we can directly estimate $\tr[J_\cE J_\cF] / d^2$ without computing $\tilde{o}$.

\paragraph*{Expectation of $\tilde{o}$.}
We now turn to analyze the expectation of $\tilde{o}$ and show that its expectation can be connected to $\bE_{\psi, \phi} \tr\left[\cE(\ketbra{\psi}) \cF(\ketbra{\phi})\right]$ in Eq.~\eqref{eq:expectation of g}.
Note that to obtain $\tilde{o}$, we generate two independent random states $\ket{\psi}$ and $\ket{\phi}$ for the two devices, respectively.
Thus, for each SPAM setting $\{Q, \psi, \phi\}$, we can define the following two probabilities:
\begin{align}
p_{Q, \psi}(a) := \bra{a}Q \cE(\ketbra{\psi}) Q^\dagger \ket{a}, \quad 
q_{Q, \phi}(b) := \bra{b}Q \cF(\ketbra{\phi}) Q^\dagger \ket{b},
\end{align}
and their classical inner product function $o(Q, \psi, \phi) := \sum_a p_{Q, \psi}(a) q_{Q, \phi}(a)$.
Likewise, the expectation of $\tilde{o}$ can be represented as
\begin{align}
\bE_{Q,\psi,\phi,A,B}\; \tilde{o}
&= \bE_{Q, \psi, \phi}\; o(Q, \psi, \phi)
= \bE_{Q, \psi, \phi}\; \sum_a \bra{a}Q \cE(\ketbra{\psi}) Q^\dagger \ket{a} \bra{a}Q \cF(\ketbra{\phi}) Q^\dagger \ket{a} \\
&= \bE_{\psi, \phi} \tr\left[\left(\sum_a \bE_Q Q^{\dagger\ox 2} \ketbra{aa}Q^{\ox 2} \right) \left(\cE(\ketbra{\psi}) \ox \cF(\ketbra{\phi})\right)\right] \\
&= \frac{1}{d+1} + \frac{1}{d+1} \bE_{\psi, \phi} \tr\left[\cE(\ketbra{\psi}) \cF(\ketbra{\phi}) \right],
\label{eq:expectation of o}
\end{align}
where we have used Lemma~\ref{lem:special cases of weingarten calculus} and the property of unitary $2$-design.

Therefore, combining Eqs.~\eqref{eq:expectation of g} and~\eqref{eq:expectation of o}, we have
\begin{align}
\bE_{Q,\psi,A,B}\; \tilde{g}
&= \frac{1}{d+1} + \frac{\tr[J_\cE J_\cF]}{d(d+1)^2} + \frac{d}{(d+1)^2}
\left[(d+1)\bE_{Q,\psi,\phi,A,B}\;\tilde{o} - 1\right] \\
&= \frac{1}{(d+1)^2} + \frac{\tr[J_\cE J_\cF]}{d(d+1)^2} + \frac{d}{d+1} \bE_{Q,\psi,\phi,A,B}\; \tilde{o}, \\
\Rightarrow\quad
\frac{\tr[J_\cE J_\cF]}{d^2}
&= \frac{(d+1)^2}{d}\bE_{Q,\psi,A,B}\; \tilde{g} - (d+1) \bE_{Q,\psi,\phi,A,B}\; \tilde{o} - \frac{1}{d} 
= \bE\; \tilde{\omega}. 
\end{align}
\end{proof}

Now, we consider the variance of the estimator so that we can obtain the query complexity of Algorithm~\ref{alg:incoherent access}. 
The following analysis is similar to~\cite{anshu2022distributed}.
With the definition of ${\omega}$, the variance of ${\omega}$ is given by
\begin{align}
\Var({\omega}) 
= \frac{1}{T} \Var(\tilde{\omega}_t), \quad \text{where} \quad
\Var(\tilde{\omega}_t)
= \frac{(d+1)^4}{d^2} \Var(\tilde{g}) + (d+1)^2 \Var(\tilde{o}),
\label{eq:variance of omega}
\end{align}
as $\tilde{g}$ and $\tilde{o}$ are independent.
In the following, we focus on the variances of these two random variables and establish the following lemma. 
\begin{lemma}
\label{lem:variance of g(Q,psi,S)}
The variance of $\tilde{g}$ and $\tilde{o}$ can be bounded as follows:
\begin{align}
\Var(\tilde{g}) 
\leq \cO\left(\frac{1}{m^2 d} + \frac{1}{md^2} + \frac{1}{d^3}\right), \quad
\Var(\tilde{o})
\leq \cO\left(\frac{1}{m^2 d} + \frac{1}{md^2} + \frac{1}{d^3}\right).
\end{align}
\end{lemma}

\begin{proof}[Proof of Lemma~\ref{lem:variance of g(Q,psi,S)}]
We consider these two variances separately as follows.

\paragraph*{Variance of $\tilde{g}$.}
It should be noted that $\tilde{g}$ is a function of the SPAM setting $\{Q, \psi\}$ and measurement results $S := \{A, B\}$. 
Thus, we can write $\tilde{g}$ as $\tilde{g}(Q,\psi,S)$, and the law of total variance gives~\cite{anshu2022distributed} 
\begin{align}
\Var(\tilde{g}(Q, \psi, S)) 
=  \bE_{Q, \psi} \Var\left[\tilde{g}(Q, \psi, S | Q, \psi)\right] + \Var_{Q, \psi} \bE\left[\tilde{g}(Q, \psi, S | Q, \psi)\right]. 
\end{align}
We consider these two terms separately as follows. 
\begin{enumerate}
\item For the first term, with Lemma~14 in~\cite{anshu2022distributed}, we have 
\begin{align}
\bE_{Q, \psi} \Var\left[\tilde{g}(Q, \psi, S | Q, \psi)\right] 
&\leq \bE_{Q, \psi} \left[ \frac{g(Q,\psi)}{m^2} + 
\frac{1}{m}\sum_a \left( p_{Q, \psi}^2(a) q_{Q, \psi}(a) + p_{Q, \psi}(a) q_{Q, \psi}^2(a)\right)\right]. 
\end{align}
As shown in Eq.~\eqref{eq:expectation of g}, we have 
\begin{align}
\bE_{Q, \psi} \frac{g(Q,\psi)}{m^2} 
&= \frac{1}{m^2(d+1)} + \frac{1}{m^2}\frac{\tr[J_\cE J_\cF]}{d(d+1)^2} 
+ \frac{d}{m^2(d+1)^2} \bE_{\psi, \phi} \tr\left[\cE(\ketbra{\psi}) \cF(\ketbra{\phi})\right] \notag \\
&= \cO\left(\frac{1}{m^2 d}\right). 
\end{align}
Additionally, with Eq.~(162) in~\cite{anshu2022distributed}, we have 
\begin{align}
\bE_{Q,\psi}\; \sum_a p_{Q, \psi}^2(a) q_{Q, \psi}(a)
&= \bE_{Q,\psi}\; \sum_a \bra{a}Q \cE(\ketbra{\psi}) Q^\dagger \ket{a}^2 
\bra{a}Q \cF(\ketbra{\psi}) Q^\dagger \ket{a} \\
&= d \bE_{\phi} \bra{\phi} \cE(\ketbra{\psi}) \ket{\phi}^2 
\bra{\phi} \cF(\ketbra{\psi}) \ket{\phi} = \cO\left(\frac{1}{d^2} \right),
\end{align}
where we use the property of unitary $3$-design. 
Likewise, we have 
\begin{align}
\bE_{Q, \psi}\; \sum_a p_{Q, \psi}(a) q_{Q, \psi}^2(a) = \cO\left(\frac{1}{d^2}\right). 
\end{align}
Therefore, we have 
\begin{align}
\bE_{Q, \psi} \Var\left[\tilde{g}(Q, \psi, S | Q, \psi)\right] 
\leq \cO\left(\frac{1}{m^2 d} + \frac{1}{md^2}\right). 
\end{align}
\item For the second term, we have 
\begin{align}
\Var_{Q, \psi} \bE\left[\tilde{g}(Q, \psi, S | Q, \psi)\right] 
&= \Var_{Q, \psi}\; g(Q, \psi) 
= \bE_{Q, \psi}\; g^2(Q, \psi) - \left[\bE_{Q, \psi} \; g(Q, \psi) \right]^2 \\
&= \bE_{Q, \psi}\; g^2(Q, \psi) - \bE_\psi \left(\frac{1 + \tr[\cE(\ketbra{\psi})\cF(\ketbra{\psi})]}{d+1}\right)^2. 
\end{align}
We have 
\begin{align}
\bE_{Q}\; g^2(Q, \psi) 
=&\; \bE_{Q}\; \left(\sum_a \bra{a}Q \cE(\ketbra{\psi}) Q^\dagger \ket{a} \bra{a}Q \cF(\ketbra{\psi}) Q^\dagger \ket{a}\right)^2 \\
=&\; d\; \bE_{\phi} \bra{\phi} \cE(\ketbra{\psi}) \ket{\phi}^2 
\bra{\phi} \cF(\ketbra{\psi}) \ket{\phi}^2 \notag \\
&+ d(d-1) \bE_{\phi, \phi^\bot} \bra{\phi\phi^\bot} (\cE(\ketbra{\psi}))^{\ox 2} \ket{\phi \phi^\bot} 
\bra{\phi\phi^\bot} (\cF(\ketbra{\psi}))^{\ox 2} \ket{\phi\phi^\bot}, 
\end{align}
where $\phi^\bot$ is randomly sampled from the orthogonal space of $\phi$, i.e., $\braket{\phi}{\phi^\bot} = 0$. 
As shown in~\cite[Lemma~23 and Eq.~(194)]{anshu2022distributed}, we have 
\begin{align}
&d\; \bE_{\phi} \bra{\phi} \cE(\ketbra{\psi}) \ket{\phi}^2
\bra{\phi} \cF(\ketbra{\psi}) \ket{\phi}^2 
 = d \cO\left(\frac{1}{d^4}\right) = \cO\left(\frac{1}{d^3}\right), \\
&d(d-1) \bE \bra{\phi\phi^\bot} (\cE(\ketbra{\psi}))^{\ox 2} \ket{\phi \phi^\bot} \bra{\phi\phi^\bot} (\cF(\ketbra{\psi}))^{\ox 2} \ket{\phi\phi^\bot} \notag \\
&\quad\quad\quad\quad\quad\quad\quad\quad\quad\quad\quad\quad 
= \frac{(1 + \tr[\cE(\ketbra{\psi}) \cF(\ketbra{\psi})])^2}{d(d+1)} + \cO\left(\frac{1}{d^4}\right).
\end{align}
Therefore, we have 
\begin{align}
&\Var_{Q, \psi} \bE\left[\tilde{g}(Q, \psi, S | Q, \psi)\right] \notag \\
\leq &\; \cO\left(\frac{1}{d^3}\right) + \left(\frac{1}{d(d+1)} - \frac{1}{(d+1)^2}\right) \bE_\psi (1 + \tr[\cE(\ketbra{\psi}) \cF(\ketbra{\psi})])^2 \notag \\
=&\; \cO\left(\frac{1}{d^3}\right). 
\end{align}
\end{enumerate}
Combining the above two terms, we have 
\begin{align}
\Var(\tilde{g}(Q, \psi, S)) 
\leq \cO\left(\frac{1}{m^2 d} + \frac{1}{md^2} + \frac{1}{d^3}\right). 
\end{align}

\paragraph*{Variance of $\tilde{o}$.}
We now analyze the variance of $\tilde{o}$, which is similar to the analysis of $\tilde{g}$. 
$\tilde{o}$ is a function of the SPAM setting $\{Q, \psi, \phi\}$ and measurement results $S := \{A, B\}$.
Thus, we can write $\tilde{o}$ as $\tilde{o}(Q,\psi,\phi,S)$, and the law of total variance gives
\begin{align}
\Var(\tilde{o}(Q, \psi, \phi, S)) 
&=  \bE_{Q, \psi, \phi} \Var\left[\tilde{o}(Q, \psi, \phi, S | Q, \psi, \phi)\right] \notag \\
&\quad\quad\quad
+ \Var_{Q, \psi, \phi} \bE\left[\tilde{o}(Q, \psi, \phi, S | Q, \psi, \phi)\right]. 
\end{align}
We also consider the above two terms separately as follows. 
\begin{enumerate}
\item For the first term, with Lemma~14 in~\cite{anshu2022distributed}, we have 
\begin{align}
\bE_{Q, \psi, \phi} \Var\left[\tilde{o}(Q, \psi, \phi, S | Q, \psi, \phi)\right]
&\leq \bE_{Q, \psi, \phi} \left[ \frac{o(Q,\psi,\phi)}{m^2} \right. \notag \\
&\quad\quad\quad \left. + \frac{1}{m}\sum_a \left( p_{Q, \psi}^2(a) q_{Q, \phi}(a) + p_{Q, \psi}(a) q_{Q, \phi}^2(a)\right)\right]. 
\end{align}
As shown in Eq.~\eqref{eq:expectation of o}, we have 
\begin{align}
\bE_{Q, \psi, \phi} \frac{o(Q,\psi,\phi)}{m^2} 
&= \frac{1}{m^2(d+1)} + \frac{1}{m^2(d+1)} \bE_{\psi, \phi} \tr\left[\cE(\ketbra{\psi}) \cF(\ketbra{\phi}) \right] \notag \\
&= \cO\left(\frac{1}{m^2 d}\right). 
\end{align}
Likewise, with Eq.~(162) in~\cite{anshu2022distributed}, we have 
\begin{align}
\bE_{Q,\psi,\phi} \sum_a p_{Q, \psi}^2(a) q_{Q, \phi}(a) = \cO\left(\frac{1}{d^2}\right), \quad 
\bE_{Q, \psi, \phi} \sum_a p_{Q, \psi}(a) q_{Q, \phi}^2(a) = \cO\left(\frac{1}{d^2}\right). 
\end{align}
Therefore, we have 
\begin{align}
\bE_{Q, \psi, \phi} \Var\left[\tilde{o}(Q, \psi, \phi, S | Q, \psi, \phi)\right] 
\leq \cO\left(\frac{1}{m^2 d} + \frac{1}{md^2}\right). 
\end{align}
\item For the second term, we have 
\begin{align}
\Var_{Q, \psi, \phi}\; \bE\left[\tilde{o}(Q, \psi, \phi, S | Q, \psi, \phi)\right] 
&= \Var_{Q, \psi, \phi}\; o(Q, \psi, \phi) \\
&= \bE_{Q, \psi, \phi}\; o^2(Q, \psi, \phi) - \left[\bE_{Q, \psi, \phi} \; o(Q, \psi, \phi) \right]^2 \\
&= \bE_{Q, \psi, \phi}\; o^2(Q, \psi, \phi) \notag \\
&\quad\quad\quad - \bE_{\psi,\phi} \left(\frac{1 + \tr[\cE(\ketbra{\psi})\cF(\ketbra{\phi})]}{d+1}\right)^2. 
\end{align}
Similar to the analysis of $\tilde{g}$, we have
\begin{align}
&\Var_{Q, \psi, \phi}\; \bE\left[\tilde{o}(Q, \psi, \phi, S | Q, \psi, \phi)\right] \notag \\
\leq&\; \cO\left(\frac{1}{d^3}\right) + \left(\frac{1}{d(d+1)} - \frac{1}{(d+1)^2}\right) \bE_{\psi,\phi} (1 + \tr[\cE(\ketbra{\psi}) \cF(\ketbra{\phi})])^2 \notag \\
\leq&\; \cO\left(\frac{1}{d^3}\right)
\end{align}
\end{enumerate}
Combining the above two terms, we have 
\begin{align}
\Var(\tilde{o}(Q, \psi, \phi, S)) 
\leq \cO\left(\frac{1}{m^2 d} + \frac{1}{md^2} + \frac{1}{d^3}\right). 
\end{align}
\end{proof}

\subsection{DSEC with Independent Classical Shadow}
\label{app:dseu with cseu}

\subsubsection{Proof of Proposition~\ref{the:independent classical shadow upper bound}}
First, we prove that $\tilde{\gamma}$ is an unbiased estimator of $\tr[J_U J_V]/d^2$: 
\begin{align}
\bE \tilde{\gamma} 
&= \frac{1}{d^2} \bE \tr\left[\tilde{X} \tilde{Y}\right]
= \frac{1}{d^2} \tr\left[J_U J_V\right]. 
\end{align}
Then, we analyze the query complexity of this classical shadow-based algorithm. The variance of estimator $\tilde{\gamma}$ is given by 
\begin{align}
\Var(\tilde{\gamma}) 
&= \frac{1}{T^4 d^4} \bE \left(\sum_{i,j=1}^T \tr\left[\tilde{X}_i \tilde{Y}_j\right]\right)^2 - \frac{\left|\tr[U^\dagger V]\right|^4}{d^4} \\
&= \frac{1}{T^4 d^4} \bE \left(\sum_{i,j,k,l=1}^T \tr\left[\tilde{X}_i \tilde{Y}_j\right] \tr\left[\tilde{X}_k \tilde{Y}_l\right] \right)
- \frac{\left|\tr[U^\dagger V]\right|^4}{d^4}.
\end{align}
After expanding the expectation in the above equation, there are four terms:
\begin{enumerate}
\item For $i = k$ and $j = l$, there are $T^2$ terms, and using Lemma~\ref{lem:tr2 XY}, we have 
\begin{align}
\bE \left(\sum_{i,j} \tr^2\left[\tilde{X}_i \tilde{Y}_j\right] \right) 
= T^2 \bE \tr^2\left[X Y\right] 
= T^2 \cO\left(\frac{d^4(d+s)^2}{s^2} \right). 
\end{align}
\item For $i \neq k$ and $j \neq l$, there are $T^2(T-1)^2$ terms, and we have 
\begin{align}
\bE \left(\sum_{i \neq k, j \neq l}  \tr\left[\tilde{X}_i \tilde{Y}_j\right] \tr\left[\tilde{X}_k \tilde{Y}_l\right] \right)
= T^2 (T-1)^2 \left|\tr[U^\dagger V]\right|^4. 
\end{align}
\item For $i = k$ and $j \neq l$, there are $T^2(T-1)$ terms, and using Lemma~\ref{lem:tr XY tr XY} we have 
\begin{align}
\bE \left(\sum_{i = k, j \neq l}  \tr\left[\tilde{X}_i \tilde{Y}_j\right] \tr\left[\tilde{X}_k \tilde{Y}_l\right] \right)
= T^2(T-1) \bE \tr\left[X \tilde{Y}_1\right] \tr\left[X \tilde{Y}_2\right] 
\leq \cO\left(T^3 d^4\right).
\end{align}
\item For $i \neq k$ and $j = l$, there are $T^2(T-1)$ terms, and using Lemma~\ref{lem:tr XY tr XY} we have 
\begin{align}
\bE \left(\sum_{i \neq k, j = l}  \tr\left[\tilde{X}_i \tilde{Y}_j\right] \tr\left[\tilde{X}_k \tilde{Y}_l\right] \right)
= T^2(T-1) \bE \tr\left[\tilde{X}_1 Y\right] \tr\left[\tilde{X}_2 Y\right]
\leq \cO\left(T^3 d^4\right).
\end{align}
\end{enumerate}
Therefore, we have 
\begin{align}
\Var(\tilde{\gamma}) 
&= \frac{1}{T^2d^4}\cO\left(\frac{d^4(d+s)^2}{s^2} \right) + \frac{(T-1)^2}{T^2d^4} \left|\tr[U^\dagger V]\right|^4 + 
\cO\left(\frac{1}{T}\right) - \frac{1}{d^4} \left|\tr[U^\dagger V]\right|^4 \\
&\leq \cO\left(\frac{(d+s)^2}{T^2s^2} + \frac{1}{T}\right) 
= \cO\left(\frac{d^2}{T^2 s^2} + \frac{1}{T}\right). 
\end{align}
Thus, to achieve $\varepsilon$ additive error, we require the number of queries to satisfy 
\begin{align}
Ts = \cO\left(\max\left\{\frac{d}{\varepsilon}, \frac{1}{\varepsilon^2}\right\}\right).
\end{align}
Therefore, we complete the proof of Proposition~\ref{the:independent classical shadow upper bound}.

\subsubsection{Technical Lemmas}
In the following, we present some technical lemmas that are used in the above analysis.

\begin{lemma}[Lemma~D1 in~\cite{li2025nearly}]
\label{lem:lemma d1 in li2025}
Suppose that $\phi$ and $\psi$ are the random input states and measurement outcomes of learning a unitary channel $\cU$, then we have 
\begin{align}
\bE \phi^{\ox 2} \ox \psi^{\ox 2}
&= \frac{2}{(d+s)(d+s+1)}\sum_{i=1}^4 \Delta_{U,i}, 
\end{align}
where 
\begin{align}
\Delta_{U, 1} &:= \frac{1}{\kappa_2} \Pi_{\rm sym}^{(d, 2)} \ox \Pi_{\rm sym}^{(d, 2)}, \\
\Delta_{U, 2} &:= \frac{s}{\kappa_3} (I\ox U\ox I\ox I)\left[I_1\ox \left(\Pi_{\rm sym}^{(d, 3)}\right)_{2,3,4}\right] (I\ox U^\dagger \ox I\ox I) \left(\Pi_{\rm sym}^{(2)}\ox I\ox I\right), \\
\Delta_{U, 3} &:= \frac{s}{\kappa_3} (U\ox I\ox I\ox I)\left[I_2\ox \left(\Pi_{\rm sym}^{(d, 3)}\right)_{1,3,4}\right] (U^\dagger \ox I\ox I\ox I) \left(\Pi_{\rm sym}^{(d, 2)}\ox I\ox I\right), \\
\Delta_{U, 4} &:= \frac{s(s-1)}{2\kappa_4} (U\ox U\ox I\ox I) \Pi_{\rm sym}^{(d, 4)} (U^\dagger\ox U^\dagger\ox I\ox I). 
\end{align}
\end{lemma}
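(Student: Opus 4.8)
The plan is to prove the identity by conditioning on the input state $\psi$ and then invoking the second-moment formula for the symmetric collective measurement. First I would use the tower property to write $\bE\,\phi^{\ox 2}\ox\psi^{\ox 2} = \bE_\psi\left[\bE[\phi^{\ox 2}\mid\psi]\ox\psi^{\ox 2}\right]$, since in the CSEU protocol the outcome $\phi$ is produced by measuring $\cM_s$ on the rotated state $(\cU(\psi))^{\ox s}=\rho^{\ox s}$ with $\rho:=U\psi U^\dagger$. The inner conditional expectation is then Lemma~\ref{lem:expectation and second moment of collective measurement result} with the substitution $\psi\mapsto\rho$ (since $\rho$ is again a pure state), giving
\[
\bE[\phi^{\ox 2}\mid\psi] = \frac{2}{(d+s)(d+s+1)}\left[(I+s\rho)^{\ox 2} - \frac{s(s+1)}{2}\rho^{\ox 2}\right]\Pi_{\rm sym}^{(2)}.
\]

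Next I would expand the bracket. Writing $(I+s\rho)^{\ox 2}=I\ox I+s(I\ox\rho+\rho\ox I)+s^2\rho^{\ox 2}$ and using $s^2-s(s+1)/2=s(s-1)/2$, the bracket splits into the four pieces $I\ox I$, $s(I\ox\rho)$, $s(\rho\ox I)$, and $\tfrac{s(s-1)}{2}\rho^{\ox 2}$. Pulling the prefactor $2/[(d+s)(d+s+1)]$ out front, the remaining task is to show that tensoring each piece with $\psi^{\ox 2}$ and averaging over $\psi$ reproduces $\Delta_{U,1},\dots,\Delta_{U,4}$ in turn. Here I would track the four tensor factors explicitly: systems $1,2$ carry the measurement projector $\Pi_{\rm sym}^{(2)}$ together with the $\rho$'s, while systems $3,4$ carry the bare $\psi^{\ox 2}$.

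The core step is the $\psi$-average of each piece via the state-design property $\bE_\psi\,\psi^{\ox k}=\Pi_{\rm sym}^{(k)}/\kappa_k$. Since every factor of $\rho$ hides one copy of $\psi$, each piece raises the effective design order: the $I\ox I$ piece needs only $\bE\,\psi^{\ox 2}$ and yields $\Delta_{U,1}=\tfrac{1}{\kappa_2}\Pi_{\rm sym}^{(2)}\ox\Pi_{\rm sym}^{(2)}$; the two linear pieces need $\bE\,\psi^{\ox 3}$ with a single $U$-conjugation on the subsystem carrying $\rho$, producing $\Delta_{U,2}$ and $\Delta_{U,3}$ with the $\Pi_{\rm sym}^{(3)}$ supported on the correct triple of systems; and the quadratic piece needs $\bE\,\psi^{\ox 4}$ with $U$-conjugations on both copies carrying $\rho$. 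This explains precisely why the protocol samples $\psi$ from a state $4$-design. Matching the prefactors $1/\kappa_2$, $s/\kappa_3$, $s/\kappa_3$, $s(s-1)/(2\kappa_4)$ then completes the identification term by term.

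The only place needing genuine care rather than bookkeeping is the quadratic term, where the factor $\rho^{\ox 2}\Pi_{\rm sym}^{(2)}$ appears to carry an extra symmetric projector absent from $\Delta_{U,4}$. The resolution is that $\rho=U\psi U^\dagger$ is pure, so $\rho^{\ox 2}$ already lies in the symmetric subspace and $\rho^{\ox 2}\Pi_{\rm sym}^{(2)}=\rho^{\ox 2}$, silently absorbing the projector. After this simplification, $\bE_\psi\,(U\psi U^\dagger)^{\ox 2}\ox\psi^{\ox 2}=(U\ox U\ox I\ox I)\tfrac{1}{\kappa_4}\Pi_{\rm sym}^{(4)}(U^\dagger\ox U^\dagger\ox I\ox I)$ reproduces $\Delta_{U,4}$, and assembling all four contributions gives the claimed formula. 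I expect this projector-absorption step, together with keeping the $U$-conjugations on the right subsystems, to be the main subtlety; the rest is a direct Weingarten/design computation.
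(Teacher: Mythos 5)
Your proposal is correct. Note, however, that the paper does not prove this statement at all: it is imported verbatim as Lemma~D1 of the cited reference on classical shadows for unitary channels, so there is no internal proof to compare against. Your derivation is the natural self-contained one and is fully consistent with the ingredients the paper does state: conditioning on $\psi$, applying Lemma~\ref{lem:expectation and second moment of collective measurement result} to the pure state $\rho=U\psi U^\dagger$, expanding the bracket into the four pieces $I\ox I$, $s(I\ox\rho)$, $s(\rho\ox I)$, $\tfrac{s(s-1)}{2}\rho^{\ox 2}$ (using $s^2-\tfrac{s(s+1)}{2}=\tfrac{s(s-1)}{2}$), and averaging each piece against $\psi^{\ox 2}$ via $\bE_\psi\,\psi^{\ox k}=\Pi_{\rm sym}^{(k)}/\kappa_k$ for $k=2,3,4$ reproduces $\Delta_{U,1},\dots,\Delta_{U,4}$ term by term, including the correct placement of the $U$-conjugations and of the right-multiplied $\Pi_{\rm sym}^{(2)}\ox I\ox I$. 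Your observation that $\rho^{\ox 2}\Pi_{\rm sym}^{(2)}=\rho^{\ox 2}$ for pure $\rho$ is exactly the point needed to match $\Delta_{U,4}$, and the argument also makes transparent why a state $4$-design ensemble suffices for the input states. I see no gap.
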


\begin{lemma}
\label{lem:tr2 XY}
Let $X$ and $Y$ be classical snapshots defined in Eq.~\eqref{eq:classical snapshot} with POVM $\cM_s$ for unitary channels $\cU$ and $\cV$, we have 
\begin{align}
\bE \tr^2 \left[X Y\right] 
\leq \cO\left(\frac{d^4(d+s)^2}{s^2} \right). 
\end{align}
\end{lemma}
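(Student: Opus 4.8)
The plan is to linearize the square and exploit the independence of the two snapshots. First I would observe that the snapshot $X$ defined in Eq.~\eqref{eq:classical snapshot} is Hermitian, since $\phi$, $\psi^T$ and $I\ox I$ are all Hermitian; hence $X^\dagger = X$ and $\tr[X^\dagger Y]=\tr[XY]$ is a real scalar. Using the elementary identity $(\tr[XY])^2 = \tr[(X\ox X)(Y\ox Y)]$ together with the fact that $X$ (built from $\cU$-data) and $Y$ (built from $\cV$-data) are generated independently, the second moment factorizes:
\[
\bE\tr^2[X^\dagger Y] = \tr\!\big[(\bE X^{\ox 2})(\bE Y^{\ox 2})\big].
\]
Thus the whole computation reduces to evaluating the single-snapshot second-moment operator $\bE X^{\ox 2}$ on $(\cH\ox\cH)^{\ox 2}$ and then contracting two such operators.

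Next I would compute $\bE X^{\ox 2}$. Writing $X = s^{-1}(a\,\phi\ox\psi^T - b\,I\ox I)$ with $a = d(d+1)(d+s)$ and $b = d+1+s$, expanding the tensor square produces a quadratic term in $\phi\ox\psi^T$, two cross terms, and a constant. The quadratic term requires $\bE(\phi\ox\psi^T)^{\ox 2}$, which I would obtain from Lemma~\ref{lem:lemma d1 in li2025}: that lemma gives $\bE\,\phi^{\ox 2}\ox\psi^{\ox 2}$ as $\frac{2}{(d+s)(d+s+1)}\sum_{i=1}^4\Delta_{U,i}$, and $\bE(\phi\ox\psi^T)^{\ox 2}$ follows by transposing the two input registers and swapping the middle two tensor factors to reorder $(\phi,\phi,\psi^T,\psi^T)$ into $(\phi,\psi^T,\phi,\psi^T)$. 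The cross and constant terms only need the first moment $\bE(\phi\ox\psi^T) = a^{-1}(sJ_U + bI\ox I)$, which is fixed by the unbiasedness relation $\bE X = J_U$. Assembling these gives $\bE X^{\ox 2}$ explicitly as a combination of $\Pi_{\rm sym}$-type operators dressed by $U,U^\dagger$, with $\Delta_{V,i}$ analogues for $Y$.

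Finally I would contract $\bE X^{\ox 2}$ against $\bE Y^{\ox 2}$ and trace. Here the traces reduce to standard symmetric-subspace quantities such as $\tr\Pi_{\rm sym}^{(d,2)} = \binom{d+1}{2}$ and permutation-operator overlaps, with the $U,V$-dependent pieces contributing at most $\cO(\tr\Pi_{\rm sym})$ by unitary invariance. The main obstacle is that the naive leading contributions---for instance the $\Delta_{U,1}$--$\Delta_{V,1}$ contraction, whose prefactor scales like $a^4/\big(s^4(d+s)^2\kappa_2^2\big)$---are individually of order $d^8/s^4$, much larger than the target $\cO(d^4(d+s)^2/s^2)$. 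These oversized pieces must cancel against the cross terms ($\propto ab$) and the constant term ($\propto b^2$); this cancellation is precisely the purpose of the $-(d+1+s)I\ox I$ subtraction that makes $X$ an unbiased, variance-controlled snapshot. The real work is therefore the careful bookkeeping of these competing $I\ox I$ and $\bF$ contributions across all $\Delta_{U,i}$--$\Delta_{V,j}$ pairs, keeping terms only to leading order in $1/d$ and $1/s$, after which the surviving contribution is $\cO\!\big(d^4(d+s)^2/s^2\big)$, as claimed. As a consistency check, substituting this bound into $\Var(\tilde\gamma)$ reproduces the $\frac{d^2}{T^2 s^2}$ term of Theorem~\ref{the:independent classical shadow upper bound}.
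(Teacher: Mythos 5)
Your skeleton matches the paper's proof: both reduce $\bE\tr^2[X^\dagger Y]$ to $\tr\bigl[\bigl(\bE X^{\ox 2}\bigr)\bigl(\bE Y^{\ox 2}\bigr)\bigr]$ via independence of the two devices, invoke Lemma~\ref{lem:lemma d1 in li2025}, and land on the sixteen contractions $\tr[\Delta_{U,k}\Delta_{V,l}]$, which the paper evaluates one by one. The genuine gap is the step you defer to ``careful bookkeeping'': the asserted cancellation of the $\cO(d^8/s^4)$-sized pieces is the entire content of the lemma, and the mechanism you propose for it provably cannot deliver it.

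Write $X=\frac{1}{s}\bigl(a\,\phi_1\ox\psi_1^T-b\,I\ox I\bigr)$ with $a=d(d+1)(d+s)$ and $b=d+1+s$. Because the states are pure, $\tr[\phi_1\ox\psi_1^T]=1$, so the cross and constant terms you want to cancel against are \emph{deterministic scalars}: $\tr[X^\dagger Y]=\frac{1}{s^2}\bigl(A-2ab+b^2d^2\bigr)$, where only $A:=a^2\tr[\phi_1\phi_2]\,\tr[\psi_1^T\psi_2^T]$ fluctuates. Subtracting a constant recentres the estimator but cannot reduce the fluctuation of $A$; indeed
\begin{align}
\bE\tr^2[X^\dagger Y]=\frac{\Var(A)}{s^4}+\tr^4[U^\dagger V],
\end{align}
so the only cancellation available is the removal of $(\bE A)^2$ from $\bE[A^2]$, and that is insufficient by a constant factor that ruins the exponent: the $\Delta_{U,1}$--$\Delta_{V,1}$ contraction alone gives $\bE[A^2]\approx\frac{4a^4}{(d+s)^2(d+s+1)^2}\approx 4a^4/d^4$ (the remaining contractions supply the $(1+s/d)$ corrections), whereas $(\bE A)^2=(s^2\tr^2[U^\dagger V]+2sbd+b^2d^2)^2\approx a^4/d^4$. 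Equivalently, $\tr[\phi_1\phi_2]$ and $\tr[\psi_1^T\psi_2^T]$ are each approximately exponential with relative variance $1$, so their product has relative second moment $\approx 4$ rather than $1$. Hence $\Var(A)=\Theta(a^4/d^4)=\Theta\bigl(d^4(d+s)^4\bigr)$ and the surviving contribution is $\Theta\bigl(d^4(d+s)^4/s^4\bigr)$, which matches the stated $\cO\bigl(d^4(d+s)^2/s^2\bigr)$ only when $s=\Omega(d)$ (the regime Theorem~\ref{the:independent classical shadow upper bound} actually uses) and exceeds it by a factor $(d+s)^2/s^2$ otherwise. Note also that the paper's own proof never invokes a cancellation---it bounds the groups of terms separately and lets the deterministic piece dominate---so on the decisive point your argument diverges from the paper's and does not close on its own terms; to repair it you would need either to restrict to $s=\Theta(d)$ or to carry the bound $\cO\bigl(d^4(d+s)^4/s^4+d^4\bigr)$ through to the theorem.
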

\begin{proof}
Using Lemma~\ref{lem:expectation and second moment of collective measurement result} and the definition of $X, Y$, we have 
\begin{align}
\bE \tr^2\left[X Y\right] 
=&\; \frac{1}{s^2} \tr^2\left[\left(d(d+1)(d+s){\phi_1}\ox{\psi_1}^T - (d+1+s)I\ox I\right) \right. \notag \\
&\qquad \qquad \qquad \qquad \qquad 
\left. \left(d(d+1)(d+s){\phi_2}\ox{\psi_2}^T - (d+1+s)I\ox I\right)\right] \notag \\
=&\; \frac{d^2(d+1)^2(d+s)^2}{s^2} \bE \tr^2\left[\phi_1\phi_2 \ox \psi_2\psi_2\right] \notag \\
&\quad 
- \frac{d(d+1)(d+s)(d+1+s)}{s^2}\left(\sum_{i=1}^2 \bE\tr^2\left[\phi_i\ox\psi_i\right]\right) + \frac{d^4(d+1+s)^2}{s^2}.  
\end{align}
We consider the first two terms respectively as follows:
\begin{enumerate}
\item For the first term, with Lemma~\ref{lem:lemma d1 in li2025}, there are $16$ terms:
\begin{align}
\bE \tr^2\left[\phi_1\phi_2 \ox \psi_2\psi_2\right] 
&= \bE \tr\left[\left(\phi_1^{\ox 2}\ox \psi_1^{\ox 2}\right) 
\left(\phi_2^{\ox 2}\ox \psi_2^{\ox 2}\right) \right] \\
&= \frac{4}{(d+s)^2(d+s+1)^2}\sum_{k,l=1}^4 \tr\left[ \Delta_{U,k} \Delta_{V,l} \right]
\end{align}
We analyze these terms as follows:
\begin{align}
\tr[\Delta_{U,1} \Delta_{V,1}]
&= \frac{\tr^2\left[\Pi_{\rm sym}^{(d, 2)}\right]}{\kappa_2^2} = 1, \\
\tr[\Delta_{U, 1}\Delta_{V,2}] 
&= \tr[\Delta_{U, 2}\Delta_{V, 1}] 
= \tr[\Delta_{U, 1}\Delta_{V, 3}] 
= \tr[\Delta_{U, 3}\Delta_{V, 1}] 
= \frac{s}{\kappa_2\kappa_3}\frac{\kappa_2\kappa_3}{d}
= \frac{s}{d}, \\
\tr\left[\Delta_{U, 1} \Delta_{V, 4}\right]
&= \tr\left[\Delta_{U, 4} \Delta_{V, 1}\right]
= \frac{s(s-1)}{d(d+1)} = \cO\left(\frac{s^2}{d^2}\right), \\
\tr\left[\Delta_{U, 2} \Delta_{V, 2}\right]
&= \tr\left[\Delta_{U, 3} \Delta_{V, 3}\right]
= \frac{s^2(d^2 + 2d + \tr^2[U^\dagger V])}{d^2(d+1)^2} 
= \cO\left(\frac{s^2}{d^2}\right), \\
\tr\left[\Delta_{U, 2} \Delta_{V, 3}\right]
&= \tr\left[\Delta_{U, 3} \Delta_{V, 2}\right]
= \frac{s^2(d^2 + 2d + \tr^2[U^\dagger V])}{d^2(d+1)^2} 
= \cO\left(\frac{s^2}{d^2}\right), \\
\tr\left[\Delta_{U, 2} \Delta_{V, 4}\right]
&\leq \frac{s^2(s-1) d^4}{2\kappa_3\kappa_4} = \cO\left(\frac{s^3}{d^3}\right), \quad 
\tr\left[\Delta_{U, 4} \Delta_{V, 2}\right] \leq \cO\left(\frac{s^3}{d^3}\right), \\
\tr\left[\Delta_{U, 3} \Delta_{V, 4}\right]
&\leq \cO\left(\frac{s^3}{d^3}\right), \quad 
\tr\left[\Delta_{U, 4} \Delta_{V, 3}\right] 
\leq \cO\left(\frac{s^3}{d^3}\right), \\
\tr\left[\Delta_{U, 4} \Delta_{V, 4}\right] 
&\leq \frac{s^2(s-1)^2 d^4}{4\kappa_4}
= \cO\left(\frac{s^4}{d^4}\right). 
\end{align}
Therefore, for the first term, we have 
\begin{align}
&\frac{d^2(d+1)^2(d+s)^2}{s^2} \bE \tr^2\left[\phi_1\phi_2 \ox (\psi_2\psi_2)^T\right] \\
\leq&\; \frac{4 d^2(d+1)^2}{s^2 (d+s+1)^2} \cO\left(1 + \frac{s}{d} + \frac{s^2}{d^2} + \frac{s^3}{d^3} + \frac{s^4}{d^4}\right) \\
=&\; \cO\left(\frac{d^4}{s^2(d+s)^2} + \frac{s^2}{(d+s)^2}\right)
\end{align}

\item For the second term, with Lemma~\ref{lem:lemma d1 in li2025}, we have 
\begin{align}
\bE \tr^2\left[\phi_1\ox \psi^T_1\right] 
&= \bE \tr\left[\phi_1^{\ox 2} \ox \psi_1^{\ox 2}\right] 
= \frac{2}{(d+s)(d+s+1)}\sum_{i=1}^4 \tr\left[\Delta_{U,i}\right] \\
&= \frac{2}{(d+s)(d+s+1)} \left[\kappa_2 + \frac{2\cdot d(d+1)^2 (d+2)}{12\cdot\kappa_3} + \frac{s(s-1)}{2}\right] = \cO(1). 
\end{align}
Likewise, $\bE \tr^2\left[\phi_2\ox \psi^T_2\right] = \cO(1)$. 
\end{enumerate}
Therefore, we have 
\begin{align}
\bE \tr^2 \left[X Y\right] 
&\leq \cO\left(\frac{d^4}{s^2(d+s)^2} + \frac{s^2}{(d+s)^2}\right) + \frac{d^4(d+1+s)^2}{s^2} 
= \cO\left(\frac{d^4(d+s)^2}{s^2} \right). 
\end{align}
\end{proof}

\begin{lemma}
\label{lem:tr XY tr XY}
Let $X_1$ and $X_2$ be classical snapshots defined in Eq.~\eqref{eq:classical snapshot} with POVM $\cM_s$ for unitary channel $\cU$, and $Y$ be a classical snapshot for unitary channel $\cV$, we have 
\begin{align}
\bE \tr\left[X_1 Y\right] \tr\left[X_2 Y\right]
&\leq \cO\left(d^4\right).
\end{align}
Let $Y_1$ and $Y_2$ be classical snapshots defined in Eq.~\eqref{eq:classical snapshot} with POVM $\cM_s$ for unitary channel $\cV$, and $X$ be a classical snapshot for unitary channel $\cU$, we have 
\begin{align}
\bE \tr\left[X Y_1\right] \tr\left[X Y_2\right]
&\leq \cO\left(d^4\right).
\end{align}
\end{lemma}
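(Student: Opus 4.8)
The plan is to exploit the independence of the two $\cU$-snapshots to collapse one pair of moments, reducing the whole quantity to a single second moment of a $\cV$-snapshot. Concretely, $X_1$ and $X_2$ are produced in distinct iterations $t$ of Algorithm~\ref{alg:dseu with independent classical shadow}, hence are i.i.d.\ and jointly independent of $Y$. Conditioning on $Y$ and using the unbiasedness $\bE\tilde{X} = J_U$ together with the Hermiticity of the Choi operator $J_U$, I would first establish
\begin{align}
\bE \tr[X_1^\dagger Y]\tr[X_2^\dagger Y]
= \bE_Y\left(\bE_{X}\tr[X^\dagger Y]\right)^2
= \bE_Y\left(\tr[J_U Y]\right)^2.
\end{align}
This is the crucial simplification: unlike Lemma~\ref{lem:tr2 XY}, where both traces share the \emph{same} snapshot and thus force a second moment of $X$ as well as of $Y$, here averaging out $X_1,X_2$ leaves only the second moment of a single $\cV$-snapshot. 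This is precisely why the extra $(d+s)^2/s^2$ factor does not appear and the bound improves to $\cO(d^4)$.

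Next I would evaluate $\bE_Y(\tr[J_U Y])^2$ by reducing the trace against $J_U$ to a scalar overlap. Writing the snapshot (Eq.~\eqref{eq:classical snapshot}) as $Y = \tfrac{1}{s}[d(d+1)(d+s)\,\phi\ox\psi^T - (d+1+s)I\ox I]$ and applying the Choi identity $\tr[(\phi\ox\psi^T)J_U] = \tr[\phi\,\cU(\psi)] = |\langle\phi|U|\psi\rangle|^2$ together with $\tr[J_U]=d$, I obtain
\begin{align}
\tr[J_U Y]
= \frac{d(d+1)(d+s)}{s}\,|\langle\phi|U|\psi\rangle|^2 - \frac{(d+1+s)d}{s}.
\end{align}
Thus $(\tr[J_U Y])^2$ is quadratic in the overlap $|\langle\phi|U|\psi\rangle|^2$, so only its first two moments are needed. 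Averaging over the measurement outcome $\phi$ via Lemma~\ref{lem:expectation and second moment of collective measurement result} with measured state $M=\cV(\psi)=V\psi V^\dagger$ gives $\bE_\phi|\langle\phi|U|\psi\rangle|^2 = (1+sy)/(d+s)$ and $\bE_\phi|\langle\phi|U|\psi\rangle|^4 = \tfrac{2}{(d+s)(d+s+1)}[1+2sy+\tfrac{s(s-1)}{2}y^2]$, where $y:=|\langle\psi|V^\dagger U|\psi\rangle|^2$.

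Finally I would average over $\psi$ using the state $4$-design, for which $\bE_\psi y = \cO(1)$ and $\bE_\psi y^2 = \cO(1)$ (both are bounded Haar moments of $|\langle\psi|V^\dagger U|\psi\rangle|^2$). Substituting and collecting terms, the leading contribution comes from $a^2\,\bE_\psi\bE_\phi|\langle\phi|U|\psi\rangle|^4$ with $a=d(d+1)(d+s)/s$; its dominant piece is $\sim d^4\,\bE_\psi y^2 = \cO(d^4)$, while the remaining pieces, the subtracted cross term and the constant term are each $\cO(d^4)$ uniformly for $s\ge 1$. This yields $\bE_Y(\tr[J_U Y])^2 = \cO(d^4)$ and hence the first bound; the second statement follows identically after exchanging $U\leftrightarrow V$ and $X\leftrightarrow Y$. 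The main obstacle is the bookkeeping in this last step: I must check that every $s$-dependent factor cancels so that the bound stays $\cO(d^4)$ uniformly in $s$, and in particular that no term degrades as $s\to 1$, where $a^2/[(d+s)(d+s+1)]\approx 2d^4/s^2$ is largest yet still $\cO(d^4)$. An equivalent but more mechanical route avoids the scalar reduction and instead contracts $J_U^{\ox 2}$ directly against $\bE_Y(Y^{\ox 2})$, whose joint $\phi$--$\psi$ second moment is supplied by Lemma~\ref{lem:lemma d1 in li2025}; there the obstacle is instead tracking the four $\Delta_{V,i}$ contributions.
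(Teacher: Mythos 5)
Your proposal is correct, and its decisive first step is exactly the paper's: both exploit the independence of $X_1$, $X_2$, and $Y$ (together with $\bE X = J_U$ and the Hermiticity of the Choi operator) to collapse the product to $\bE_Y(\tr[J_U Y])^2 = \tr\bigl[J_U^{\ox 2}\,\bE Y^{\ox 2}\bigr]$, which is precisely why the $(d+s)^2/s^2$ penalty of Lemma~\ref{lem:tr2 XY} disappears. Where you diverge is in evaluating this residual second moment. The paper contracts $J_U^{\ox 2}$ directly against the operator-valued joint moment $\bE\,\phi^{\ox 2}\ox\psi^{\ox 2}$ supplied by Lemma~\ref{lem:lemma d1 in li2025}, which requires tracking the four $\Delta_{V,i}$ contributions (and a first-moment term from Lemma~1 of~\cite{li2025nearly}); you instead reduce $\tr[J_U Y]$ to the scalar $\frac{d(d+1)(d+s)}{s}|\langle\phi|U|\psi\rangle|^2 - \frac{(d+1+s)d}{s}$ via the Choi identity and $\tr[J_U]=d$, then use the conditional outcome moments of Lemma~\ref{lem:expectation and second moment of collective measurement result} followed by the $\psi$-average. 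Your conditional moments check out: with $y:=|\langle\psi|V^\dagger U|\psi\rangle|^2$ one indeed gets $\bE_\phi|\langle\phi|U|\psi\rangle|^2=(1+sy)/(d+s)$ and $\bE_\phi|\langle\phi|U|\psi\rangle|^4=\frac{2}{(d+s)(d+s+1)}\bigl[1+2sy+\tfrac{s(s-1)}{2}y^2\bigr]$, and since $0\le y\le 1$ the $\psi$-average is trivially $\cO(1)$, so every term is $\cO(d^4)$ uniformly in $s$ (the worst prefactor $\approx d^4/s^2$ is always compensated by the bracket being $\cO(\max\{1,s^2\})$, with the $y^2$ coefficient vanishing at $s=1$). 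Your route buys a more transparent, purely scalar calculation that avoids the sixteen-term $\Delta$ bookkeeping and makes the uniformity in $s$ easy to see; the paper's route keeps the computation in the same operator framework as Lemma~\ref{lem:tr2 XY}, reusing Lemma~\ref{lem:lemma d1 in li2025} wholesale. The symmetric statement follows in both treatments by exchanging $U\leftrightarrow V$.
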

\begin{proof}
With the definition in Eq.~\eqref{eq:classical snapshot}, we have 
\begin{align}
&\;\bE \tr\left[X_1 Y\right] \tr\left[X_2 Y\right]
= \tr \left[\left(\bE X\right)^{\ox 2} \bE Y^{\ox 2}\right] \\
&= \frac{d^2(d+1)^2(d+s)^2}{s^2} \tr\left[J_U^{\ox 2} \bE \left(\phi_2 \ox \psi_2^T\right)^{\ox 2} \right] \notag \\
&\quad - \frac{2 d^2 (d+1)(d+s)(d+s+1)}{s^2} \tr\left[J_U\bE \phi_2 \ox \psi_2^T \right] + \frac{d^2(d+s+1)^2}{s^2}. 
\end{align}
With Lemma~1 in~\cite{li2025nearly},
\begin{align}
\tr\left[J_U\bE \phi_2 \ox \psi_2^T \right]
&= \frac{d+s+1}{(d+1)(d+s)} + \frac{s}{d(d+1)(d+s)} \tr\left[J_U (V\ox V^\dagger) \bF\right] \\
&= \frac{d+s+1}{(d+1)(d+s)} + \frac{s}{d(d+1)(d+s)} \tr^2\left[U^\dagger V\right]. 
\end{align}
Thus, we have 
\begin{align}
\bE \tr\left[X_1 Y\right] \tr\left[X_2 Y\right]
&= \frac{d^2(d+1)^2(d+s)^2}{s^2} \tr\left[J_U^{\ox 2} \bE \left(\phi_2 \ox \psi_2^T\right)^{\ox 2} \right] \notag \\
&\quad - \frac{2d(d+s+1)}{s} \tr^2\left[U^\dagger V\right] - \frac{d^2(d+s+1)^2}{s^2}. 
\end{align}
Now, we focus on the first term. 
With Lemma~\ref{lem:lemma d1 in li2025}, we have 
\begin{align}
\tr\left[J_U^{\ox 2} \bE \left(\phi_2 \ox \psi_2^T\right)^{\ox 2} \right]
&= \sum_{i,j,k,l=0}^{d-1} \tr\left[\left(U^{\ox 2} \ket{ik}\!\bra{jl} U^{\dagger \ox 2 } \ox \ket{jl}\!\bra{ik}\right) \bE \phi^{\ox 2}_2 \ox \psi^{\ox 2}_2\right] \\
&= \frac{2}{(d+s)(d+s+1)}\sum_{r=1}^4\sum_{i,j,k,l=0}^{d-1}\tr[(U^{\ox2}\ket{ik}\!\bra{jl}U^{\dagger\ox2}\ox\ket{jl}\!\bra{ik})\Delta_{V,r}] \notag \\
&\leq \cO\left(\frac{s^2}{(d+s)(d+s+1)}\right), 
\end{align}
with the following calculation, 
\begin{align}
\sum_{i,j,k,l} \tr\left[\left(U^{\ox 2} \ket{ik}\!\bra{jl} U^{\dagger \ox 2 } \ox \ket{jl}\!\bra{ik}\right) \Delta_{V, 1} \right] 
&= \frac{1}{\kappa_2} \sum_{i,j,k,l} \tr^2 \left[\Pi_{\rm sym}^{(d, 2)} \ket{jl}\!\bra{ik}\right] \\
&= \frac{sd(d+1)}{2\cdot\kappa_2} = s, \\
\sum_{i,j,k,l} \tr\left[\left(U^{\ox 2} \ket{ik}\!\bra{jl} U^{\dagger \ox 2 } \ox \ket{jl}\!\bra{ik}\right) \Delta_{V, 2} \right] 
&= \frac{2s\left(\tr^2[U^\dagger V] + d\right)(d+2)}{12\cdot \kappa_3} \\
&= \frac{s\tr^2[U^\dagger V]}{d(d+1)} + \frac{s}{d+1} \leq s, \\
\sum_{i,j,k,l} \tr\left[\left(U^{\ox 2} \ket{ik}\!\bra{jl} U^{\dagger \ox 2 } \ox \ket{jl}\!\bra{ik}\right) \Delta_{V, 3} \right] 
&= \frac{2s\left(\tr^2[U^\dagger V] + d\right)(d+2)}{12\cdot \kappa_3} \\
&= \frac{s\tr^2[U^\dagger V]}{d(d+1)} + \frac{s}{d+1} \leq s, \\
\sum_{i,j,k,l} \tr\left[\left(U^{\ox 2} \ket{ik}\!\bra{jl} U^{\dagger \ox 2 } \ox \ket{jl}\!\bra{ik}\right) \Delta_{V, 4} \right] 
&\leq \frac{s(s-1)}{2}.
\end{align}
Therefore, we have 
\begin{align}
\bE \tr\left[X_1 Y\right] \tr\left[X_2 Y\right]
&\leq \frac{d^2(d+1)^2(d+s)^2}{s^2}\cO\left(\frac{s^2}{(d+s)(d+s+1)}\right) 
= \cO\left(d^4\right).
\end{align}
The same bound holds for $\bE \tr[X Y_1] \tr[X Y_2]$ by symmetry. 
\end{proof}

\end{document}